\author{Chris Barrett}{Department of Computer Science, University of Bath, United Kingdom}{}{}{}
\author{Willem Heijltjes}{Department of Computer Science, University of Bath, United Kingdom}{}{}{EPSRC Grant EP/R029121/1 \emph{Typed lambda-calculi with sharing and unsharing}}
\author{Guy McCusker}{Department of Computer Science, University of Bath, United Kingdom}{}{}{}
\authorrunning{C. Barrett, W. Heijltjes, G. McCusker}
\keywords{lambda-calculus,
	computational effects,
	denotational semantics,
	strong normalization}
\newcommand\black{\color{black}}
\colorlet{0}{black}
\colorlet{1}{red!80!black}
\colorlet{2}{blue!80!black}
\colorlet{3}{orange!90!black}
\colorlet{4}{green!50!black}
\colorlet{5}{violet!80!black}
\colorlet{6}{teal!90!white}
\colorlet{7}{brown!90!black}
\colorlet{8}{pink!70!blue}
\colorlet{9}{lime!80!black}
\colorlet{term}{1}
\colorlet{type}{2}
\newcommand\colorterm{\color{term}}
\newcommand\colortype{\color{type}}
\newcommand\colored{%
  \let\color@term\colorterm%
  \let\color@type\colortype%
}
\newcommand\uncolored{%
  \let\color@term\relax%
  \let\color@type\relax%
  \black
}
\newcommand\vc[1]{\vcenter{\hbox{$#1$}}}
\newcommand\rvecup[1]{\accentset{\rightharpoonup}{#1}}
\newcommand\lvecup[1]{\accentset{\leftharpoonup}{#1}}
\newcommand\define[1]{\emph{#1}}
\newcommand\defeq{\stackrel{\scriptscriptstyle\Delta}=}
\newcommand\smallbin[1]{\mathchoice
      {\mathbin{\raise.2ex \hbox{$\scriptstyle      #1$}}}%
      {\mathbin{\raise.2ex \hbox{$\scriptstyle      #1$}}}%
      {\mathbin{\raise.12ex\hbox{$\scriptscriptstyle#1$}}}%
      {\mathbin{           \hbox{$\scriptscriptstyle#1$}}}}%
\newcommand\Con{\wedge}
\newcommand\Imp{\rightarrow}
\newcommand\Tim{\times}
\newcommand\Ten{\otimes}
\newcommand\con{\kern1pt{\smallbin\Con}\kern1pt}
\newcommand\imp{\kern1pt{\smallbin\Imp}\kern1pt}
\newcommand\tim{\kern1pt{\smallbin\Tim}\kern1pt}
\newcommand\ten{\kern1pt{\smallbin\Ten}\kern1pt}
\newcommand\arrimp{\kern1pt{\smallbin\rightsquigarrow}\kern1pt}
\newlength{\ilength}
\newcommand\+{\kern1pt{\smallbin+}\kern1pt}
\newcommand\fv[1]{\mathsf{fv}(#1)}
\newcommand\bv[1]{\mathsf{bv}(#1)}
\newcommand\uterm[1]{\underline{\term{#1}}}
\newcommand\seq@discard[1]{\seq@start}
\newcommand\seq@comma{\@ifnextchar,{{\black,\dots,}~\seq@discard}{{\black,}~\seq@start}}
\newcommand\vec@space{\kern0.4pt}
\newcommand\type[1]{{\colored\typ{#1}}}
\newcommand\typ[1]{%
  \vphantom)%
  \let\seq@start\type@start%
  \seq@start#1\@end%
}
\newcommand\type@start{\color@type\let\type@loop\type@next\type@loop}
\newcommand\type@next[1]{%
  \ifx#1\@end\let\type@loop\type@end\else%
  \ifx#1`\let\type@loop\type@escape\else%
  \ifx#1_\let\type@loop\type@sub\else%
  \ifx#1^\let\type@loop\type@sup\else%
  \ifx#1!\let\type@loop\type@rvec\else%
  \ifx#1?\let\type@loop\type@lvec\else%
  \ifx#1|\let\type@loop\type@bar\else%
  \ifx#1,\let\type@loop\seq@comma\else%
  \ifx#1>{\kern1pt\smallbin\Rightarrow\kern1pt}\else%
  \ifx#1.\kern1pt{\cdot}\kern1pt\else%
  \ifx#1A\!_A\else%
  \typevar{#1}%
  \fi\fi\fi\fi\fi\fi%
  \fi\fi\fi\fi\fi%
  \type@loop%
}
\newcommand\type@sub[1]{_{#1}\type@start}
\newcommand\type@sup[1]{^{{\type@sup@color #1}}\type@start}
\newcommand\type@rvec[1]{\vec@space\rvecup{\typevar{#1}\vec@space}\type@start}
\newcommand\type@lvec[1]{\vec@space\lvecup{\typevar{#1}\vec@space}\type@start}
\newcommand\type@bar{\@ifnextchar-{~{\black\vdash}~\seq@discard}{|\type@start}}
\newcommand\type@escape[1]{#1\type@start}
\newcommand\type@end{\uncolored}
\newcommand\typevar[1]{\typevar@start#1\relax}
\newcommand\typevar@start{\let\typevar@loop\typevar@next\typevar@loop}
\newcommand\typevar@next[1]{%
  \ifx#1\relax\let\typevar@loop\typevar@end\else%
  \ifx#1k\kappa\else%
  \ifx#1n\nu\else%
  \ifx#1p\pi\else%
  \ifx#1r\rho\else%
  \ifx#1s\sigma\else%
  \ifx#1t\tau\else%
  \ifx#1u\upsilon\else%
  \ifx#1w\omega\else%
  #1%
  \fi\fi\fi\fi\fi\fi\fi\fi\fi%
  \typevar@loop%
}
\newcommand\typevar@end{}
\newcounter{@parens}
\newcommand\term[1]{{\colored\trm{#1}}}
\newcommand\trm[1]{%
  \setcounter{@parens}{0}%
  \vphantom(%
  \let\seq@start\term@start%
  \seq@start#1\@end%
}
\newcommand\term@start{\color@term\let\term@loop=\term@next\term@loop}
\newcommand\term@next[1]{%
  \ifx#1\@end\let\term@loop\term@end\else%
  \ifx#1`\let\term@loop\term@escape\else%
  \ifx#1_\let\term@loop\term@sub\else%
  \ifx#1^\let\term@loop\term@sup\else%
  \ifx#1?\let\term@loop\term@lvec\else%
  \ifx#1!\let\term@loop\term@rvec\else%
  \ifx#1"\let\term@loop\term@color\else%
  \ifx#1:\let\term@loop\term@colon\else%
  \ifx#1|\let\term@loop\term@bar\else%
  \ifx#1((\stepcounter{@parens}\else%
  \ifx#1))\addtocounter{@parens}{-1}\else%
  \ifx#1\{\{\stepcounter{@parens}\else%
  \ifx#1\}\}\addtocounter{@parens}{-1}\else%
  \ifx#1,\ifnum\value{@parens}=0\let\term@loop\seq@comma\else,\fi\else%
  \ifx#1*\star\else%
  \ifx#1l\lambda\else%
  \ifx#1<\langle\else%
  \ifx#1>\rangle\else%
  \ifx#1..\,\else%
  \ifx#1;\,{;}\,\else%
  \ifx#1=\kern1pt{\smallbin=}\kern1pt\else%
  \ifx#1G{{\black\Gamma}}\else%
  \ifx#1D{{\black\Delta}}\else%
  \ifx#1\sim{\,\black\sim\,}\else%
  \ifx#1-\let\term@loop\term@dash\else%
  #1%
  \fi\fi\fi\fi%
  \fi\fi\fi\fi\fi%
  \fi\fi\fi\fi\fi%
  \fi\fi\fi\fi\fi%
  \fi\fi\fi\fi\fi%
  \fi%
  \term@loop%
}
\newcommand\term@end{\color{black}\uncolored}
\newcommand\term@discard[1]{\term@start}
\newcommand\term@escape[1]{#1\term@start}
\newcommand\term@color[1]{\color{#1}\term@start}
\newcommand\term@sub[1]{_{#1}\term@start}
\newcommand\term@sup[1]{^{#1}\term@start}
\newcommand\term@rvec[1]{\rvecup{#1}\term@start}
\newcommand\term@lvec[1]{\lvecup{#1}\term@start}
\newcommand\term@bar{\@ifnextchar-{~{\black\vdash}~\term@discard}{|\term@start}}
\newcommand\term@colon{\@ifnextchar={\term@assign}{\term@type}}
\newcommand\term@assign{\mathbin{{:}{=}}\term@discard}
\newcommand\term@type{\black\colon\type@start}
\newcommand\term@dash{\@ifnextchar>{\,{\black\rw}\,\term@discard}{-\term@start}}
\newcommand\SCatSim{\Lambda S/{\sim}}
\newcommand\SCatSimeqn{\Lambda S/{{\textsf{eqn}}}}
\newcommand\id{\mathsf{id}}
\newcommand\cat[1]{
  \cat@start#1\@end%
}
\newcommand\cat@start{\let\cat@loop\cat@next\cat@loop}
\newcommand\cat@next[1]{%
  \ifx#1\@end\let\cat@loop\cat@end\else%
  \ifx#1_\let\cat@loop\cat@sub\else%
  \ifx#1^\let\cat@loop\cat@sup\else%
  \ifx#1!\let\cat@loop\cat@rvec\else%
  \ifx#1?\let\cat@loop\cat@lvec\else%
  \ifx#1-\let\cat@loop\cat@arrows\else%
  \ifx#1*\tim\else%
  \ifx#1A\!_A\else%
  \typevar{#1}%
  \fi\fi\fi\fi\fi\fi%
  \fi\fi%
  \cat@loop%
}
\newcommand\cat@sub[1]{_{#1}\cat@start}
\newcommand\cat@sup[1]{^{#1}\cat@start}
\newcommand\cat@rvec[1]{\rvecup{\typevar{#1}}\cat@start}
\newcommand\cat@lvec[1]{\lvecup{\typevar{#1}}\cat@start}
\newcommand\cat@arrows{%
  \@ifnextchar-{\longrightarrow\cat@discard@two}{%
  \@ifnextchar>{\imp\cat@discard}{-\cat@start}}}
\newcommand\cat@end{}
\newcommand\cat@discard[1]{\cat@start}
\newcommand\cat@discard@two[2]{\cat@start}
\newcommand\rr[1]{%
  \ifx#1x\mathsf{var}\else%
  \ifx#1l\mathsf{abs}\else%
  \ifx#1a\mathsf{arg}\else%
  \ifx#1f\mathsf{fun}\else%
  \fi\fi\fi\fi%
}
\newcommand\get{\mathsf{get}}
\newcommand\set{\mathsf{set}}
\newcommand\print{\mathsf{write}}
\newcommand\rand{\mathsf{rand}}
\newcommand\rd{\mathsf{read}}
\newcommand\wrt{\mathsf{write}}
\newcommand\ifthen{\mathsf{if}}
\newcommand\Z{\mathbb Z}
\newlength\diamondsize
\newcommand\@diamond[2]{\tikz\draw[#2](-#1,0)--(0,#1)--(#1,0pt)--(0,-#1)--cycle;}
\newcommand\@ldiamond[2]{\tikz\draw[#2](-#1,0)--(0,#1)--(0,-#1)--cycle;}
\newcommand\@rdiamond[2]{\tikz\draw[#2](0,#1)--(#1,0pt)--(0,-#1)--cycle;}
\newcommand\filldiamond{\mathchoice%
  {\@diamond{2.0pt}{fill}}%
  {\@diamond{2.0pt}{fill}}%
  {\@diamond{1.5pt}{fill}}%
  {\@diamond{1.2pt}{fill}}}
\newcommand\leftfilltriangle{\mathchoice%
  {\@ldiamond{2.0pt}{fill}}%
  {\@ldiamond{2.0pt}{fill}}%
  {\@ldiamond{1.5pt}{fill}}%
  {\@ldiamond{1.2pt}{fill}}}
\newcommand\rightfilltriangle{\mathchoice%
  {\@rdiamond{2.0pt}{fill}}%
  {\@rdiamond{2.0pt}{fill}}%
  {\@rdiamond{1.5pt}{fill}}%
  {\@rdiamond{1.2pt}{fill}}}
\newcommand\opendiamond{\mathchoice%
  {\@diamond{2.0pt}{}}%
  {\@diamond{2.0pt}{}}%
  {\@diamond{1.5pt}{}}%
  {\@diamond{1.2pt}{}}}
\newcommand\leftopentriangle{\mathchoice%
  {\@ldiamond{2.0pt}{}}%
  {\@ldiamond{2.0pt}{}}%
  {\@ldiamond{1.5pt}{}}%
  {\@ldiamond{1.2pt}{}}}
\newcommand\rightopentriangle{\mathchoice%
  {\@rdiamond{2.0pt}{}}%
  {\@rdiamond{2.0pt}{}}%
  {\@rdiamond{1.5pt}{}}%
  {\@rdiamond{1.2pt}{}}}
\newcommand\llb\llbracket
\newcommand\rrb\rrbracket
\newcommand\llp{\llparenthesis\kern1pt}
\newcommand\rrp{\kern1pt\rrparenthesis}
\newcommand\floor[1]{\lfloor#1\rfloor}
\newcommand\e{\varepsilon}
\newcommand\loc[1]{\mathsf{loc}(\term{#1})}
\newcommand\rnd{\mathsf{rnd}}
\newcommand\nd {\mathsf{nd}}
\newcommand\inp{\mathsf{in}}
\newcommand\out{\mathsf{out}}
\newcommand\FMC{\textnormal{FMC}}
\tikzstyle{rwhead}=[>/.tip={Triangle[open,length=2.5pt,width=4.5pt]},|/.tip={Rectangle[length=.5pt,width=4.5pt]}]
\tikzstyle{rw} =[line width=.5pt,rwhead,->]
\tikzstyle{rws}=[line width=.5pt,rwhead,->.>]
\tikzstyle{rwn}=[line width=.5pt,rwhead,->.>|]
\tikzstyle{rwp}=[line width=.5pt,rwhead,->,double]
\tikzstyle{rwps}=[line width=.5pt,rwhead,->.>,double]
\newcommand\rw  {\mathrel{\tikz\draw[rw]  (0,0)--(10pt,0pt);}}
\newcommand\val{{v}}
\newcommand\ac{\kern1pt{\smallbin\ggg}\kern1pt}
\newcommand{\cppo}{\mathbf{Cppo}}
\newcommand{\sem}[1]{\llbracket #1 \rrbracket}
\newcommand{\loglt}{\mathbin{\lhd}}
\newcommand\sntype[1]{\sn{\type{#1}}}
\newcommand\snterm[1]{\sn{\term{#1}}}
\newcommand\N{\mathbb N}
\newcommand\TR[1]{\!\scriptstyle{\tr{#1}\vphantom{pb}}}%
\newcommand\tr[1]{%
  \ifx#1*\star\else%
  \ifx#1x\mathsf{var}\else%
  \ifx#1l\mathsf{abs}\else%
  \ifx#1a\mathsf{app}\else%
  \ifx#1c\mathsf{cut}\else%
  \ifx#1f\mathsf{base}\else%
  \ifx#1<\mathsf{lcut}\else%
  \ifx#1>\mathsf{rcut}\else%
  \fi\fi\fi\fi\fi\fi\fi\fi%
}
\newcommand\qr[1]{%
  \ifx#1*\star\else%
  \ifx#1x\mathsf{AX}\else%
  \ifx#1l\mathsf{ABS}\else%
  \ifx#1a\mathsf{APP}\else%
  \ifx#1c\mathsf{CUT}\else%
  \ifx#1f\mathsf{con}\else%
  \ifx#1<\mathsf{lcut}\else%
  \ifx#1>\mathsf{rcut}\else%
  \fi\fi\fi\fi\fi\fi\fi\fi%
}
\newcommand\sr[1]{\vphantom(%
  \ifx#1l{\imp}\!\mathsf R\else%
  \ifx#1a{\imp}\!\mathsf L\else%
  \ifx#1x\mathsf{ax}\else%
  \ifx#1c\mathsf{cut}\else%
  \ifx#1V{\forall}\mathsf R\else%
  \ifx#1A{\forall}\mathsf L\else%
  \ifx#1e\textsf{eta}\else%
  \ifx#1z\textsf{cls}%
  \fi\fi\fi\fi\fi\fi\fi\fi%
}
\newcommand\eval{\kern1pt{\Downarrow}\kern1pt}
\newcommand\result[2]{(#1,\term{#2}){\Downarrow}}
\newcommand\machine{\@ifstar\@textmachine\@mathmachine}
\newcommand\@textmachine[4]{(#1,\term{#2})\eval(#3,\term{#4})}
\newcommand\@mathmachine[4]{%
\begin{array}{@{(~}l@{~,~}r@{~)}}%
#1 & \term {#2}%
\\\hline\hline\rule[-5pt]{0pt}{15pt}%
#3 & \if#4*\term*\;\else\term{#4}\fi%
\end{array}%
}
\newcommand\step[4]{%
\begin{array}{@{(~}l@{~,~}r@{~)}}%
#1 & \term {#2}%
\\\hline%
#3 & \if#4*\term*\;\else\term{#4}\fi%
\end{array}%
}
\newcommand\diagdots[2][1.5,3]{
  \node[fill=black,circle,inner sep=0pt,minimum size=1.5pt] at ($(#2) - (#1)$) {};
  \node[fill=black,circle,inner sep=0pt,minimum size=1.5pt] at (#2) {};
  \node[fill=black,circle,inner sep=0pt,minimum size=1.5pt] at ($(#2) + (#1)$) {};
}
\tikzstyle{termbox}=[draw=term,fill=term!10,rounded corners,minimum size=20pt]
\tikzstyle{tallbox}=[draw=term,fill=term!10,rounded corners,minimum width=20pt,minimum height=40pt]
\tikzstyle{termpic}=[x=1pt,y=1pt,inner sep=0pt,outer sep=0pt,thick]
\newcommand\wires[5]{
  \node[anchor=east] at (#2,#3) {$\type{#1}$};
  \node[anchor=west] at (#4,#3) {$\type{#5}$};
  \draw ($(#2,#3) + ( 2, 6)$) -- ($(#4,#3) + (-2, 6)$);
  \draw ($(#2,#3) + ( 8,-6)$) -- ($(#4,#3) + (-8,-6)$);
  \diagdots[-1.5,3]{$(#2,#3) + (10,0)$}
  \diagdots[ 1.5,3]{$(#4,#3) - (10,0)$}
}
\newcommand\sn[1]{\llbracket #1\rrbracket}
\newcommand\collapse[1]{\lfloor #1\rfloor}
\newcommand\termint[1]{\{#1\}} 
\newcommand\concat[2]{#1 , #2}
\newcommand\verythinspace{\kern1pt}
\newcommand\from{\leftarrow}
\begin{document}

\title{The Functional Machine Calculus II: Semantics}

\maketitle

\begin{abstract}
The Functional Machine Calculus (FMC), recently introduced by the second author, is a generalization of the lambda-calculus which may faithfully encode the effects of higher-order mutable store, I/O and probabilistic/non-deterministic input. Significantly, it remains confluent and can be simply typed in the presence of these effects.

In this paper, we explore the denotational semantics of the FMC. We have three main contributions: first, we argue that its syntax -- in which both effects and lambda-calculus are realised using the same syntactic constructs -- is semantically natural, corresponding closely to the structure of a Scott-style domain theoretic semantics.
Second, we show that simple types confer strong normalization by extending Gandy's proof for the lambda-calculus, including a small simplification of the technique. Finally, we show that the typed FMC (without considering the specifics of encoded effects), modulo an appropriate equational theory, is a complete language for Cartesian closed categories.
\end{abstract}


\section{Introduction}

Almost without exception, modern programming languages support a combination of computational effects and higher-order mechanisms. Programmers and programming language theorists recognise that the effects and higher-order mechanisms are fundamentally different constructs, with radically different syntax and semantics: compare assignment and dereferencing operations with function definition and invocation, for example. In both operational and denotational accounts, the higher-order mechanism --- typically expressed in some variant of $\lambda$-calculus --- and the effects mechanisms are treated using distinct approaches, and indeed the combination of the two, not to mention the combination of multiple kinds of effects, requires careful handling.

In a previous paper~\cite{Heijltjes-2022} the second author introduced the Functional Machine Calculus (FMC), a compact programming language which eliminates these distinctions and supports higher-order effectful programming with a streamlined yet natural syntax and operational semantics. In this paper, we reprise the definition of the FMC and attempt to explain and explore its construction and behaviour from a denotational perspective. Beginning with a domain-theoretic analysis of computation with stacks, we discover that the $\lambda$-calculus and effectful aspects of programming can be viewed as being of exactly the same kind, in fact entirely interchangeable. The syntax and operational semantics of the FMC embodies the operations naturally supported by the denotational semantics, while remaining programmer-friendly: the prevous paper~\cite{Heijltjes-2022} demonstrates by example the wide range of effectful programs that can straightforwardly be expressed in this syntax.

The connection we exploit between domain-theoretic semantics of $\lambda$-calculus and an operational semantics based on stacks is not new. In~\cite{Streicher-Reus-1998}, Streicher and Reus show that Scott's well-known $D_\infty$ models motivate and explain the definition of Krivine's abstract machine for evaluating $\lambda$-terms. Scott's construction takes a domain $D$ and builds a model of the $\lambda$-calculus as an appropriate limit of a sequence of domains, shown on the left below. Streicher and Reus observed that this construction may alternatively be regarded as taking the limit of the sequence on the right, where each $D_n$ is recovered as $C_n \rightarrow D$, and $D_\infty$ is $C_\infty \rightarrow D$.
\[
\begin{array}[t]{rcl}
  D_0 &= & D \\
  D_{n+1} & = & D_n \rightarrow D_n
\end{array}\quad\quad\quad\quad
\begin{array}[t]{rcl}
  C_0 &= & 1 \\
  C_{n+1} & = & C_n \times (C_n \rightarrow D)
\end{array}
\]
With this view, the limit $C_\infty$ is a stream or (potentially infinite) stack of elements of $D_\infty$, i.e.\ denotations of $\lambda$-terms; and such terms consume a stack. The equations defining the interpretations of terms in this model show that the operations of application and abstraction correspond directly to pushing and popping from the stack. The return domain $D$ plays a very minor role in the semantics of $\lambda$-terms: in Streicher and Reus's view, it is the result type of continuations, and ordinary $\lambda$-terms never return. Our development in this paper adapts this in two ways. First, we choose the return domain $D$ to be the domain of stacks. Thus a term can be regarded as a \emph{stack transformer}, which immediately supports an operation of sequencing between terms, leading to a sublanguage of the FMC called the \emph{sequential $\lambda$-calculus}. Second, we enrich the domain equation with the familiar monad for global state, and observe that --- if states are also stacks --- the resulting domain equation is one of multiple-stack transformers, with \emph{no special status} afforded to the stack that implements $\lambda$-abstraction and application. Thus we arrive at the promised semantic explanation of the FMC, treating reader--writer style effects and higher-order mechanisms identically, and giving a denotational motivation for the stack-machine semantics.

We go on to study this calculus from a type-theoretic and category-theoretic perspective. The domain-theoretic model is untyped but, generalising the usual development for  $\lambda$-calculus, a simple type system can be imposed which describes the shapes of the stacks being operated upon. Adapting Gandy's proof technique for $\lambda$-calculus, we show that the well-typed terms of the FMC are strongly normalizing. Further, we study the categorical properties of the calculus, viewing terms as morphisms between types, and discover that, up to a notion of  contextual equivalence, well-typed programs form a Cartesian closed category. An equational theory is presented which refines this equivalence, and well-typed terms modulo this theory are shown to be a sound and complete language for CCCs. 

The properties identified above --- strong normalization, Cartesian closure, and the operational property of confluence established in~\cite{Heijltjes-2022} ---
are entirely expected of languages which are variants of typed $\lambda$-calculus, but perhaps surprising, or even shocking, in the setting of the effectful FMC. How can the combination of higher-order programming and effects, including state, input/output, nondeterminism and probability, retain properties of confluence and referential transparency?

We offer the following explanation. These properties are of the FMC as a general calculus, which remains close to the $\lambda$-calculus, independent of the encoding of effects. As explored in~\cite{Heijltjes-2022}, the FMC is confluent, and for encoded reader/writer effects this manifests as reduction following the algebraic laws of Plotkin and Power~\cite{Plotkin-Power-2002-FOSSACS}. The CCC semantics presented in this paper pertains to the FMC itself, and, we emphasize, is \emph{not} a semantics of effects: when particular properties of the encoded effects are taken into account, the semantics will no longer be a CCC. We explore this in more detail in Section~\ref{sec:categories}.


\section{The Functional Machine Calculus}

The Functional Machine Calculus (FMC) arises from an operational perspective on the $\lambda$-calculus, taking a simple call--by--name stack machine in the style of Krivine~\cite{Krivine-2007} as primary. The machine evaluates a term in the context of a stack, where application is a \emph{push} (of the argument) and abstraction is a \emph{pop}, binding the popped value to the abstracted variable. Since the stack machine is intended as an operational semantics, and not for implementation, for simplicity we use substitution rather than an environment. The FMC then introduces two natural generalizations.
\begin{description}

	\item[Locations]
The machine is generalized from one to multiple stacks or streams, indexed by a global set of \emph{locations} $A$. In the calculus, abstraction and application are parameterized in $A$ as pop- and push-actions on the associated stack. Stacks are homogeneous, but may be used to encode different reader/writer effects: an input stream (which may be non-deterministic or probabilistic), an output stream, or a global mutable variable.

	\item[Sequencing]
The calculus is extended with sequential composition of terms, which gives their consecutive execution on the machine, and an identity term as the unit to composition, which is the empty instruction sequence on the machine, analogous to imperative \emph{skip}. This generalizes the calculus from one of stack \emph{consumers} to stack \emph{transformers}, where a term may return multiple outputs to the stack, and gives control over execution, as demonstrated by the encoding of various reduction strategies including call--by--value and call--by--push--value (see~\cite{Heijltjes-2022}).

\end{description}
Locations are an innovation of the FMC~\cite{Heijltjes-2022}, while sequencing is familiar from Hasegawa's $\kappa$-calculus~\cite{Hasegawa-1995, Power-Thielecke-1999} and higher-order stack programming~(see e.g.\ \cite{Pestov-Ehrenberg-Groff-2010}). In the latter case, there are also certain similarities with Milner's action calculi \cite{DBLP:conf/mfcs/Milner93}. These two innovations are implemented technically as follows. To emphasize the operational intuition as \emph{push} and \emph{pop}, application $\term{M\,N}$ will be written as $\term{[N].M}$, and abstraction $\term{lx.M}$ as $\term{<x>.M}$. These are subsequently parameterized in \emph{locations} $\term a,\term b,\term c\in A$. \emph{Sequencing} introduces a \emph{nil} or \emph{skip} term $\term*$ and makes the variable construct a prefix $\term{x.M}$; sequential composition $\term{M;N}$ will be a defined operation.

\begin{definition}
\label{def:FMC}
The \define{Functional Machine Calculus} (\FMC) is given by the grammar
\begin{align*}
\term{M,N,P}
  \quad\Coloneqq\quad \term *
       ~\mid~ \term{x.M}
       ~\mid~ \term{[N]a.M}
       ~\mid~ \term{a<x>.M}
\end{align*}
where from left to right the \emph{term} constructors are \emph{nil}, a \emph{(sequential) variable}, an \emph{application} or \emph{push action} on the location $\term a$, and an \emph{abstraction} or \emph{pop action} on the location $\term a$ which binds $\term x$ in $\term M$. Terms are considered modulo $\alpha$-equivalence.
\end{definition}

\noindent
We may omit the trailing $\term{.*}$ from terms for readability. We will use a \emph{main} location $\term l$, omitted from the term notation, as the computation stack (as opposed to the stacks to interpret effects), on which (call--by--name) $\lambda$-terms embed. We will use constants as free variables; constant operators such as addition $\term +$ and conditionals $\term\ifthen$ will operate on $\term l$ as well.

\begin{example}
\label{ex:increment-countup}
Consider the following example terms.
\[
	\term{\rnd<x>.[x].c<y>.[y].+.<z>.[z]c} \qquad\qquad \term{[<x>.[x]\out.[x].[1].+].<f>.[0].f.f.f}
\]
The first term increments a memory cell $\term c$ with a random number. It pops $\term x$ from the random generator stream $\term\rnd$ and pushes it to the main stack; pops $\term y$ from the cell $\term c$ and pushes that to the main stack as well; then $\term +$ adds the top two elements of the main stack $x$ and $y$ pushing back the result $x+y$; and this is popped as $\term z$ and pushed back onto the cell $\term c$.

The second term counts up from zero to three, outputting $0,1,2$ and leaving $3$ on the main stack. The subterm $\term{<x>.[x]\out.[x].[1].+}$ pops $\term x$ from the main stack and sends it to the output location $\term\out$, and then $\term{[x].[1].+}$ leaves $x+1$ on the main stack. In the example, this term is pushed, popped as $\term f$, and called three times on zero.
\end{example}

\begin{definition}
\define{Composition} $\term{N;M}$ and \define{substitution} $\term{\{N/x\}M}$ are given by
\[
\begin{array}{rcll@{\qquad}rcll}
		 \term{*;M} &=&         \term M   &
&	\term{[P]a.N;M} &=& \term{[P]a.(N;M)} 
\\     \term{x.N;M} &=&    \term{x.(N;M)} &
&	\term{a<x>.N;M} &=& \term{a<x>.(N;M)} \qquad~ (\term x\notin\fv{\term M}) 
\\ \\
	\term{\{P/x\}*}      &=& \term *
&&	\term{\{P/x\}[N]a.M} &=& \term{[\{P/x\}N]a.\{P/x\}M}
\\	\term{\{P/x\}x.M}    &=& \term{P;\{P/x\}M}
&&	\term{\{P/x\}a<x>.M} &=& \term{a<x>.M}
\\	\term{\{P/x\}y.M}    &=& \term{y.\{P/x\}M} &
&	\term{\{P/x\}a<y>.M} &=& \term{a<y>.\{P/x\}M} \qquad (\term y\notin\fv{\term P})
\end{array}
\]
where, in the last two cases, $\term x\neq\term y$; both are capture-avoiding by the conditions $\term x\notin\fv{\term M}$ and $\term y\notin\fv{\term P}$, which can be satisfied by $\alpha$-conversion.
\end{definition}
\begin{lemma}\label{lem:assoc}
Composition is associative and has unit $\term{*}$. 
\end{lemma}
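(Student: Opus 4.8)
The plan is to prove both claims --- associativity of composition and that $\term*$ is a two-sided unit --- by structural induction on the left-hand term of the composition, reading off directly from the defining equations of $\term{N;M}$. The definition is given by recursion on the first argument, so that argument is the natural induction parameter.

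First I would establish the unit laws. The equation $\term{*;M} = \term M$ is immediate by definition, giving the left unit law with no induction. For the right unit law $\term{N;*} = \term N$, I would induct on $\term N$: the base case $\term N = \term*$ is the left unit law instance $\term{*;*}=\term*$; the cases $\term N = \term{x.N'}$, $\term N = \term{[P]a.N'}$, and $\term N = \term{a<x>.N'}$ each push the composition one constructor inward by definition (e.g.\ $\term{x.N';*} = \term{x.(N';*)}$), and then the induction hypothesis gives $\term{N';*} = \term{N'}$, closing the case. In the abstraction case one should note the side condition $\term x\notin\fv{\term*}$ is vacuously satisfied, and similarly throughout.

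Next, associativity: $\term{(P;N);M} = \term{P;(N;M)}$ for all $\term N,\term M$, by induction on $\term P$. If $\term P = \term*$ both sides reduce to $\term{N;M}$ by the left unit equation. If $\term P = \term{x.P'}$, then the left side is $\term{(x.P';N);M} = \term{x.(P';N);M} = \term{x.((P';N);M)}$ and the right side is $\term{x.P';(N;M)} = \term{x.(P';(N;M))}$, and these agree by the induction hypothesis applied to $\term{P'}$. The push case $\term P = \term{[Q]a.P'}$ is identical in shape. The abstraction case $\term P = \term{a<x>.P'}$ is the one to watch: unfolding requires $\term x \notin \fv{\term N}$ on the first step and $\term x \notin \fv{\term{N;M}}$, equivalently $\term x\notin\fv{\term N}\cup\fv{\term M}$, on the nested step; since $\fv{\term{N;M}} \subseteq \fv{\term N}\cup\fv{\term M}$, the needed freshness can always be arranged by $\alpha$-conversion of $\term P$, so the side conditions are compatible and the induction hypothesis again closes the case.

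The only genuine subtlety --- and the step I would flag as the main obstacle --- is this bookkeeping of the variable-capture side conditions in the abstraction case: one must check that renaming the bound variable $\term x$ of $\term P$ to avoid $\fv{\term N}\cup\fv{\term M}$ does not disturb the equations, which ultimately rests on the fact that composition commutes with $\alpha$-renaming and on the inclusion $\fv{\term{N;M}}\subseteq\fv{\term N}\cup\fv{\term M}$ (itself a trivial induction on the first argument). Everything else is a routine unfolding of definitions. A cleaner packaging is to observe that composition is exactly "grafting $\term M$ in place of the trailing $\term*$ of $\term N$", from which both laws are manifest, but I expect the paper to prefer the explicit inductive argument.
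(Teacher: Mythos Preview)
The paper states this lemma without proof; it is treated as a routine fact following immediately from the recursive definition of composition. Your proposal is correct and is exactly the standard argument one would supply: structural induction on the leftmost term, with the only care needed being the $\alpha$-renaming bookkeeping in the abstraction case, which you handle properly via the inclusion $\fv{\term{N;M}}\subseteq\fv{\term N}\cup\fv{\term M}$.
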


\begin{definition}
The \emph{functional abstract machine} is given by the following data. A \emph{stack} of terms $S$ is defined below left, and written with the top element to the right. A \emph{memory} $S_A$ is a family of stacks or streams in $A$, defined below right.
\[
	S ~\Coloneqq~\e~\mid~S{\cdot}\term M
\qquad
	S_A~\Coloneqq~\{\,S_a\,\mid\,a\in A\}
\]
The notation $S_A;S_a$ identifies the stack $S_a$ within $S_A$. A \emph{state} is a pair $(S_A,\term M)$ of a memory and a term. The \emph{transitions} or \emph{steps} of the machine are given below left as top--to--bottom rules. A \emph{run} of the machine is a sequence of steps, written as $(S_A,\term M)\eval(T_A,\term N)$ or with a double line as below right.
\[
\begin{array}{@{(~}l@{~,~}r@{~)}}
	S_A~;~S_a 	         & \term{[N]a.M}
\\\hline
	S_A~;~S_a{\cdot}\term N &      \term M
\end{array}
\qquad
\begin{array}{@{(~}l@{~,~}r@{~)}}
	S_A~;~S_a{\cdot}\term N &   \term{a<x>.M}
\\\hline
	S_A~;~S_a               & \term{\{N/x\}M}
\end{array}
\qquad\qquad\qquad
\machine{S_A}M{T_A}N
\]
\end{definition}

Constant operations such as addition $\term+$ and conditional $\term\ifthen$ pop the required number of items from the main stack and reinstate their result, as per the rule given below left.
The FMC then operates as a standard stack calculus: e.g.\ an arithmetic expression $1 + ((2 + 3) \times 4)$ is given as a term $\term{[4].[3].[2].+.\times.[1].+}$ which indeed returns $21$. Formally, a constant operator $\term{c^{n,m}}$ of arity $n,m$ is defined by a (partial) function $c^{n,m}$ from $n$ input terms to $m$ output terms, which generates the machine rule schema below right, where the output terms are put on the stack.
\[
	\step{S_A;S_\lambda\cdot\term 2\cdot\term 3}{+.M}{S_A;S_\lambda\cdot\term 5}M
\qquad
	\step{S_A;S_\lambda\cdot\term{N_n}\cdots\term{N_1}}{c^{n,m}.M}{S_A;S_\lambda\cdot\term{c^{n,m}(N_1,\dots,N_n)}}M \qquad \qquad
\]

\begin{example}
The first term of Example~\ref{ex:increment-countup} has the following run of the machine, where the $\term\rnd$ location is initialized with a stream with $\term{3}$ at the head, and $\term c$ with the value $\term{5}$.
\[
\scalebox{1}{$
\begin{array}{@{(~}l@{~;~}l@{~;~}l@{~~,~~}r@{~)}}
	     S_\rnd\cdot\term 3 & \e_c\cdot\term 5 & \e_\lambda                         & \term{\rnd<x>.[x].c<y>.[y].+.<z>.[z]c}
\\\hline S_\rnd             & \e_c\cdot\term 5 & \e_\lambda                         &         \term{[3].c<y>.[y].+.<z>.[z]c}
\\\hline S_\rnd             & \e_c\cdot\term 5 & \e_\lambda\cdot\term 3             &             \term{c<y>.[y].+.<z>.[z]c}
\\\hline S_\rnd             & \e_c             & \e_\lambda\cdot\term 3             &                  \term{[5].+.<z>.[z]c}
\\\hline S_\rnd             & \e_c             & \e_\lambda\cdot\term 3\cdot\term 5 &                      \term{+.<z>.[z]c}
\\\hline S_\rnd             & \e_c             & \e_\lambda\cdot\term 8             &                        \term{<z>.[z]c}
\\\hline S_\rnd             & \e_c             & \e_\lambda                         &                            \term{[8]c}
\\\hline S_\rnd             & \e_c\cdot\term 8 & \e_\lambda                         &                               \term{*}
\end{array}$}
\]
\end{example}

Beta-reduction in the $\lambda$-calculus, from the perspective of the machine, lets consecutive push and pop actions interact directly. With multiple stacks, these must be actions on the same location, while other locations may be accessed in-between. Informally, the reduction step will then be as follows, where the argument $\term{[N]a}$ and abstraction $\term{a<x>}$ may be separated by actions $\term{[P]b}$ and $\term{b<y>}$ with $\term a\neq\term b$: $\term{[N]a\dots a<x>.M}~\rw_\beta~\term{\dots\{N/x\}M}$. In addition, it must be avoided that any intervening $\term{b<y>}$ captures $\term y$ in $\term N$.

\begin{definition}\label{defn:laws}
A \define{head context} $\term H$ is a sequence of applications and abstractions terminating in a hole:
\[
	\term{H} ~\Coloneqq~ \term{\{\}}~\mid~\term{[M]a.H}~\mid~\term{a<x>.H}
\]
The term denoted $\term{H.M}$ is given by replacing the hole $\term{\{\}}$ in $\term{H}$ with $\term M$, where a binder $\term{a<x>}$ in $\term{H}$ captures in $\term M$. The \emph{binding variables} $\bv{\term H}$ of $\term H$ are those variables $\term x$ where $\term H$ is constructed over $\term{a<x>}$. The set of \define{locations} used in a term or context is denoted $\loc M$ respectively $\loc H$. Then \define{beta-reduction} and \define{eta-reduction} are defined respectively by the rewrite rule schemas below, where $\term a\notin\loc H$ for both reduction rules, $\bv{\term H}\cap\fv{\term N}=\varnothing$ for the $\beta$-rule, and $\term x\notin\bv{\term H}$ for the $\eta$-rule. Both reductions are closed under all contexts.
\[
	 \term{[N]a.H.a<x>.M}~\rw_\beta~\term{H.\{N/x\}M}  \qquad\qquad \term{a<x>.H.[x]a.M}~\rw_\eta~\term{H.M}\quad (x\notin\fv{\term M})
\]
\end{definition}
We now clarify the relationship between beta reduction and the machine. 
Evaluation of a term $\term{M}$ on the machine, given sufficient inputs in the form of a stack of terms $\term{N_1} \cdots \term{N_n}$, begins in the state $(\term{N_1} \cdots \term{N_n}, \term{M})$. The machine then implements a particular (weak) reduction strategy, with each \textit{pop} transition corresponding to a beta-reduction of the term $\term{[N_1]\ldots [N_n].M}$ corresponding to the machine state under evaluation. 

In Section \ref{sec:machine-equiv}, we further introduce a notion of observational equivalence based on the machine, dubbed \textit{machine equivalence}, where terms are considered equivalent if they send equivalent inputs to equivalent outputs. This is shown to validate the beta and eta equations in general. 

The two generalizations \emph{locations} and \emph{sequencing} are independent, and the two calculi that have one but not the other are of independent interest.
\begin{itemize}

	\item
The \emph{poly $\lambda$-calculus} has only \emph{locations}, and is given by the fragment below.
\[
	\term{M,N} ~\Coloneqq~ \term{x.*} ~\mid~ \term{[N]a.M} ~\mid~ \term{a<x>.M}
\]

	\item
The \emph{sequential $\lambda$-calculus} has only \emph{sequencing}, and is given by the fragment below. 
\[
	\term{M,N} ~\Coloneqq~ \term{*} ~\mid~ \term{x.M} ~\mid~ \term{[N].M} ~\mid~ \term{<x>.M}
\]
\end{itemize}


\begin{example}
To demonstrate how the FMC encodes both effects and evaluation strategies, we consider the following (standard) call--by--value $\lambda$-calculus with reader/writer effects. (We assume familiarity with the operational semantics of effects and call--by--value $\lambda$-calculus; for an introduction see Winskel~\cite{Winskel-1993}.)
\[
\begin{array}{r@{~}c@{~}l@{\qquad}l}
	M,N,P & \Coloneqq & x~\mid~M\,N~\mid~\lambda x.M 	& \emph{$\lambda$-calculus}
\\[0pt]		  & \mid & \rd~\mid~\wrt~ N;M 					& \emph{input/output}
\\[0pt]		  & \mid & c := N;M~\mid~!c  					& \emph{state update and lookup}
\\[0pt]		  & \mid & N\oplus M ~\mid~ N+M      			& \emph{probabilistic and non-deterministic sum}
\end{array}
\]
The full calculus is encoded into the FMC as follows. We let $A$ comprise the main location~$\term l$,  a location $\term\inp$ for input, $\term\out$ for output, $\term\rnd$ and $\term\nd$ for probabilistically and non-deterministically generated streams of boolean values $(\top,\bot)$, and one location for each global memory cell. A term $M$ encodes as $\term{M_\val}$ below, where $N+M$ encodes like $N\oplus M$ but with the stream $\term\nd$.
\[
\begin{aligned}
           \trm{x}_\val &= \term{[x]}
\\      \trm{(lx.M)}_\val &= \term{[<x>.M_\val]}
\\      \trm{(M\,N)}_\val &= \term{N_\val;M_\val;<x>.x}
\end{aligned}
\quad\!
\begin{aligned}
         \trm{\rd}_\val &= \term{\inp<x>.[x]}
\\       \trm{`!c}_\val &= \term{c<x>.[x]c.[c]}
\end{aligned}
\quad
\begin{aligned}
 	  \trm{(\wrt ~N;M)}_\val &= \term{N_\val.<x>.[x]\out.M_\val}
\\     \trm{(c:= N;M)}_\val &= \term{N_\val.<x>.c<\_>.[x]c.M_\val}
\\ \trm{(N\oplus M)}_\val &= \term{\rnd<x>.[N].[M].[x].\ifthen}
\end{aligned}
\]
We leave it to the reader to confirm that the interpretation generates the correct evaluation behaviour, and conclude the example with the encoding and reduction of the following term.
\[
\scalebox{0.9}{$
\begin{aligned}
	({(\lambda f. f(f\,0))\,(\lambda x.\wrt~x; !c)})_\val
  &~=~\term{[ <x>.}\uterm{[x].<v>}\term{.[v]\out.c<y>.[y]c.[y]].<f>.[0].[f].<z>.z.[f].<w>.w}
\\&~\rw~\term{[ <x>.[x]\out.c<y>.[y]c.[y]].<f>.[0].[f].<z>.z.}\uterm{[f].<w>}\term{.w}
\\&~\rw~\term{[ <x>.[x]\out.c<y>.[y]c.[y]].<f>.[0].}\uterm{[f].<z>}\term{.z.f}
\\&~\rw~\uterm{[ <x>.[x]\out.c<y>.[y]c.[y]].<f>}\term{.[0].f.f}
\\&~\rw~\term{[0]. <x>.[x]\out.c<y>.}\uterm{[y]c}\term{.[y]. <z>.[z]\out.}\uterm{c<w>}\term{.[w]c.[w]}
\\&~\rw~\term{[0]. <x>.[x]\out.c<y>.}\uterm{[y]. <z>}\term{.[z]\out.[y]c.[y]}
\\&~\rw~\uterm{[0]. <x>}\term{.[x]\out.c<y>.[y]\out.[y]c.[y]}
\\&~\rw~\term{[0]\out.c<y>.[y]\out.[y]c.[y]}
\end{aligned}$}
\]
\end{example}


\section{Domain-theoretic semantics}
\label{sec:domains}
Our aim in this section is to show that the syntax and stack-machine semantics of the FMC may be further justified by consideration of a simple domain-theoretic semantics.
We work in the category $\cppo$ of complete partial orders (with least-element) and continuous functions. We show that a domain equation for stack-transformers directly supports the operations of the sequential $\lambda$-calculus, directly extending a Scott-style semantics of $\lambda$-calculus. The step from sequential lambda-calculus to FMC is then nothing more than the incorporation of the state monad in the original domain equation. Our development has much in common with Streicher and Reus's work~\cite{Streicher-Reus-1998}; the key step in the move to the FMC is to allow computations to return a result --- a new stack --- which may be further operated upon by later computations, yielding the sequencing operation of the FMC.

We begin by constructing a domain $D$ to interpret terms. A stack can be seen as an element of $D^{\mathbb{N}}$. A term takes a stack and, after computation, returns a new stack as its result. We suppose that computations are modelled using an (unspecified) strong monad $T$ on $\cppo$; for now think of $T$ as the lifting monad. Then a term would be an element of a  domain satisfying $D \cong D^{\mathbb{N}} \rightarrow TD^{\mathbb{N}}$. This domain equation can be solved by standard techniques. Kleisli function composition gives rise to a sequencing operation $D \times D \rightarrow D$ which is associative, and forms a monoid with unit element given by the unit of the monad. We will write $(d_1.d_2)$ for the composition of two elements of $D$ using this operation.

Observe that $D$ is a \emph{reflexive object} in $\cppo$ and hence provides a model of the $\lambda$-calculus:
  \begin{align*}
    D  ~\cong~  D^{\mathbb{N}} \rightarrow TD^{\mathbb{N}}
       ~\cong~  (D^{\mathbb{N}} \times D)  \rightarrow TD^{\mathbb{N}}
       ~\cong~  D \rightarrow (D^{\mathbb{N}} \rightarrow TD^{\mathbb{N}})
       ~\cong~ D \rightarrow D.
  \end{align*}

We briefly spell out the semantics of $\lambda$-calculus induced by this model. Let $\rho$ range over \emph{environments}: functions from the set of variables to $D$. We use $s$ to range over $D^{\mathbb{N}}$; $(s\cdot d)$ denotes the stack resulting from pushing $d$ onto $s$. We shall elide the isomorphism $D \cong D^{\mathbb{N}} \rightarrow TD^{\mathbb{N}}$.
For any term $\term{M}$ and environment $\rho$ we define $\sem{\term{M}}\rho \in D$ (equivalently, a function $D^{\mathbb{N}} \rightarrow TD^{\mathbb{N}}$) as follows.
\begin{align*}
  \sem{\term{x.*}}\rho  =  \rho{x} \qquad
  \sem{\term{[N].M}}\rho~s  =  \sem{\term M}\rho~(s \cdot \sem{\term N}\rho)\qquad
  \sem{\term{<x>.M}}\rho~(s \cdot d)   =  \sem{\term M} \rho'~s
\end{align*}
where $\rho'(x) = d$ and  $\rho'(y) = \rho(y)$ for $y\not=x$.

These definitions show immediately that application is interpreted by pushing the argument onto the stack, and abstraction by popping a term from the stack. Thus this standard $\lambda$-calculus model directly justifies the machine semantics. It extends to the sequential $\lambda$-calculus by defining
\[
  \sem{\term{*}}\rho  =  \eta
\quad\quad
  \sem{\term{x.M}}\rho  =  (\rho(x)~.~\sem{\term M}\rho),
\]
where $\eta$ is the unit of the monad (and of the monoid).

Thanks to the compositionality of the semantics we can readily prove:
 \begin{lemma}
   $\sem{\term{M}}\rho = \sem{\term{M}}\rho'$ if $\forall x \in \fv{\term{M}}~\rho(x) = \rho'(x).$
 \end{lemma}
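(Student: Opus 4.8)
The plan is a routine structural induction on $\term M$, leaning entirely on the compositionality of the clauses defining $\sem{-}$ for the sequential $\lambda$-calculus. The inductive hypothesis is exactly the statement of the lemma for proper subterms: if $\rho$ and $\rho'$ agree on $\fv{\term N}$, then $\sem{\term N}\rho = \sem{\term N}\rho'$. The base case $\term M = \term *$ is immediate: $\fv{\term *} = \varnothing$, so the premise holds vacuously, and $\sem{\term *}\rho = \eta = \sem{\term *}\rho'$ for any environments.

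For the variable case $\term M = \term{x.M'}$, I would use $\fv{\term{x.M'}} = \{x\}\cup\fv{\term{M'}}$. Agreement of $\rho$ and $\rho'$ on this set gives both $\rho(x) = \rho'(x)$ and, via the induction hypothesis on $\term{M'}$, $\sem{\term{M'}}\rho = \sem{\term{M'}}\rho'$. The defining equation $\sem{\term{x.M'}}\rho = (\rho(x)\,.\,\sem{\term{M'}}\rho)$ then closes the case. The application case $\term M = \term{[N].M'}$ is handled the same way: $\fv{\term{[N].M'}} = \fv{\term N}\cup\fv{\term{M'}}$, so the induction hypothesis applies to both $\term N$ and $\term{M'}$, and the clause $\sem{\term{[N].M'}}\rho\,s = \sem{\term{M'}}\rho\,(s\cdot\sem{\term N}\rho)$ — which holds for every stack $s$ — yields equality of the two functions in $D^{\mathbb N}\rightarrow TD^{\mathbb N}$.

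The only case needing a small remark is the abstraction $\term M = \term{<x>.M'}$, where $\fv{\term{<x>.M'}} = \fv{\term{M'}}\setminus\{x\}$. Since every element of $D^{\mathbb N}$ decomposes as $s\cdot d$, it suffices to compare $\sem{\term{<x>.M'}}\rho\,(s\cdot d) = \sem{\term{M'}}(\rho[x\mapsto d])\,s$ with the analogous expression for $\rho'$, where $\rho[x\mapsto d]$ denotes the environment agreeing with $\rho$ except at $x$, where it takes the value $d$. The point is that $\rho[x\mapsto d]$ and $\rho'[x\mapsto d]$ agree on all of $\fv{\term{M'}}$: they agree at $x$ by construction, and at any $y\in\fv{\term{M'}}$ with $y\neq x$ — that is, $y\in\fv{\term{M'}}\setminus\{x\}$ — because $\rho$ and $\rho'$ do. Applying the induction hypothesis to $\term{M'}$ with these extended environments finishes the case.

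I do not anticipate any genuine obstacle: the argument is a textbook free-variable lemma. The one place to be slightly careful is precisely the binder case above, namely checking that extending both environments by the same value at the bound variable preserves agreement on the free variables of the body; everything else is a direct unfolding of the semantic clauses.
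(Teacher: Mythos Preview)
Your proposal is correct and is precisely the routine structural induction the paper has in mind; the paper itself gives no detailed proof, merely remarking that the lemma follows ``thanks to the compositionality of the semantics.'' Your case analysis, including the small care taken at the binder to check that the extended environments agree on $\fv{\term{M'}}$, is exactly what is required.
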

 As a consequence of this Lemma, we may speak of $\sem{\term{M}}$, independent of $\rho$, when $\term{M}$ is closed.

 We extend the semantics to stacks of closed terms. Suppose the monad $T$ is lifting. A stack $S$ denotes an element of $D^{\mathbb{N}}$:
\[
  \sem{\e}  =  \bot \quad\quad
  \sem{S \cdot \term M}  =  \langle \sem{S} , \sem{\term M}\rangle
\]
where we elide the isomorphism $D^{\mathbb{N}} \cong D^{\mathbb{N}} \times D$.

Thanks to the direct correspondence between the semantic equations and the machine transitions, we have:
\begin{lemma}[Soundness]
  Whenever $({S},\term M)\eval({T},\term N)$ (with all terms closed), $\sem{\term M}(\sem{S}) = \sem{\term N}(\sem{T})$.
\end{lemma}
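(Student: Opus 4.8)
The plan is to induct on the length of the run $({S},\term M)\eval({T},\term N)$, working in the sequential $\lambda$-calculus. A run of length zero has $S=T$ and $\term M=\term N$, so the claim is trivial. For the inductive step it suffices to prove invariance of $\sem{\term M}(\sem S)$ under a single machine transition: if the first step of the run is $({S},\term M)\to({S'},\term{M'})$, then $\sem{\term M}(\sem S)=\sem{\term{M'}}(\sem{S'})$ by the single-step case, and $\sem{\term{M'}}(\sem{S'})=\sem{\term N}(\sem T)$ by the induction hypothesis applied to the shorter remaining run.

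There are two transition schemas to check (the constant-operator rules are analogous, once one fixes an interpretation of constants for which the matching clause holds by construction). For a push step $({S},\term{[N].M})\to({S}\cdot\term N,\term M)$, unfolding the clause for $\term{[N].M}$ gives $\sem{\term{[N].M}}(\sem S)=\sem{\term M}(\sem S\cdot\sem{\term N})$, while unfolding the clause for stacks gives $\sem{S\cdot\term N}=\langle\sem S,\sem{\term N}\rangle$, which is $\sem S\cdot\sem{\term N}$ under $D^{\mathbb N}\cong D^{\mathbb N}\times D$; the two sides coincide. For a pop step $({S}\cdot\term N,\term{<x>.M})\to({S},\term{\{N/x\}M})$, the clause for $\term{<x>.M}$ gives $\sem{\term{<x>.M}}(\sem{S\cdot\term N})=\sem{\term{<x>.M}}(\sem S\cdot\sem{\term N})=\sem{\term M}\rho'(\sem S)$ with $\rho'(x)=\sem{\term N}$, and since $\term{<x>.M}$ and $\term N$ are closed, $\term M$ has at most $x$ free and $\term{\{N/x\}M}$ is closed; so all that is missing is a substitution lemma.

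The substantive part is therefore the \emph{Substitution Lemma} $\sem{\term{\{N/x\}M}}\rho=\sem{\term M}(\rho[x\mapsto\sem{\term N}\rho])$, proved by induction on $\term M$, using the free-variable lemma and $\alpha$-conversion for the capture side-conditions. All cases are a direct reading of the semantic equations except the sequential-variable case, where $\term{\{N/x\}(x.M)}=\term{N;\{N/x\}M}$ and one needs a \emph{Composition Lemma} $\sem{\term{P;Q}}\rho=(\sem{\term P}\rho~.~\sem{\term Q}\rho)$, with $(\,.\,)$ the monoid (Kleisli) composition on $D$. The latter is again an induction on $\term P$: the $\term*$ case uses that $\eta$ is the monoid unit, the variable case $\term{(x.P);Q}=\term{x.(P;Q)}$ uses associativity of $(\,.\,)$, and the push and pop cases unfold to stack operations exactly as in the push step above. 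These two nested inductions --- keeping the substitution conditions and the monoid laws straight --- are the only real obstacle; everything else is a mechanical matching of the semantic equations against the machine rules, which is precisely what the design of the semantics is meant to make transparent.
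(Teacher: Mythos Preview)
Your proposal is correct and is essentially a fully worked-out version of what the paper leaves implicit: the paper merely says the result follows ``thanks to the direct correspondence between the semantic equations and the machine transitions'' and gives no further argument. You have correctly identified that the pop transition is the only non-immediate case and that it rests on a substitution lemma, which in turn (because of the sequential-variable clause $\term{\{N/x\}x.M}=\term{N;\{N/x\}M}$) requires a composition lemma for $\sem{-}$; both of these are exactly the auxiliary facts one needs, and your sketched inductions for them are right.
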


\begin{theorem}[Adequacy]
  If $\sem{\term M}(\sem{S}) \neq \bot$ then there exists a $T$ such that $(S,\term M)\eval(T,\term*)$.
\end{theorem}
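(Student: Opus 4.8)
I will establish computational adequacy by the standard technique of a termination‑sensitive \emph{logical relation} linking the model to the machine. A direct induction is doomed: induction on the structure of $\term M$ founders at the application case $\term{[N].M}$, since reasoning about the run of $\term M$ after the push requires knowing that $\sem{\term N}$ captures the future operational behaviour of $\term N$, which is exactly what such a relation records and not something a structural hypothesis provides; induction on run length fails symmetrically. So the first move is to set up the relation.

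Concretely I would define, by mutual recursion, a relation $d \loglt \term M$ for $d \in D$ and closed sequential $\lambda$-terms $\term M$, and $s \loglt S$ for $s \in D^{\mathbb N}$ and stacks $S$ of closed terms: $s \loglt S$ holds when the relevant components of $s$ are $\loglt$-related to the entries of $S$ (with the rest of $s$ left free and $\bot \loglt S$ for every $S$, so that $\sem{\e} \loglt \e$); and $d \loglt \term M$ holds when, for every $S$ and every $s \loglt S$, if $d(s) \neq \bot$ in $TD^{\mathbb N}$, say $d(s) = \eta\,s'$, then $(S, \term M) \eval (T, \term*)$ for some $T$ with $s' \loglt T$. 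Since $D \cong D^{\mathbb N} \to TD^{\mathbb N}$, these clauses are circular; the relation is obtained as a suitable fixed point, either by Pitts's minimal‑invariant technique (using that $D$ is the minimal solution of the domain equation) or, more concretely, by stratifying along the inverse limit $D = \varprojlim D_n$: define $\loglt_n$ by recursion on $n$ starting from the full relation $\loglt_0$, and put $\loglt = \bigcap_n \loglt_n$. One must then verify that $\loglt$ is admissible — downward closed and closed under least upper bounds of $\omega$-chains in the $D$ argument — which is genuinely needed here, since the untyped fragment contains fixed-point combinators; the clause $\bot \loglt \term M$ holds vacuously.

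The heart is the \emph{fundamental lemma}: for every term $\term M$, environment $\rho$ and closing substitution $\sigma$ with $\rho(\term x) \loglt \sigma(\term x)$ for all $\term x \in \fv{\term M}$, one has $\sem{\term M}\rho \loglt \term M\sigma$, by induction on $\term M$. The $\term *$ case is immediate from $\sem{\term*}\rho = \eta$; the sequential‑variable and composition cases need that $\loglt$ respects sequencing, i.e.\ $d_1 \loglt \term{N}$ and $d_2 \loglt \term{M}$ imply $(d_1.d_2) \loglt \term{N;M}$, which follows from a routine machine lemma — a terminating run of $\term{N;M}$ factors as a terminating run of $\term N$ followed by a run of $\term M$ — together with Lemma~\ref{lem:assoc}; the case $\term{[N].M}$ uses the hypotheses for $\term N$ and $\term M$ and the fact that the step $(S, \term{[N].M}) \to (S{\cdot}\term N, \term M)$ mirrors $\sem{\term{[N].M}}\rho\,s = \sem{\term M}\rho\,(s\cdot\sem{\term N}\rho)$; and the case $\term{<x>.M}$ uses that the pop step mirrors the abstraction clause, together with the substitution lemma $\sem{\term M}\rho[\term x \mapsto \sem{\term N}\rho] = \sem{\term{\{N/x\}M}}\rho$, itself a routine induction. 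Adequacy is then immediate: by the fundamental lemma with empty $\rho$, $\sigma$ we get $\sem{\term M} \loglt \term M$; a short induction on $S$ (applying the lemma to each entry) gives $\sem S \loglt S$; and unfolding $d \loglt \term M$ at the hypothesis $\sem{\term M}(\sem S) \neq \bot$ produces the required $T$.

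The one real obstacle is the construction and admissibility of $\loglt$ in the presence of the recursive domain equation; once that is in place, the remaining steps merely follow the tight correspondence between the semantic equations and the machine transitions already exploited for Soundness.
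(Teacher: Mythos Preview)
Your proposal is correct and follows essentially the same approach as the paper: a Pitts-style logical relation between denotations and terms whose existence must be established by appeal to the minimal-invariant technique (or inverse-limit stratification), followed by a fundamental lemma by induction on syntax. The paper factors the relation slightly differently, introducing a third relation $k \loglt (S,\term M)$ between computations in $TD^{\mathbb N}$ and machine states, but your two-relation formulation simply inlines this into the clause for $d \loglt \term M$; the content is identical, and your more detailed treatment of the inductive cases (sequencing, substitution) fills in what the paper leaves as ``a straightforward induction on syntax''.
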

\begin{proof}
  The proof of this statement follows readily from the existence of three relations: a relation between elements of $D$ and closed terms; a relation between semantic streams in $D^{\mathbb{N}}$ and streams $S$; and a relation between computations in $TD$ and machine states $(S,\term{M})$. We write $\loglt$ for each of these relations, relying on the types to disambiguate. The relations are required to satisfy the following conditions:
  \begin{eqnarray*}
    d \loglt \term{M} & \text{iff} & \forall \sigma \in D^{\mathbb{N}}, \sigma \loglt S \Rightarrow d(\sigma) \loglt (S,\term{M})\\
    \sigma \loglt S & \text{iff} & \forall i. \sigma_i \loglt S_i\\
    k \loglt (S,\term{M}) & \text{iff} & k = \mathsf{lift}(\sigma) \Rightarrow (S,\term{M})\eval(T,\term *)~\text{and}~\sigma\loglt T.
  \end{eqnarray*}
  These conditions cannot be used as a definition of the relations $\loglt$, for example as a fixed point of an operator on such relations, because the first clause contains a negatively-occurring usage of $\loglt$. Nevertheless the existence of such relations can be established using standard techniques of denotational semantics. Pitts's work~\cite{Pitts-1996} gives an elegant general theory which enables the construction of such relations.

  Once $\loglt$ has been shown to exist, a straightforward induction on syntax establishes that for any term $\term{M}$, and any finite stream $S$, we have
  \[
    \sem{\term{M}} \loglt \term M \quad\quad \sem{S} \loglt S \quad\quad \sem{\term{M}}(\sem{S}) \loglt (S,\term{M})
  \]
  from which computational adequacy immediately follows.
\end{proof}
Note that our soundness and adequacy results are expressed in terms of the stack-machine evaluation mechanism. It is also the case that the denotational semantics validates the beta- and eta-laws of Definition \ref{defn:laws}, but our point in this section is to emphasise that the stack-machine semantics can be seen as an implementation of a natural denotational model.

Our denotational semantics so far gives an account of the sequential $\lambda$-calculus. To extend to the FMC, we replace the lifting monad with the \emph{state monad} $TX = St \rightarrow (St \times X)_\bot$, where $St$ is a domain of states. Our domain equation becomes
\[
  D  \cong  D^{\mathbb{N}} \rightarrow (St \rightarrow (St \times D^{\mathbb{N}})_\bot)
   \cong  St \times D^{\mathbb{N}} \rightarrow (St \times D^{\mathbb{N}})_\bot
\]

If we let $St = D^{\mathbb{N}}$, so that the values in the state are stacks, this is a domain equation for ``two-stack transformers''. As above, this is a reflexive object, now in \emph{two distinct ways} depending on which stack is used to interpret the arguments. This is exactly the FMC with two locations; extension to any finite set of locations is handled similarly, and the soundness and adequacy results may be proved in the same way. As we emphasized in the introduction, this semantics has the remarkable property that the stack used to interpret the operations of the $\lambda$-calculus has exactly the same status as that used to interpret state, and it is merely convention that distinguishes the two. This is precisely the point of view embodied by the novel syntax and operational semantics of the FMC.


\begin{figure}
\begin{align*}
	\infer[\TR *]{\term{G |- *:?tA>!tA}}{}
	\qquad
	\infer[\TR f]{\term{G, x:} \mathbb{\type{\alpha}} \term{|- x:}\mathbb{\type{\alpha}}}{}
	\qquad
	\infer[\TR l]
	  {\term{G |- a<x>.M : a(r)\,?sA>!tA}}
	  {\term{G , x:r |- M:?sA>!tA}}
\end{align*}
\begin{align*}
 	\infer[\TR a]
	  {\term{G |- [N]a.M: ?sA>!tA}}
	  {\term{G |- N:r} &&
	   \term{G |- M:a(r)\,?sA>!tA}
	  }
	\qquad
	\infer[\TR x]
	 {\term{G , x:?rA>!sA |- x.M:?rA\,?tA>!uA}}
	 {\term{G , x:?rA>!sA |- {\phantom{x.}}M:?sA\,?tA>!uA}}
\end{align*}
\caption{Typing rules for the Functional Machine Calculus}
\label{fig:FMC-types}
\end{figure}


\section{Simple types}
\label{sec:FMC types}

Simple types for the FMC~\cite{Heijltjes-2022} describe the input/output behaviour of the stack machine. The type system has three levels, mirroring the syntactic categories of the machine: types $\type t$ for terms $\term M$, type vectors $\type{!t}$ (or \emph{stack types}) for stacks $S$, and location-indexed families of type vectors $\type{!t_A}$ (or \emph{memory types}) for memories $S_A$. A function type is then an implication between an input memory type and an output memory type.

\begin{definition}
\emph{FMC-types} $\type{r,s,t,u}$ over a set of \emph{base types} $\Sigma$ are given by:
\[
		\type t	      ~\Coloneqq~ \type{\alpha} \in \Sigma ~\mid~\type{!sA>!tA}
\qquad	\type{!t\!_A} ~\Coloneqq~ \{\type{!t_a}\mid a\in A\}
\qquad	\type{!t}	  ~\Coloneqq~ \type{t_1\dots t_n}
\]
\end{definition}

Equivalently, one may view a function type as an implication between two vectors of location-indexed types, considered modulo the permutation of types on different locations.
\[
	\type{a_1(s_1)\dots a_n(s_n)~>~b_1(t_1) \dots b_m(t_m)}
\qquad\qquad
	\type{a(s)\,b(t)}\sim\type{b(t)\,a(s)}
\]
We introduce the following notation, which will enable us to write types also in the manner above. The \emph{empty} type vector is $\type\e$, and the empty memory type $\type{\e_A}$. A \emph{singleton} memory type $\type{a(!t)}$ is empty at every location except $\type a$, where it has $\type{!t}$: $\type{a(!t)_a}=\type{!t}$ and $\type{a(!t)_b}=\type{\e}$ for $\type a\neq \type b$. A singleton $\type{\lambda(!t)}$ on the main location $\lambda$ may be written as $\type{!t}$. \emph{Concatenation} of type vectors is denoted by juxtaposition and the \emph{reverse} of a type vector $\type{!t}=\type{t_1\dots t_n}$ is written $\type{?t}=\type{t_n\dots t_1}$. This extends point-wise to families, so $\type{!sA !tA}=\{\type{!s_a!t_a}\mid a\in A\}$ and $\type{?tA}=\{\type{?t_a}\mid a\in A\}$.

\begin{definition}
A \emph{judgement} $\term{G |- M:t}$ is a typed term in a \emph{context} $\Gamma=\term{x_1:t_1,,x_n:t_n}$, a finite function from variables to types. The typing rules for the FMC are given in Figure~\ref{fig:FMC-types}.
\end{definition}

\begin{example}
The terms from Example~\ref{ex:increment-countup} can be typed as follows, where $\type\Z$ is a base type of integers. Recall that the first term adds a random number to the cell $\term c$, and the second sends the numbers from zero to two to output, leaving the number three on the main stack.
\[
\begin{aligned}
	\term{\rnd<x>.[x].c<y>.[y].+.<z>.[z]c} &: \type{\rnd(\Z)\,c(\Z) > c(\Z)}
\\
	\term{[<x>.[x]\out.[x].[1].+].<f>.[0].f.f.f} &: \type{ > \out(\Z\,\Z\,\Z)\,\Z}
\end{aligned}
\]
\end{example}

Note, there are two typing rules for variables: one for variables of base type, and one for variables of higher type. The simply-typed FMC satisfies the subject reduction property \cite{Heijltjes-2022}, which is implicitly used in the following section. 


\section{Strong normalization}
\label{sec:SN}

We will show that reduction for the simply-typed FMC is strongly normalizing (SN). Our proof is a variant of Gandy's for the simply-typed $\lambda$-calculus~\cite{Gandy-1980}. Gandy's proof interprets terms in domains of strictly ordered, strict monotone functionals: the base domain is $\N^<=(\N,<_\N)$, and if $X=(|X|,<_X)$ and $Y=(|Y|,<_Y)$ are domains then so is $X\to Y$ where
\[
\begin{array}{rcl}
	|X\to Y| &{=}& \{ f \in Y^X \mid \forall x,x'\in X.~ x <_X x' \implies f(x)<_Y f(x') \}
\\
	f <_{X\to Y} g &{\iff}& \forall x\in X.~f(x) < g(x)~.
\end{array}
\]

The interpretation takes types to domains and terms of a given type to elements of that domain. The domains are well-founded and the interpretation of terms is such that it decreases on reduction, giving SN. One may further collapse a functional to a natural number to give an overestimate of the longest reduction sequence of a term. The literature has several variants on this proof, including one by De Vrijer that calculates longest reduction sequences exactly~\cite{DeVrijer-1987}; see also~\cite{Schwichtenberg-1982,VanDePol-Schwichtenberg-1995,VanDePol-1995}.

We introduce a (to the best of our knowledge) new variant, that avoids the domain of \emph{strict} functionals and instead interprets terms in the---more standard---domain of (non-strict) monotone functionals, as above but with $\leq$, generated from $\N^\leq=(\N,\leq_\N)$ with $\to$. This domain is not well-founded, but our interpretation of terms ensures that when functionals are collapsed to a natural number, this strictly decreases upon reduction, giving SN.

The technical difference is small and subtle. Gandy's proof originates in $\Lambda I$, where abstracted variables must occur, and hence the interpretation of an abstraction $\lambda x.M$ is naturally strict: the argument to $x$ always contributes to the overall interpretation. To generalize to the $\lambda$-calculus, where $x$ need not occur in $\lambda x.M$, a construction is introduced to nevertheless measure the argument to $x$, so that the functional for $\lambda x.M$ remains strictly monotone. The literature has several further such constructions~\cite{Gortz-Reuss-Sorensen-2003}.

This solves the challenge of accounting for reduction in terms that will be discarded, common in SN proofs. In our proof, instead we account for such terms when they are supplied as arguments: for a term $M\,N$ we increment the overall measure with that for $N$, measuring potential reduction in $N$ even if it will be discarded by $M$. An abstraction $\lambda x.M$ may then be interpreted as a standard monotone functional, avoiding strictness.

To build our domains, we use the $\to$ construction above, as well as the product of domains $X\times Y$ and an indexed product $\Pi_{a\in A}\,X_a$, defined in the expected way, as follows. Note, we will omit to work with base types in this section, so the base case is given by $\type{(>)}$. 
\[
\begin{aligned}
	|X\times Y| &= |X|\times|Y| \qquad
&
	(x,y)\leq_{X\times Y}(x',y') &\iff x\leq_X x' \wedge y\leq_Y y'
\\
	|\Pi_{a\in A}\,X_a| &= \Pi_{a\in A}\,|X_a|
&
	x\leq_{\Pi_{a\in A}X_a} x' & \iff \forall a\in A.~x_a\leq_{X_a}x'_a
\end{aligned}
\]

\begin{definition}
The \emph{interpretation} of an FMC-type $\type{t}$ is the domain $\sn{\type{t}}$ given by:
\[
\begin{aligned}
	\sntype{?sA>!tA} = \sntype{!sA}~\to\,\N^\leq\times\sntype{!tA}
\qquad
	\sntype{t_1\dots t_n} = \sntype{t_1}\times\dots\times\sntype{t_n}
\qquad
	\sntype{!tA} = \Pi_{a\in A}\,\sntype{!t_a}
\end{aligned}
\]
\end{definition}

It is worth observing that for the simple types of the $\lambda$-calculus, as embedded in the FMC, these domains are the natural ones. Briefly (see~\cite{Heijltjes-2022} for details), a simple type $\type{t_1 \imp \dots \imp t_n \imp o}$ embeds as the FMC-type $\type{t_1\dots t_n>}$ with the domain $\sntype{t_1}\times\dots\times\sntype{t_n}\to\N^\leq$, which is the expected one modulo Currying.

\begin{definition}
The least element of a domain $\sntype t$ is written $0_{\type t}$. The \emph{collapse} function $\collapse-_{\type t}:\sntype t\to\N$ takes a functional to a natural number by providing a least element as input and discarding all other output: $\collapse{f}_{\type{?sA>!tA}}=\pi_1(f(0_{\type{!sA}}))$.
\end{definition}

We will interpret terms such that if $\term{M:t}$ then $\snterm M\in\sntype t$, and if $\term M\rw\term N$ at type $\type t$ then both $\snterm M\geq_{\sntype t}\snterm N$ and $\collapse{\snterm M}>_\N\collapse{\snterm N}$, to give SN. We introduce the following notation.
To interpret terms in a context $\Gamma$, let a \emph{valuation} $v$ on $\Gamma$ be a function assigning to each variable $\term{x:t}$ in $\Gamma$ a value $v(\term{x})\in\sntype t$. The valuation $v\{\term{x}\leftarrow t\}$ assigns $t$ to $\term x$ and otherwise behaves as $v$.
We write elements of product domains as vectors $(t_1,\dots,t_n)$, and will elide the isomorphisms for associativity and unitality so that concatenation of $s$ and $t$ may be written $(s,t)$. Concatenation lifts to indexed products pointwise: $(s,t)_a = (s_a,t_a)$. For $t\in\sntype t$ we have a singleton $a(t)\in\sntype{a(t)}$ where $a(t)_a=t$ and $a(t)_b=()$ for $b\neq a$.

\begin{definition}
\label{def:sninterp}
For a term $\term{G |- M:t}$ and valuation $v$ on $\Gamma$, we inductively define the \emph{interpretation} $\snterm{G |- M:t}_v \in \sntype{t}$ as follows, omitting $\Gamma$ for compactness.
\[
\begin{array}{@{}r@{}l@{}l@{\hspace{-10pt}}r@{}l}
   \snterm{     *:       ?tA > !tA}_v & (t)      &{}= (0,   t)
\\ \snterm{   x.M:  ?rA\,?sA > !tA}_v & (s,r)    &{}= (n\+m,t)                & \text{where } (n,u) &{}= v(\term{x: ?rA > !uA})(r)
\\                                                                                        &&& (m,t) &{}= \snterm{M: ?uA\,?sA > !tA}_v (s,u)
\\ \snterm{a<x>.M: a(r)\,?sA > !tA}_v & (s,a(r)) &{}= (1\+m, t)               & \text{where } (m,t) &{}= \snterm{M:      ?sA > !tA}_{v\{\term{x} \leftarrow r\}}(s)
\\ \snterm{[N]a.M:       ?sA > !tA}_v & (s)      &{}= (1\+m\+\collapse{f}, t) & \text{where }     f &{}= \snterm{N:r}_v
\\                                                                                        &&& (m,t) &{}= \snterm{M: a(r)\,?sA > !tA}_v(s,a(f))
\end{array}
\]
We write $\snterm{G |- M:t}$ for $\snterm{G |- M:t}_v$ with $v$ the least valuation $v(\term{x:t})=0_{\type t}$, and may abbreviate $\snterm{G |- M:t}_v$ to $\snterm{M:t}_v$ or $\snterm{M}_v$.
\end{definition}

\begin{remark}
In this definition, the application case $\snterm{[N]a.M}_v(s)=(1\+m\+\collapse f)$ adds the value $\collapse f$ to account for reduction inside the argument $\term N$. Further, both it and the abstraction case $\snterm{a<x>.M}_v(s,a(r))=(1\+m,t)$ add $1$ to count redexes, so that a reduction step reduces the overall measure by (at least) $2$. It would suffice to count only abstractions or only applications, but the choice to count both is so that we count steps of the stack machine. We observe the following: for the alternative interpretation that omits to count $\collapse f$, and instead has $\snterm{[N]a.M}_v(s)=(1\+m)$, the collapsed interpretation $\collapse{\snterm M}$ measures the exact length of machine runs for $\term M$. This observation provides the proof with an operational intuition: terms are strongly normalizing because a) types guarantee termination of the machine~\cite[Theorem 3.12]{Heijltjes-2022}, and b) reduction shortens the length of machine runs.
\end{remark}

For the remainder of the proof, we will give an overview by stating the main lemmata. Each follows by a straightforward induction on typing derivations. 
First, for the interpretation $\sn{-}$ to be well-defined, the construction for each term must be shown to preserve monotonicity. We will do so in the following lemma. For valuations $v$ and $w$ over $\Gamma$, let $v\leq w$ denote that $v(\term x) \leq_{\sntype t} w(\term x)$ for all $\term{x:t}$ in $\Gamma$.
\begin{lemma}
\label{lem:increasing}
For all terms $\Gamma \vdash \term{M: t}$ and valuations $v\leq w$ over $\Gamma$, we have that:
\begin{enumerate}
	\item $\snterm{M}_v \in \sntype{t}$ 
	\item $\snterm{M}_v \leq_{\sntype{t}} \snterm{M}_w$.
\end{enumerate}
\end{lemma}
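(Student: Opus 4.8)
The plan is to prove both statements simultaneously by induction on the typing derivation $\Gamma \vdash \term{M:t}$, since item~1 (well-definedness, i.e.\ the interpretation actually lands in the domain $\sntype t$ of \emph{monotone} functionals) and item~2 (monotonicity in the valuation) are intertwined: establishing that $\snterm M_v$ is a monotone functional requires knowing that the subterm interpretations behave monotonically in \emph{their} arguments, and the inductive hypothesis for item~2 supplies exactly the comparison needed when a subterm's argument is itself built from a varying valuation. So I would set up a single statement asserting, for every $\Gamma \vdash \term{M:t}$ and all valuations $v \le w$: (1a) $\snterm M_v$ is well-defined and monotone as a function of its stack argument, i.e.\ $\snterm M_v \in \sntype t$; (1b) likewise $\snterm M_w \in \sntype t$; and (2) $\snterm M_v \le_{\sntype t} \snterm M_w$, where $\le$ on $\sntype{?sA>!tA} = \sntype{!sA} \to \N^\le \times \sntype{!tA}$ is the pointwise order inherited from $\N^\le$ and the product/indexed-product orders.

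Then I would go through the four term constructors. For $\term{*}$: $\snterm{*}_v(t) = (0,t)$ is visibly monotone in $t$ and independent of $v$, so all clauses are immediate. For $\term{a<x>.M}$: by the i.h.\ on $\term M$, the map $(s) \mapsto \snterm M_{v\{\term x \leftarrow r\}}(s)$ is monotone in $s$; monotonicity in the extra coordinate $a(r)$ follows because increasing $r$ increases the valuation $v\{\term x \leftarrow r\}$ pointwise, whence i.h.\ item~2 gives $\snterm M_{v\{\term x\leftarrow r\}} \le \snterm M_{v\{\term x\leftarrow r'\}}$ for $r\le r'$; adding the constant $1$ to the first component preserves everything, and monotonicity in $v$ again reduces to i.h.\ item~2 since $v\le w$ implies $v\{\term x\leftarrow r\}\le w\{\term x\leftarrow r\}$. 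For $\term{x.M}$: here $\snterm{x.M}_v(s,r) = (n+m, t)$ with $(n,u) = v(\term x)(r)$ and $(m,t) = \snterm M_v(s,u)$; monotonicity in $r$ uses that $v(\term x) \in \sntype{?rA>!sA}$ is itself monotone (so larger $r$ gives larger $(n,u)$), monotonicity in $s$ and in the produced $u$ then comes from the i.h.\ on $\term M$, and addition on $\N$ is monotone; monotonicity in $v$ combines $v(\term x)\le w(\term x)$ (giving a larger $(n,u)$, hence larger $u$ fed to $\term M$ and larger $n$) with the i.h.\ on $\term M$ applied twice---once to move the valuation, once to move the argument $u$---using transitivity. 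For $\term{[N]a.M}$: $\snterm{[N]a.M}_v(s) = (1+m+\collapse f, t)$ with $f = \snterm N_v$ and $(m,t) = \snterm M_v(s, a(f))$; here $\collapse f = \pi_1(f(0))$ is a fixed natural number once $v$ is fixed (monotonicity in the stack argument $s$ doesn't touch it), and for monotonicity in $v$ we need $\collapse{\snterm N_v} \le \collapse{\snterm N_w}$, which follows from i.h.\ item~2 on $\term N$ (pointwise order, evaluated at $0_{\type{!sA}}$, projected), together with $a(f) \le a(f')$ pointwise when $f\le f'$ and another appeal to the i.h.\ on $\term M$.

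The main obstacle, such as it is, is bookkeeping rather than conceptual: keeping straight the two \emph{different} ways a subterm interpretation can vary---by its explicit stack argument and by the valuation---and chaining the two monotonicity facts via transitivity in the cases ($\term{x.M}$ and $\term{[N]a.M}$) where the argument passed to a subterm ($u$, respectively $a(f)$) is itself produced monotonically from the data being varied. I would handle this cleanly by always writing the comparison as a two-step chain, e.g.\ $\snterm M_v(s,u) \le \snterm M_w(s,u) \le \snterm M_w(s,u')$, justified by i.h.~2 and i.h.~1 (monotonicity) respectively. One should also note at the outset that the singleton constructor $a(\cdot)$ and concatenation $(\cdot,\cdot)$ are monotone for the product and indexed-product orders, and that $\N^\le$-addition is monotone---these are the only ``base'' facts used, and the rest is a routine structural induction.
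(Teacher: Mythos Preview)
Your proposal is correct and follows essentially the same approach as the paper: a simultaneous induction on the typing derivation, handling the four constructors in the way you describe, with the abstraction case using item~2 of the inductive hypothesis to establish monotonicity in the bound argument, and the variable and application cases chaining item~1 (monotonicity in the stack argument) with item~2 (monotonicity in the valuation) via transitivity. The paper's proof is slightly terser in places where you spell out the two-step chain, but the structure and content are the same.
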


For the next steps, we first need that the interpretation commutes with sequential composition $\term{M;N}$ and substitution $\term{\{N/x\}M}$. Then, we show that reduction (non-strictly) decreases the interpretation, and strictly decreases the collapsed interpretation.

\begin{lemma}
\label{lem:sequencing}
For terms $\term{G |- M: ?sA?tA > !uA}$ and $\term{G |- N: ?rA > !sA}$ and valuation $v$ on $\Gamma$,
\begin{align*}
	\snterm{N;M}_v(t,r) = (i+j,u)
	\quad\text{where}\quad	\snterm{N}_v  (r)=(i,s) 
	\quad\text{and}  \quad  \snterm{M}_v(t,s)=(j,u)~.
\end{align*}
\end{lemma}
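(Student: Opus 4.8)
The plan is to prove the statement by induction on the typing derivation of $\term{G |- N: ?rA > !sA}$, following the recursive clauses defining composition $\term{N;M}$; throughout I rely on Lemma~\ref{lem:increasing} so that every $\snterm{-}_v$ written below is a well-defined functional. The key structural observation is that in each clause the outermost constructor of $\term N$ is also the outermost constructor of $\term{N;M}$, so I can unfold the matching clause of Definition~\ref{def:sninterp} on both $\snterm{N;M}_v$ and $\snterm{N}_v$ and reduce to the induction hypothesis applied to the strictly smaller composition $\term{N';M}$.

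In the base case $\term N = \term *$ the type forces $\type{?rA} = \type{?sA}$, and since $\snterm{*}_v(r) = (0, r)$ we get $i = 0$ with the intermediate stack $s$ of the statement being $r$ itself; as $\term{*;M} = \term M$, directly $\snterm{N;M}_v(t,r) = \snterm{M}_v(t,r) = (j,u) = (i+j,u)$. For $\term N = \term{[P]a.N'}$ we have $\term{N;M} = \term{[P]a.(N';M)}$, and the application clause rewrites both $\snterm{[P]a.N'}_v$ and $\snterm{[P]a.(N';M)}_v$ in terms of $\snterm{N'}$ resp.\ $\snterm{N';M}$ run on the input extended by the singleton $a(\snterm{P}_v)$, each side picking up the same summand $1 \+ \collapse{\snterm{P}_v}$; the induction hypothesis on $\term{N';M}$ then gives the arithmetic, with the stack $s$ of the statement being exactly the stack produced once $\term{N'}$ has finished. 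The case $\term N = \term{x.N'}$, with $\term{N;M} = \term{x.(N';M)}$, has the same shape: the variable clause contributes to both sides the identical summand read off from the lookup $v(\term x)$, which is unchanged, and the induction hypothesis applies to $\term{N';M}$. For $\term N = \term{a<x>.N'}$, with $\term{N;M} = \term{a<x>.(N';M)}$ under the side condition $\term x \notin \fv{\term M}$ imposed by the definition of composition, the abstraction clause contributes $1$ to both sides and the induction hypothesis is applied at the extended valuation $v\{\term x \leftarrow r'\}$; here I additionally use the routine fact — a direct induction on terms, analogous to the free-variables lemma stated earlier for the $\lambda$-calculus semantics, together with stability under weakening of the context — that $\snterm{\term M}_w$ depends only on the restriction of $w$ to $\fv{\term M}$, so that $\snterm{\term M}_{v\{\term x \leftarrow r'\}} = \snterm{\term M}_v$. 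This is precisely what lets the $\term M$-portion of the computation survive the binding of $\term x$ unchanged.

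The step I expect to take the most care is not any single case but the bookkeeping common to all of them: matching the stack-concatenation and reversal conventions, and the associativity/unitality elisions, so that the part $t$ of the input which $\term N$ never touches is threaded unchanged through $\term N$'s computation and re-emerges as the leading argument handed to $\term M$, and so that the intermediate stack named $s$ in the statement is the common value of $\pi_2(\snterm{N}_v(r))$ and the argument to $\snterm{M}_v$. Once these conventions are pinned down, the computation in each case is mechanical, which is why the lemma is fairly described as a straightforward induction.
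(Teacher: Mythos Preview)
Your proposal is correct and follows essentially the same approach as the paper: induction on the typing derivation of $\term N$, unfolding the matching clauses for $\term{N;M}$ and $\term N$ in each of the four cases, with the abstraction case appealing to weakening (the paper states this separately as a Weakening Lemma) to obtain $\snterm{M}_{v\{\term x\leftarrow r'\}}=\snterm{M}_v$. The bookkeeping you flag about threading the untouched portion $t$ of the input is exactly what the paper's proof handles implicitly via the concatenation conventions.
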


\begin{lemma}
\label{lem:substitution}
For terms $\term{G |- N:s}$ and $\term{G , x:s |- M:t}$ and valuation $v$ on $\Gamma$,
\[
	\snterm{\{N/x\}M}_v = \snterm M_{v\{\term x \from \snterm N_v\}}~.
\]
\end{lemma}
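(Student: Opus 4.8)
The plan is to prove the identity by induction on the structure of the term $\term M$ (equivalently, on its typing derivation, since the typing rules of Figure~\ref{fig:FMC-types} are syntax-directed). In each case I unfold, on the left-hand side, the relevant clause of substitution together with the defining clause of the interpretation from Definition~\ref{def:sninterp}, unfold on the right-hand side the matching clause of the interpretation, and check that the two sides reduce to a common expression. Since terms are taken modulo $\alpha$-equivalence, in the abstraction case I may assume the bound variable $\term y$ satisfies $\term y\neq\term x$ and $\term y\notin\fv{\term N}$, so that no capture or clash arises and, by weakening, $\Gamma\vdash\term{N:s}$ also yields $\Gamma,\term{y:r}\vdash\term{N:s}$ --- which is needed in order to apply the induction hypothesis under the binder. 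In the application case $\term M=\term{[P]a.M'}$ the induction hypothesis is used twice, once for the pushed subterm $\term P$ and once for the continuation $\term{M'}$.

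Most cases are then immediate. For $\term M=\term*$, both sides equal $\snterm{*}_v$, which does not depend on the valuation. For $\term M=\term{y.M'}$ with $\term y\neq\term x$, substitution gives $\term{\{N/x\}(y.M')}=\term{y.\{N/x\}M'}$; unfolding the variable clause, the head contribution applies $v(\term y)$ on both sides (using $(v\{\term x\from\snterm N_v\})(\term y)=v(\term y)$), while the tail contribution is $\snterm{\{N/x\}M'}_v$ on the left, which by the induction hypothesis equals $\snterm{M'}_{v\{\term x\from\snterm N_v\}}$, exactly the tail on the right. For $\term M=\term{[P]a.M'}$, substitution gives $\term{\{N/x\}([P]a.M')}=\term{[\{N/x\}P]a.\{N/x\}M'}$; the induction hypothesis on $\term P$ shows that the pushed value $\snterm{\{N/x\}P}_v$ equals $\snterm{P}_{v\{\term x\from\snterm N_v\}}$ (hence so does its collapse), and the induction hypothesis on $\term{M'}$ then turns the left-hand clause into the right-hand one. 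For $\term M=\term{a<y>.M'}$ with $\term y$ as above, substitution gives $\term{\{N/x\}(a<y>.M')}=\term{a<y>.\{N/x\}M'}$; applying the induction hypothesis at the extended valuation $v\{\term y\from r\}$, then using $\snterm N_{v\{\term y\from r\}}=\snterm N_v$ (as $\term y\notin\fv{\term N}$) and the fact that updates at the distinct variables $\term x,\term y$ commute, yields the claim.

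The one case needing genuine care is $\term M=\term{x.M'}$, where substitution produces a sequential composition, $\term{\{N/x\}(x.M')}=\term{N;\{N/x\}M'}$. Here I apply Lemma~\ref{lem:sequencing} to split $\snterm{N;\{N/x\}M'}_v$ into the contribution of $\snterm N_v$ followed by the contribution of $\snterm{\{N/x\}M'}_v$ run on the concatenation of the remaining input with the output stacks of $\snterm N_v$; the induction hypothesis rewrites the second factor as $\snterm{M'}_{v\{\term x\from\snterm N_v\}}$. This is precisely what the variable clause for $\snterm{x.M'}_{v\{\term x\from\snterm N_v\}}$ computes, since that clause first applies $(v\{\term x\from\snterm N_v\})(\term x)=\snterm N_v$ to the appropriate part of the input and then recurses into $\term{M'}$ on the rest together with the result, and adds the two counters. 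I expect the main obstacle to be purely bookkeeping: one must align the stack types (and hence the input/output splittings and the summed counters) occurring in Lemma~\ref{lem:sequencing} with those in the variable clause so that the two decompositions coincide componentwise.

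Finally, the binder cases quietly rely on the auxiliary fact that $\snterm{G |- M:t}_v$ depends only on the restriction of $v$ to $\fv{\term M}$; this is proved by a routine induction on $\term M$ in the same style as Lemma~\ref{lem:increasing}, and is what licenses the $\alpha$-renaming and weakening steps used above. With that in hand, no step is anything more than a careful unfolding, the $\term{x.M'}$ case being the crux.
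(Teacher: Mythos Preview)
Your proposal is correct and follows essentially the same approach as the paper: an induction on the typing derivation of $\term M$, handling the same five cases, invoking the Sequencing Lemma (Lemma~\ref{lem:sequencing}) in the crucial $\term{x.M'}$ case, and appealing to a weakening/free-variable lemma in the abstraction case. The paper states that auxiliary fact separately as a Weakening Lemma just before the proof, exactly as you anticipate.
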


\begin{lemma}
\label{lem:reduction monotone}
If $\term{G |- M -> N : t}$ then $\snterm{M}_v \geq_{\sntype{t}} \snterm{N}_v$ for every valuation v on $\Gamma$.
\end{lemma}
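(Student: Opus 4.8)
I would prove this by induction on the derivation of $\term{G |- M \rw N : t}$, i.e.\ on the closure of the beta- and eta-rules under all contexts. The congruence (context-closure) cases each strip one term-constructor off the context and reduce to checking that the interpretation of that constructor is monotone in the interpretation of the reduced subterm. When the reduced subterm sits in body position --- in $\term{x.M}$, $\term{a<x>.M}$, or as the $\term M$ of $\term{[N]a.M}$, with $\term M \rw \term{M'}$ --- this is immediate from the inductive hypothesis and the fact that $\snterm P_w$ is a monotone functional for every $\term P$ and $w$ (Lemma~\ref{lem:increasing}); for a reduction under the binder of $\term{a<x>.M}$ one uses that the inductive hypothesis gives the inequality for \emph{every} valuation, in particular $v\{\term x \leftarrow r\}$. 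The one congruence case needing a separate remark is reduction inside an argument, $\term{[N]a.M}$ with $\term N \rw \term{N'}$, where $\snterm{[N]a.M}_v(s) = (1 + m + \collapse f,\, t)$ with $f = \snterm N_v$ and $(m,t) = \snterm M_v(s, a(f))$: besides $f \geq \snterm{N'}_v$ from the inductive hypothesis we need the collapse function to be monotone, which is immediate since on a function type it is the first projection of evaluation at the least input; then $\collapse f \geq \collapse{\snterm{N'}_v}$ and $\snterm M_v(s, a(f)) \geq \snterm M_v(s, a(\snterm{N'}_v))$ by monotonicity of $\snterm M_v$, so $\snterm{[N]a.M}_v \geq_{\sntype t} \snterm{[N']a.M}_v$ componentwise.

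\textbf{The beta rule.} The real work is in the two root rules, where the head context $\term H$ must be analysed. For $\term{[N]a.H.a<x>.M} \rw_\beta \term{H.\{N/x\}M}$ --- with $\term a \notin \loc H$ and $\bv{\term H}\cap\fv{\term N} = \varnothing$ --- I would first establish, by induction on the structure of $\term H$, the identity
\[
  \snterm{[N]a.H.a<x>.M}_v(s) \;=\; \bigl(2 + k + \collapse{\snterm N_v},\; t\bigr) \qquad\text{where } (k,t) = \snterm{H.\{N/x\}M}_v(s)\,.
\]
The base case $\term H = \term{\{\}}$ is a direct unfolding of Definition~\ref{def:sninterp}: $\term{[N]a}$ pushes $f = \snterm N_v$ onto location $\term a$ (contributing $1 + \collapse f$), $\term{a<x>}$ pops it straight back binding $\term x \mapsto f$ (contributing $1$), and $\snterm M_{v\{\term x \leftarrow f\}}$ equals $\snterm{\{N/x\}M}_v$ by the substitution Lemma~\ref{lem:substitution}. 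For $\term H = \term{[P]b.H'}$ or $\term H = \term{b<y>.H'}$, where $\term b \neq \term a$ since $\term a \notin \loc H$, the leading action operates on a location other than $\term a$ and so leaves the $\term a$-component of the memory untouched; one slides the pushed value $f$ past it, using that concatenation of memory-vectors at distinct locations commutes and --- in the $\term{b<y>}$ case --- that binding $\term y$ does not affect $f$ (since $\term y \notin \fv{\term N}$, by the hypothesis on $\term H$ and the free-variable lemma), and then applies the inductive hypothesis for $\term{H'}$. Given this identity, the conclusion is immediate: the two functionals agree on the stack component while the left one exceeds the right by $2 + \collapse{\snterm N_v} \geq 0$ on the $\N^\leq$ component, so $\snterm{[N]a.H.a<x>.M}_v \geq_{\sntype t} \snterm{H.\{N/x\}M}_v$.

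\textbf{The eta rule and the expected obstacle.} For $\term{a<x>.H.[x]a.M} \rw_\eta \term{H.M}$ (with $\term a \notin \loc H$, $\term x \notin \bv{\term H}$, $\term x \notin \fv{\term M}$) I would prove the analogous identity, again by induction on $\term H$: popping a value $r$ off $\term a$ as $\term x$, running $\term H$ (which does not touch $\term a$), pushing $\term x$ --- that is, $r$ --- back onto $\term a$, and running $\term M$ (in which $\term x$ does not occur free) has the same net effect on the memory as $\term{H.M}$, with the $\N^\leq$ component larger by $2 + \collapse r \geq 0$, so the redex again dominates its reduct. The head context $\term H$ is where I expect the difficulty to lie: since the binders of $\term H$ are meant to capture in the body, $\term{H.M}$ is \emph{not} the sequential composition of the paper's $\term{;}$ operation (whose binders are non-capturing, so cannot simply be renamed away here), hence Lemma~\ref{lem:sequencing} does not apply directly and one is pushed into the explicit inductions on $\term H$ above, carefully tracking how actions on locations other than $\term a$ thread the pushed or popped value through memory. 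Everything else is bookkeeping with the componentwise order on $\N^\leq \times \sntype{!tA}$.
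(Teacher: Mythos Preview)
Your proposal is correct and uses the same top-level induction on the reduction derivation as the paper, with the congruence cases handled identically. The divergence is in the base case for $\beta$: where you carry out an explicit induction on the head context $\term H$ to establish the identity $\snterm{[N]a.H.a<x>.M}_v(s) = (2 + k + \collapse{\snterm N_v},\,t)$ with $(k,t) = \snterm{H.\{N/x\}M}_v(s)$, the paper instead factors the general $\beta$-rule into a \emph{permutation} part $\term{[N]a.H.a<x>.M} \sim_p \term{H.[N]a.a<x>.M}$ (generated by commuting adjacent actions on distinct locations) and a \emph{strict} reduction $\term{[N]a.a<x>.M} \rw \term{\{N/x\}M}$. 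A separate Permutation Lemma shows $\snterm M_v = \snterm N_v$ whenever $\term M \sim_p \term N$, so only the strict rule --- exactly your case $\term H = \term{\{\}}$ --- remains. Your ``sliding $f$ past actions on other locations, using that memory-singletons at distinct locations commute'' is precisely the content of that Permutation Lemma; the paper's decomposition isolates it once rather than re-deriving it inside each step of an induction on $\term H$, and the same lemma is then reused verbatim in the proof of Lemma~\ref{lem:reduction collapsed}. Conversely, your approach is self-contained and makes the exact excess $2 + \collapse{\snterm N_v}$ visible. Your handling of $\eta$ is additional work: the paper proves this lemma for $\beta$ only and deals with $\eta$ after Theorem~\ref{thm:sn} by the separate observation that $\eta$ does not increase the measure and terminates on its own.
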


\begin{lemma}
\label{lem:reduction collapsed}
If $\term{G |- M -> N: ?sA > !tA}$ then $\pi_1(\snterm M_v(s)) >_\N \pi_1(\snterm N_v(s))$ for every $s\in\sntype{!sA}$ and valuation $v$ on $\Gamma$.
\end{lemma}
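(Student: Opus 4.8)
The plan is to prove the statement by induction on the derivation of $\term{G |- M -> N: ?sA > !tA}$, which is generated by the two redex schemas of Definition \ref{defn:laws} together with closure under all contexts. Since reduction is closed under arbitrary contexts, the inductive steps split into: the two base cases (a $\beta$-redex and an $\eta$-redex at the root), and the congruence cases where the reduced redex sits under a leading term constructor — that is, under $\term{x.{-}}$, under $\term{a<x>.{-}}$, under $\term{[{-}]a.M}$ (redex in the argument), and under $\term{[N]a.{-}}$ (redex in the continuation). Throughout, I would freely use that $\collapse{\snterm M_v} = \pi_1(\snterm M_v(0))$ and that $\snterm{{-}}_v$ is monotone (Lemma \ref{lem:increasing}), together with Lemma \ref{lem:sequencing} for composition and Lemma \ref{lem:substitution} for substitution; I would also use the already-established Lemma \ref{lem:reduction monotone}, i.e.\ $\snterm M_v \geq \snterm N_v$, which gives the non-strict inequality on the whole functional and will be needed to propagate a strict first-component inequality through the context constructors.

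For the congruence cases the argument is uniform and essentially bookkeeping: I read off $\pi_1(\snterm{{-}}_v(s))$ from Definition \ref{def:sninterp} for the outer constructor, observe that it is a sum in which one summand is controlled by the inner term (either $\pi_1$ of the inner interpretation applied to the appropriate stack, obtained via the induction hypothesis, or a collapse $\collapse f$ of the inner term obtained via the induction hypothesis together with the monotonicity-in-$v$ and the collapse definition), and note that the remaining summands are unchanged or, by Lemma \ref{lem:reduction monotone}, do not increase; hence the total strictly decreases. The case of a redex in the argument position $\term{[{-}]a.M}$ needs the small extra observation that $\collapse{{-}}$ is monotone and that if $\pi_1(f(0)) > \pi_1(g(0))$ while $f \geq g$ then $\collapse f > \collapse g$, which is immediate from the definition of $\collapse{{-}}$.

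The main obstacle is the $\beta$-base case: showing $\pi_1(\snterm{[N]a.H.a<x>.M}_v(s)) >_\N \pi_1(\snterm{H.\{N/x\}M}_v(s))$. Here I would first use Lemma \ref{lem:sequencing} (applied along the head context $\term H$, noting $\term a\notin\loc H$ so that the actions in $\term H$ and the push/pop on $\term a$ commute at the level of interpretations) to rewrite the left-hand side as a sum collecting: the constant $1$ from the push, the constant $1$ from the pop, the term $\collapse{\snterm N_v}$ contributed by the argument, the contribution $\pi_1$ of $\term H$, and the contribution of $\snterm{M}_{v\{\term x \from \snterm N_v\}}$ — using that the value pushed on $\type a$ is exactly $\snterm N_v$, which is what reaches the binder since $\term a\notin\loc H$. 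On the right-hand side, Lemma \ref{lem:substitution} gives $\snterm{\{N/x\}M}_v = \snterm M_{v\{\term x \from \snterm N_v\}}$, so the first component of $\snterm{H.\{N/x\}M}_v(s)$ is the same sum but \emph{without} the two $+1$'s and without the $\collapse{\snterm N_v}$ summand. Since all summands are natural numbers, the difference is $\geq 2 > 0$, giving the strict inequality. The $\eta$-base case is handled the same way but is easier: expanding $\snterm{a<x>.H.[x]a.M}_v$ via Definition \ref{def:sninterp} and Lemma \ref{lem:sequencing}, the pop contributes $+1$ and the re-push $\term{[x]a}$ contributes $1 + \collapse{\snterm{x.{*}}_{v\{\term x\from r\}}}$ where the pushed-back functional is just the projection $r$ read back out, so its collapse is $0$ at the least input; matching this against $\snterm{H.M}_v$ (using $\term x\notin\fv{\term M}$ so the binder is vacuous and $\term a\notin\loc H$) leaves a surplus of at least $2$. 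The one subtlety to get right in both base cases is the interaction of the head context $\term H$ with the isomorphisms eliding associativity/unitality of stack concatenation and with the side condition $\term a\notin\loc H$, which is precisely what makes the "commute the $\term a$-actions past $\term H$" step of Lemma \ref{lem:sequencing} applicable; I would isolate this as the place requiring genuine care rather than routine computation.
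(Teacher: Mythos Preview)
Your proposal is correct and follows the same overall induction as the paper: a base case for the redex and congruence cases for each constructor, with Lemma~\ref{lem:reduction monotone} invoked in the argument-position case. The one genuine difference is in how the $\beta$-base case with its head context $\term H$ is handled. The paper factors this out by first proving a separate \emph{Permutation Lemma}: actions on distinct locations syntactically commute up to an equivalence $\sim_p$, and $\snterm{-}_v$ is invariant under~$\sim_p$. This reduces the general rule $\term{[N]a.H.a<x>.M}\rw\term{H.\{N/x\}M}$ to the \emph{strict} redex $\term{[N]a.a<x>.M}\rw\term{\{N/x\}M}$, for which the base case is a two-line unfolding via Lemma~\ref{lem:substitution}. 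Your plan to push the commutation through semantically via Lemma~\ref{lem:sequencing} and the side condition $\term a\notin\loc H$ amounts to proving the same content inline; it works, but the permutation lemma is a cleaner packaging of exactly the step you flag as ``requiring genuine care''. Two minor points: the paper's lemma and proof treat $\beta$-reduction only, with $\eta$ deferred to the remark after Theorem~\ref{thm:sn}, so your $\eta$-base case is extra; and in that case your claim that the collapse of the re-pushed value is $0$ is not right---it is $\pi_1(r(0))$, which may be positive---though this only enlarges the surplus and does not affect the strict inequality.
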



The last lemma then immediately gives the strong normalization result.

\begin{theorem}[Strong Normalization]
\label{thm:sn}
Simply-typed FMC-terms are strongly normalizing with respect to beta-reduction.
\end{theorem}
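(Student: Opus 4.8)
The plan is to read off the theorem directly from Lemma~\ref{lem:reduction collapsed}, using the remaining lemmata only to make sense of the statement. Recall that in this fragment every FMC-type is a function type $\type{?sA>!tA}$, so every typed term has an interpretation in a domain of the form $\sntype{!sA}\to\N^\leq\times\sntype{!tA}$. Suppose, towards a contradiction, that some $\term{G |- M_0 : ?sA>!tA}$ admits an infinite reduction sequence $\term{M_0}\rw\term{M_1}\rw\term{M_2}\rw\cdots$. By subject reduction each $\term{M_i}$ has type $\type{?sA>!tA}$, and by Lemma~\ref{lem:increasing} each $\snterm{M_i}_v$ is a well-defined functional in $\sntype{?sA>!tA}$. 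Fixing the least valuation $v$ and evaluating at the least input $0_{\type{!sA}}$, Lemma~\ref{lem:reduction collapsed} gives
\[
	\collapse{\snterm{M_0}} ~=~ \pi_1(\snterm{M_0}(0_{\type{!sA}})) ~>_\N~ \pi_1(\snterm{M_1}(0_{\type{!sA}})) ~>_\N~ \pi_1(\snterm{M_2}(0_{\type{!sA}})) ~>_\N~ \cdots,
\]
an infinite strictly descending chain in $\N$ — impossible. Hence no typed term has an infinite reduction sequence, i.e.\ reduction is strongly normalizing. Since $\rw$ in Lemma~\ref{lem:reduction collapsed} is already closed under all contexts, there is no additional work at the level of the theorem itself.

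Thus the whole weight of the result sits in the chain of lemmata, each established by induction on typing derivations. One first proves Lemma~\ref{lem:increasing}, which simultaneously gives $\snterm{M}_v\in\sntype t$ — the only non-routine point being that the functionals built in the abstraction and application clauses of Definition~\ref{def:sninterp} are genuinely monotone — and monotonicity of the interpretation in the valuation. Lemma~\ref{lem:sequencing} and Lemma~\ref{lem:substitution} then record that $\snterm{-}$ is a homomorphism for the two defined operations $\term{N;M}$ and $\term{\{N/x\}M}$; these are the structural facts needed to compute the two sides of a redex. It is worth stressing that Lemma~\ref{lem:substitution} holds in the clean form $\snterm{\{N/x\}M}_v=\snterm{M}_{v\{\term x\from\snterm N_v\}}$ with \emph{no} strictness side-condition, precisely because the design choice of Definition~\ref{def:sninterp} is to charge an argument's collapse $\collapse f$ at the application site rather than at the binder: a discarded argument is still measured, so the interpretation of an abstraction need only be a standard monotone functional.

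The main obstacle, as expected, is the combined induction establishing Lemma~\ref{lem:reduction monotone} and Lemma~\ref{lem:reduction collapsed} together, by induction on the context surrounding the redex, the base cases being the $\beta$- and $\eta$-schemas of Definition~\ref{defn:laws}. For a $\beta$-step $\term{[N]a.H.a<x>.M}\rw\term{H.\{N/x\}M}$ one unfolds the left side with the application clause (contributing $1\+\collapse{\snterm N_v}$), transports the denotation of $\term N$ through the head context $\term H$ — which leaves the $\type a$-stack untouched since $\term a\notin\loc H$, so $\term H$'s contribution is unchanged — consumes it with the abstraction clause (contributing a further $1$), and rewrites the remainder with Lemma~\ref{lem:substitution}; comparing against the right side, computed via Lemma~\ref{lem:sequencing}, yields the non-strict inequality of Lemma~\ref{lem:reduction monotone} and a strict drop of at least $2$ in the first component, which is Lemma~\ref{lem:reduction collapsed}. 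The $\eta$-case is analogous and smaller. The delicate bookkeeping here — tracking the additive constants $1\+1$ and $\collapse f$ and, in particular, formulating and using the auxiliary fact about how a head context $\term H$ transports a pushed argument through the interpretation — is where the novelty of the proof pays off; once it is in place, every downstream step, including the theorem, is immediate.
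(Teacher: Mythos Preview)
Your top-level argument is exactly the paper's: apply Lemma~\ref{lem:reduction collapsed} at the least valuation and input to obtain a strictly decreasing $\N$-valued measure, hence termination. That part is correct and identical.

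Where you diverge is in the sketch of Lemmas~\ref{lem:reduction monotone} and~\ref{lem:reduction collapsed}. You propose to handle the general $\beta$-redex $\term{[N]a.H.a<x>.M}\rw\term{H.\{N/x\}M}$ directly, by an auxiliary fact about ``transporting'' the pushed argument through the head context $\term H$. The paper instead factors the rule as a \emph{permutation} step $\term{[N]a.H.a<x>.M}\sim_p\term{H.[N]a.a<x>.M}$ followed by a \emph{strict} $\beta$-step $\term{[N]a.a<x>.M}\rw\term{\{N/x\}M}$, proves a separate Permutation Lemma showing $\sn{-}$ is invariant under $\sim_p$, and then only needs the strict rule as base case in the inductions. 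Both routes are sound; the paper's decomposition trades your bespoke head-context lemma for three local permutation equations, which keeps the base case of the reduction lemmas entirely free of context-threading.

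Two small corrections. First, the theorem concerns only $\beta$-reduction, so $\eta$ is not a base case here; the paper treats the extension to $\eta$ as a separate remark after the theorem (eta does not increase the measure and is SN by size). Second, Lemmas~\ref{lem:reduction monotone} and~\ref{lem:reduction collapsed} are proved by the paper as two separate inductions (the latter invoking the former in the application-argument case), not as a single combined induction.
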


\begin{proof}
By Lemma~\ref{lem:reduction collapsed} if $\term{G |- M -> N: t}$ then $\floor{\snterm M} >_\N \floor{\snterm N}$, so that $\floor{\snterm M}$ gives a bound for the length of any reduction path from $\term M$.
\end{proof}

Note that it is easy to extend this result to include eta-reduction: since eta-reduction does not increase the measure, and is clearly strongly normalizing by itself (the size of the term decreases), we can interleave each beta-reduction step with an arbitrary number of eta-reduction steps without affecting strong normalization.

\newcommand\ccceq{=_{\textsf{eqn}}}
\newcommand\ccc[1]{\llbracket#1\rrbracket}
\newcommand\cccterm[1]{\ccc{\term{#1}}}
\newcommand\ccctype[1]{\ccc{\type{#1}}}

	\section{Categorical semantics}
	\label{sec:categories}
	We give the categorical view on the FMC in three layers:
	\begin{itemize}
		\item terms with composition $\term{N;M}$ and unit $\term *$ form a category;
		\item terms modulo $\beta\eta$-equivalence form a \emph{premonoidal} category~\cite{Power-Robinson-1997};
		\item terms modulo an appropriate equational theory form a complete language for Cartesian closed categories;
	\end{itemize}
	we then show that \emph{machine equivalence}, where terms are equivalent if they display the same input/output behaviour on the machine, validates the final equational theory. 

The idea that a calculus with effects should semantically be a CCC may be surprising, so we will first motivate what the semantics does and does not capture. Firstly, and most importantly, the semantics we give here is one of the pure FMC, and emphatically \emph{not} a semantics \emph{of effects}: it ignores that, for instance, \emph{input} only has a \emph{pop} operation but no \emph{push}, and that \emph{state} locations would be restricted to a stack of depth one (at most). Imposing these constraints will cause the CCC semantics to break down, as we will demonstrate later in the section.

Secondly, the situation is analogous to the encoding of monadic effects in simply-typed $\lambda$-calculus, where for instance \emph{state} encodes as the monad $S\to (-\times S)$. In that case, too, the semantics remains a CCC, despite the possibility of encoding effects.

Thirdly, the two generalizations of the FMC, \emph{locations} and \emph{sequencing}, remain close enough to the $\lambda$-calculus that simple types allow to collapse them back onto a CCC semantics. \emph{Locations} give multiple copies of the $\lambda$-calculus, but because the types give the entire memory, the semantics may combine the different indexed stacks into one, projecting the multiple copies onto a single $\lambda$-calculus. \emph{Sequencing} gives control over evaluation and allows us to encode various reduction strategies, but the point of the denotational perspective is precisely to collapse any computational behaviour, and only consider the input/output behaviour of a term.

The purpose of our CCC semantics is to demonstrate that the simply-typed FMC is an \emph{operational} refinement of the lambda-calculus, but not a \emph{denotational} one. The FMC allows to express \emph{how} computation takes place: what reduction strategy is used, whether inputs are passed as function arguments or via mutable store, when the random generator is consulted, \emph{etc.} The denotational perspective then collapses these distinctions, demonstrating that we remain firmly in the domain of higher-order functional computation, despite the ability to encode effects.
\subsubsection*{The plain category}
For simplicity we will work in the sequential $\lambda$-calculus. The arguments generalize straightforwardly to the case of the FMC, and the details of this case are to be given in the first author's Ph.D. thesis. The objects are then type vectors $\type{!t}$ and morphisms in $\cat{!s --> !t}$ will be \textit{closed} terms $\term{M:?s>!t}$ modulo the given equivalence. 

A term $\term{M : r_1\dots r_m>s_n\dots s_1}$ may be represented by a string diagram as below, left. The wires represent the input and output stacks, with the first element at the top.
Strict composition of terms $\term{M:?r>!s}$ and $\term{N:?s>!t}$ into $\term{M;N:?r>!t}$, given below, right.
\[
	\qquad \qquad \qquad\scalebox{0.3}{\includegraphics{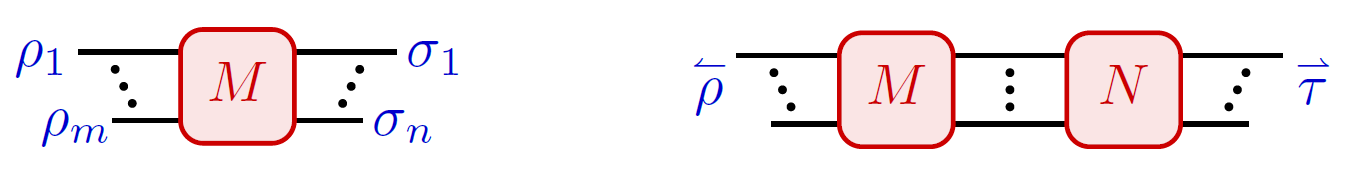}}
\]

	Analogous to the notation for type vectors, we  introduce the following notation.
	We use vector notation for variables, $\term{!x} = \term{x_1}\dots\term{x_n}$, and reverse a vector by a left-pointing arrow: if $\term{!x}$ is as before, then $\term{?x} = \term{x_n}\dots\term{x_1}$. Concatenation of vectors is by juxtaposition. We lift the notation to sequences of abstractions and applications, but only for variables: if $\term{!x}$ is as above, then $\term{<?x>.N} = \term{<x_n>\dots<x_1>.N}$ and $\term{[!x].N} = \term{[x_1]\dots[x_n].N}$. Vector notation is extended to contexts as $\term{x_1:t_1,,x_n:t_n}~=~\term{!x:!t}$  and  simultaneous substitutions as $\term{\{S/{}!x\}} = \term{\{M_1/x_1`,\dots`,M_n/x_n\}}$ , where  $S = \e\cdot\term{M_1}\cdots\term{M_n}$.

	The first, plain category is then given by Lemma \ref{lem:assoc}.

	\subsubsection*{The premonoidal category}

	A \emph{premonoidal} category~\cite{Power-Robinson-1997}, like a monoidal category, describes string diagrams, but with a \emph{sequential} element: a premonoidal product $\ten$ has no \emph{parallel} composition $f\ten g$, while $(f\ten\id);(\id\ten g)$ and $(\id\ten g);(f\ten\id)$ are distinct. 
	\[
\qquad \qquad \qquad 
\qquad \qquad 
	\vc{\begin{tikzpicture}[x=1pt,y=1pt,thick]
		\draw (0,20) -- (70,20);
		\draw (0, 0) -- (70, 0);
		\node[draw,fill=black!20,rounded corners,minimum size=18pt] at (20,20) {$f$};
		\node[draw,fill=black!20,rounded corners,minimum size=18pt] at (50, 0) {$g$};
	\end{tikzpicture}}
	\quad\neq\quad
	\vc{\begin{tikzpicture}[x=1pt,y=1pt,thick]
		\draw (0,20) -- (70,20) ;
		\draw (0, 0) -- (70, 0) ;
		\node[draw,fill=black!20,rounded corners,minimum size=18pt] at (50,20) {$f$};
		\node[draw,fill=black!20,rounded corners,minimum size=18pt] at (20, 0) {$g$};
	\end{tikzpicture}}
	\]
	Formally, a premonoidal product is a binary operation on objects $X\ten Y$ that is a functor in each argument, $-\ten X$ and $X\ten -$, but need not be a bifunctor $-\ten -$. 
In the FMC, the action on objects is concatenation, $\cat{!s\ten!t}~\defeq~\type{!s!t}$, with the first element at the top of the stack, and the unit given by $\e$. Both actions on morphisms are given below for $\term{M:?r>!s}$, where $\term{?x:?t}$.
	\[
	\begin{array}{rclc}
		\cat{M \ten !t :~ !t \ten !r --> !t \ten !s} &\defeq& \term{M:?r?t>!t!s} \qquad\qquad &
	\vc{\begin{tikzpicture}[termpic]
		\wires{?r}{-30}{20}{30}{!s}
		\wires{?t\,}{-20}{0}{20}{!t}
		\node[termbox] at (0,20) {$\term M$}; 
	\end{tikzpicture}}
	\\[2pt] \\ 
		\cat{!t \ten M :~ !r \ten !t --> !s \ten !t} &\defeq& \term{<?x>.M.[!x] : ?t?r > !s!t} \qquad\qquad &
	\vc{\begin{tikzpicture}[termpic]
		\node (a1) at (-28,36) {$\,\term{<x_1>}\,$};
		\node (b1) at ( 28,36) {$\,\term{[x_1]}\,$};
		\node (an) at (-22,24) {$\,\term{<x_n>}\,$};
		\node (bn) at ( 22,24) {$\,\term{[x_n]}\,$};
		\draw 
		  (-48,36)node[left] {$\type{t_1}\,$} -- (a1) (b1) -- ( 48,36)node[right]{$\,\type{t_1}$}
		  (-42,24)node[left] {$\type{t_n}\,$} -- (an) (bn) -- ( 42,24)node[right]{$\,\type{t_n}$};
		\draw[dotted,line cap=round,color=term] (a1)--(b1) (an)--(bn);
		\diagdots[-1.5,3]{-40,30}
		\diagdots[ 1.5,3]{ 40,30}
		\wires{?r}{-40}{10}{40}{!s}
		\node[termbox] at (0,10) {$\term M$}; 
	\end{tikzpicture}}
	\end{array}
	\]
	The first is \emph{expansion} (see Property 3.9 of the previous paper~\cite{Heijltjes-2022}). The second lifts the arguments for $\type{?t}$ from the stack as the variables $\term{?x}$, to restore them after evaluating $\term M$. We illustrate these above.
	A premonoidal product further has an \emph{associator} and a \emph{unitor}, and is called \emph{strict} if these are identities, which they are here. The category is then formed by terms modulo $\beta\eta$-equivalence, where $\eta$-equivalence is generated by:
	\[
		\term{M:r?s>!t}~=_\eta~\term{<x>.[x].M:r?s>!t} \quad\text{where }\term x\notin\fv{\term M}
	\]
	
	\begin{proposition}
	\label{prop:premonoidal}
	Terms modulo $\beta\eta$-equivalence form a strict premonoidal category.
	\end{proposition}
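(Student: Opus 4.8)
The plan is to verify directly the data and axioms of a strict premonoidal category in the sense of Power and Robinson~\cite{Power-Robinson-1997}: an underlying category; a unit object; for each object $X$ functors $X\ten-$ and $-\ten X$ that agree on objects; and associativity and unit isomorphisms satisfying the pentagon and triangle, all required to be \emph{identities} in the strict case. The underlying category is the plain category of Lemma~\ref{lem:assoc} quotiented by $\beta\eta$-equivalence: since $\beta$- and $\eta$-reduction are closed under all contexts (Definition~\ref{defn:laws}), $\beta\eta$-equivalence is a congruence for composition $\term{N;M}$, so composition descends to the quotient, still with unit $\term{*}$.

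Next I would check that $\cat{-\ten!t}$ and $\cat{!t\ten-}$ are functors. On objects both act by concatenation, $\cat{!s\ten!t}=\type{!s!t}$ and $\cat{!t\ten!s}=\type{!t!s}$, so the two families agree on objects; and since concatenation of type vectors is strictly associative with strict unit $\e$, the associator and unitors are taken to be identity morphisms, making coherence (pentagon, triangle) trivial. On morphisms, $\cat{-\ten!t}$ simply re-types a term $\term M$ so that it operates above an untouched bottom segment of the stack --- this is \emph{expansion}, Property~3.9 of~\cite{Heijltjes-2022} --- and so is well-defined on $\beta\eta$-classes and preserves identities and composition on the nose. The functor $\cat{!t\ten-}$ sends $\term M$ to $\term{<?x>.M.[!x]}$ with $\term{!x}$ fresh; it is well-defined on classes by context closure, and its two functoriality clauses are: (i) applied to the identity it yields $\term{<?x>.[!x]}$, which equals $\term{*}$ up to $\eta$, by peeling off an innermost $\term{<x>.[x]}$ one step at a time using the $\eta$-law; and (ii) composing $\term{<?x>.M.[!x]}$ with a fresh copy $\term{<?y>.N.[!y]}$ produces a block $\term{[!x].<?y>}$ consisting of a run of pushes directly followed by the matching pops --- a chain of ordinary $\beta$-redexes, each with empty head context so the location side-condition is vacuous --- which contracts to the renaming $\term{\{!x/!y\}}$; since $\term N$ is closed this leaves $\term{<?x>.(M;N).[!x]}$, i.e.\ $\cat{!t\ten(M;N)}$.

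It then remains to verify naturality of the identity associator and unitors. Naturality of the unitors, and of the associator in its first two arguments, holds on the nose: in each such case both composites denote literally the same term, because the re-typing performed by $\cat{-\ten!t}$ and the variable-lifting performed by $\cat{!t\ten-}$ act on disjoint (top versus bottom) segments of the stack and so commute syntactically. The remaining case is naturality of the associator in its third argument, i.e.\ $\cat{(!s\ten!t)\ten N}=\cat{!s\ten(!t\ten N)}$ for closed $\term N$: unfolding both sides, each pops the whole combined block, runs $\term N$ on the remainder, and pushes the block back in reverse, the two differing only in how the combined block is partitioned for popping and, symmetrically, for pushing; with a consistent naming of the abstracted variables these coincide up to $\alpha$-equivalence, or else by the same push/pop $\beta$-cancellation used in clause~(ii). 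This discharges all the axioms.

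The main obstacle is bookkeeping rather than mathematics: one must fix, once and for all, the conventions tying together a variable vector $\term{!x}$, its reverse $\term{?x}$, the induced sequences of abstractions $\term{<?x>}$ and applications $\term{[!x]}$, and the top-to-bottom ordering of stack types, tightly enough that the typings of $\cat{M\ten!t}$ and $\cat{!t\ten M}$ and all the $\beta$- and $\eta$-cancellations above line up correctly. The genuine content --- that lifting a morphism over a block of the stack is functorial and compatible with splitting the block --- is delivered entirely by the $\eta$-law and the push/pop $\beta$-cancellation; the one subtlety worth flagging is that $\cat{!t\ten-}$ must be a functor \emph{strictly} in the quotient, which is exactly why clauses~(i) and~(ii) were verified up to $\beta\eta$ and not merely up to isomorphism.
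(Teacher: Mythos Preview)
Your approach is essentially the paper's: quotient by $\beta\eta$ to get the underlying category, observe that concatenation of type vectors is strictly associative and unital so the associator and unitors are identities, and then verify functoriality of the two actions using $\eta$ for preservation of identities and a chain of push/pop $\beta$-cancellations for preservation of composition. The paper's own proof stops there; your additional paragraph checking that the identity associator is natural in each argument (i.e.\ that iterated expansions and iterated wrappings agree with the single combined version) is extra care the paper leaves implicit, and your argument for it---that the nested $\term{<?x>.\term{<?y>.}N.\term{[!y]}.[!x]}$ coincides with the single-block wrapping up to $\alpha$ or the same $\beta$-cancellation---is correct.
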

	We remark that terms modulo $\beta\eta$-equivalence do \textit{not} form a \textit{symmetric} pre-monoidal category, due to the failure of naturality of symmetry. Of course, one can add further equations to remedy this. In the sequel, we develop an extended equational theory which in fact makes the category of terms Cartesian closed. 
	
Premonoidal structure forces a notion of sequentiality, which has previously been employed to capture that of effects, as in the closed Freyd categories of Power, Thielecke, and Levy~\cite{Power-Thielecke-1999,Levy-Power-Thielecke-2003}, which are the premonoidal equivalent of Cartesian closed categories. However, this imposed sequentiality is only necessary if interactions through effects (such as state) are hidden from the type system. Because the FMC makes these explicit, they can instead be accounted for in the semantics, which then reverts to a Cartesian closed category.

\subsubsection*{The Cartesian closed category}

We now give an example illustrating why we would expect the FMC to form a Cartesian (closed) category, despite its ability to encode effects. For this example, but not for the rest of the section, we will then consider terms \textit{with} locations. Note, first, how the two following terms are illustrated in string diagrams below. 
	\[
\qquad \qquad 
	\term{<?x>}:~
	\vc{\begin{tikzpicture}[termpic]
		\node (a1) at (-28,36) {$\,\term{<x_1>}\,$};
		\node (an) at (-22,24) {$\,\term{<x_n>}\,$};
		\draw 
		  (-48,36)node[left] {$\type{t_1}\,$} -- (a1) 
		  (-42,24)node[left] {$\type{t_n}\,$} -- (an);
		\diagdots[-1.5,3]{-40,30}
	\end{tikzpicture}}
	\qquad\qquad
	\term{<?x>.[!x].[!x]}:~
	\vc{\begin{tikzpicture}[termpic]
		\node (a1) at (-28,36) {$\,\term{<x_1>}\,$};
		\node (an) at (-22,24) {$\,\term{<x_n>}\,$};
		\node (b1) at ( 34,36) {$\,\term{[x_1]}\,$};
		\node (bn) at ( 28,24) {$\,\term{[x_n]}\,$};
		\node (c1) at ( 22,12) {$\,\term{[x_1]}\,$};
		\node (cn) at ( 16, 0) {$\,\term{[x_n]}\,$};
		\draw 
		  (-48,36)node[left] {$\type{t_1}\,$} -- (a1) 
		  (-42,24)node[left] {$\type{t_n}\,$} -- (an) 
		  (b1) -- ( 54,36)node[right]{$\,\type{t_1}$}
		  (bn) -- ( 48,24)node[right]{$\,\type{t_n}$}
		  (c1) -- ( 42,12)node[right]{$\,\type{t_1}$}
		  (cn) -- ( 36, 0)node[right]{$\,\type{t_n}$};
		\draw[dotted,line cap=round,color=term] (a1)--(b1) (an)--(bn) (a1)--(c1) (an)--(cn);
		\diagdots[-1.5,3]{-40,30}
		\diagdots[ 1.5,3]{ 46,30}
		\diagdots[ 1.5,3]{ 34, 6}
	\end{tikzpicture}}
	\]
	\vspace{-\baselineskip}
	\begin{example}
	We introduce the following effectful operations as defined constructs (``sugar'') into the FMC: reading from a stream of random integers ($\type{\Z}$), $\term{\rand}$, a memory cell $\term c$ with $\term{\get}$ and $\term{\set}$ operations, and writing to output, $\term{\print}$.
	We give the definitions and types of these operations and illustrate these as as (nominal) string diagrams below, using colours and dashed lines to indicate non-main locations: \textcolor{red}{red} for $\term \rnd$, \textcolor{blue}{blue} for $\term c$ and \textcolor{orange}{yellow} for $\term \out$. We use the black dot here to depict a transition from one location to another\footnote{This corresponds to the \textit{renaming} a wire in the formalism of nominal string diagrams, \textit{i.e.}, string diagrams where wires additionally have an associated \textit{name}. } (as in $\term{\rnd}$, $\term{\set}$ and $\term{\print}$), as well as to depict the terminal $!$ (as in $\term{\set}$) and diagonal $\Delta$ (as in $\term{\get}$), as is standard.
	\[
	\begin{array}{cccc}
		\ \ \term{\rand}
		&\quad
		\term{\set}
		&\quad
		\term{\get}
		&
		 \term{\print}
	\\
		\ \ \term{\rnd<x>.[x]}
		&\quad
		\term{<x>.c<\_>.[x]c}
		&\quad
		\term{c<x>.[x]c.[x]}
		&
		 \term{<x>.[x]\out}
	\\
		\ \ \type{\rnd(\Z) > \Z}
		&\quad
		\type{\Z c(\Z) > c(\Z)}
		&\quad
		\type{c(\Z ) > c(\Z )\Z}
		&
		\type{\Z > \out(Z)}
	\\
		\begin{tikzpicture}[termpic]
		\filldraw[opacity=0, black] (0,13) circle (3pt);
		\node (a2) at (-15,30) {$\textcolor{red}{\rnd}$};
		\draw[red] [densely dashed](-5,30) -- (30,30) ;
		\draw (20,30) -- (45,30) ;
		\filldraw[black] (20,30) circle (3pt);
		\node[opacity=0.2,draw=term,fill=term!10,rounded corners,minimum size=20pt] at (20,30) {};
		\node (a1) at (50,30) {$\ \ \lambda$};
		\end{tikzpicture}
		&
		\begin{tikzpicture}[termpic]
		\node (a1) at (-5,20) {$\ \ \lambda$};
		\node (a1) at (-5,40) {$\ \ \textcolor{blue}{c}$} ;
		\draw[blue][densely dashed]  (5,40) -- (30,40) ;
		\draw (5,20) -- (20,20) ;
		\draw (20,20) -- (30, 25);
		\draw[blue][densely dashed] (30,25) -- (40,30) ;
		\draw[blue][densely dashed] (40,30) -- (55,30) ;
		\filldraw[black] (30,25) circle (3pt);
		\filldraw[black] (30,40) circle (3pt);
		\node[opacity=0.2,tallbox] at (30,30) {};
		\node (a1) at (55,30) {$\ \ \textcolor{blue}{c}$};
		\end{tikzpicture}
	&
		\begin{tikzpicture}[termpic]
		\node (a1) at (-5,30) {$\ \ \textcolor{blue}{c}$};
		\draw[blue][densely dashed]  (5,30) -- (30,30) ;
		\draw[blue][densely dashed] (40,20) -- (55,20) ;
		\draw[blue][densely dashed] (30,30) -- (40,20) ;
		\draw(40,40) -- (55,40) ;
		\draw(30,30) -- (40,40) ;
		\filldraw[black] (30,30) circle (3pt);
		\node[opacity=0.2,tallbox] at (30,30) {};
		\node (a1) at (55,40) {$\ \ \  \lambda$};
		\node (a1) at (55,20) {$\ \ \textcolor{blue}{c}$};
		\end{tikzpicture}
		& \quad
		\begin{tikzpicture}[termpic]
		\filldraw[opacity=0, black] (0,13) circle (3pt);
		\node (a2) at (-10,30) {$\lambda$};
		\draw(-5,30) -- (20,30) ;
		\draw [orange][densely dashed](20,30) -- (45,30) ;
		\filldraw[black] (20,30) circle (3pt);
		\node[opacity=0.2,draw=term,fill=term!10,rounded corners,minimum size=20pt] at (20,30) {};
		\node (a1) at (50,30){$\textcolor{orange}{\ \  \out}$};
		\end{tikzpicture}
	\end{array}
	\]
Note that, modulo renaming of wires, these effectful operations are encoded by the diagonal and terminal operations of a Cartesian category.
Consider the following term (reprised from Examples~$4.4$ and $4.8$ of the previous paper~\cite{Heijltjes-2022}) and its string diagrammatic representation.  Note that we give the diagram modulo symmetry.\footnote{Note that, because operations acting on different locations permute, we have, for example, no need to lift (and restore) the arguments from the stacks at locations $\term \lambda$ and $\term c$ prior to (and subsequent to) applying the second instances of $\term \rand$, $\term \set$ or $\term \get$ --- or, equivalently, doing so has the same result as not doing so. Technically, the term corresponding to the diagram would lift  the result of the first instance of $\term \get$ off the main $\term \lambda$ stack before applying the second instance of $\term \get$, and then restore it afterwards; however, since $\term + $ is symmetric in its inputs, we omit this for simplicity.}

\[
\scalebox{1.2}{$
\begin{array}{c}
	\term{~\rand~;\ ~\set~;~~\get~;~~\rand~;~~\set~;~~\get~;~+~;~~\print~ } \\ \\
	\vc{\begin{tikzpicture}[termpic]
		\node (a1) at (-10,40) {$\ \ \textcolor{blue}{c}$};
		\node (a2) at (-10,20) {$\textcolor{red}{\rnd}$};
		\node (a3) at (-10,0) {$\textcolor{red}{\rnd}$};
		\draw[blue][densely dashed]  (0,40) -- (50,40) ;
		\draw[red] [densely dashed](0,20) -- (30,20) ;
		\draw (20,20) -- (40,20) ;
		\draw (40,20) -- (50,25) ;
		\draw[red][densely dashed] (0, 0) -- (110, 0) ;
		\draw[blue][densely dashed] (50,25) -- (60,30) ;
		\draw[blue][densely dashed] (60,30) -- (80,30) ;
		\draw (90,40) -- (210,40) ;
		\draw (80,30) -- (90,40) ;
		\draw[blue][densely dashed] (80,30) -- (90,20) ;
		\draw[blue][densely dashed] (90,20) -- (140,20) ;
		\draw (110,0) -- (130,0) ;
		\draw (130,0)-- (140,5)  ;
		\draw[blue][densely dashed] (140, 5) -- (150,10) ;
		\draw[blue][densely dashed] (150,10) -- (170,10) ;
		\draw (180,20) -- (210,20) ;
		\draw (170,10) -- (180,20) ;
		\draw[blue][densely dashed] (170,10) -- (180,0) ;
		\draw[blue][densely dashed] (180,0) -- (250,0) ;
		\draw (210,30) -- (230,30) ;
		\draw[orange][densely dashed] (230,30) -- (250,30) ;
		\node[opacity=0.2,draw=term,fill=term!10,rounded corners,minimum size=20pt] at (20,20) {};
		\filldraw[black] (20,20) circle (3pt);
		\node[opacity=0.2, tallbox] at (50,30) {};
		\filldraw[black] (50,40) circle (3pt);
		\filldraw[black] (50,25) circle (3pt);
		\node[opacity=0.2,tallbox] at (80,30) {};
		\filldraw[black] (80,30) circle (3pt);
		\node[opacity=0.2,termbox] at (110, 0) {};
		\filldraw[black] (110,0) circle (3pt);
		\node[opacity=0.2,tallbox] at (140,10) {};
		\filldraw[black] (140,20) circle (3pt);
		\filldraw[black] (140,5) circle (3pt);
		\node[opacity=0.2,tallbox] at (170,10) {};
		\filldraw[black] (170,10) circle (3pt);
		\node[tallbox] at (200,30) {$\textbf{\term+}$};
		\node[opacity=0.2,termbox] at (230,30) {};
		\filldraw[black] (230,30) circle (3pt);
		\node (a3) at (253,0) {$\textcolor{blue}{c}$};
		\node (a3) at (255,30) {$\textcolor{orange}{\ \  \out}$};
	\end{tikzpicture}}
	\end{array}$}
\]

\bigskip

Due to the strong type system, we see all the dependencies between operations. 
For example, the second call to $\term \rand$ may safely be made before the first calls to $\term \set$ and $\term \get$. This would be illustrated by `sliding' the $\term{\rand}$ operation along the wire --- something which is forbidden in general in a pre-monoidal category, but is permissible in a monoidal setting. We can also see that the second $\term \set$ is dependent on the first $\term \get$. Indeed, their composition forms a beta-redex, corresponding to the expected interaction of $!$ with $\Delta$ in a Cartesian category.

Note, one can see in the above example that applying the naturality of the diagonal would result in a duplication on the location $\term \rnd$. This relies on $\term \rnd$ being a location with no special status, and in particular, having an associated \textit{push} action, similar to the main location $\term \lambda$. If we were to enforce that $\term \rnd$ was a read-only stream, then this duplication would no longer be possible and the semantics can no longer be Cartesian. Similar issues arise for memory cells, which ought to have depth (at most) one. We leave consideration of the particular properties of encoded effects for future work. 
	\end{example} 

	The following data will make $\ten$ a Cartesian product $\tim$, when considered modulo the following equational theory.
	The \emph{exponent} $\cat{!s->!t}~\defeq~\type{?s>!t}$ will then give Cartesian closure.
	\[
	\begin{array}{@{}r@{~}llr@{\,}l@{}}
	     	        ! :& \cat{!t --> 1}        &=& \term{<?x>}           &: \type{?t\,>}
	\\[2pt]	   \delta :& \cat{!t --> !t * !t}  &=& \term{<?x>.[!x].[!x]} &: \type{?t>!t!t}
	\\[2pt]	    \pi_1 :& \cat{!u * !t --> !t}  &=& \term{<?x>.<?y>.[!x]} &: \type{?t?u>!t}
	\\[2pt]	    \pi_2 :& \cat{!u* !t --> !u}  &=& \term{<?x>.<?y>.[!y]} &: \type{?t?u>!u}
	\\[2pt]    \epsilon   &: \cat{!s * (!s->!t) --> !t}   &=& \term{<z>.z}       &: \type{(?s>!t)?s>!t}
	\\[2pt] \eta       &: \cat{!t --> (!s -> !s * !t)}   &=& \term{<?x>.[[!x]]} &: \type{?t>(?s>!s!t)}
	\\[2pt] \cat{M->N} &: \cat{(!s->!t) --> (!r->!u)}  &=& \term{<z>.[M.z.N]} &: \type{(?s>!t)>(?r>!u)}
	\end{array}
	\]
where $\term{!x: !t}$, $\term{!y: !u}$, $\term{z: ?s > !t}$
\begin{definition}
We define the \textit{equational theory} $\ccceq$ of the FMC to be the least equivalence generated by the following laws, closed under all contexts. 
\begin{align*}
&\textup{Beta:} & \term{[N].<x>.M} &=_\beta \term{M\{N/x\}} && \type{?s > !t}\\ 
&\textup{Interchange:} &\term{<?x>.N.[!x].M} &=_\iota \term{M.<?y>.N.[!y]} && \type{?s?r > !u!t}\\
&\textup{Diagonal:} &\term{M<?y>.[!y].[!y]} &=_\Delta \term{<?x>.[!x].M.[!x].M} && \type{?s > !t!t} \\ 
&\textup{Terminal:} &\term{M.<?y>} &=_! \term{<?x>} && \type{?s > } \\
&\textup{Eta (First-order):} &\term{*} &=_\eta \term{<a>.[a]} && \type{\alpha > \alpha} \\
&\textup{Eta (Higher-order):} &\term{P} &=_\epsilon \term{<?x>.[[!x].P.<z>.z]} && \type{?r > (?s > !t)} 
\end{align*}where $\term{a: \alpha}$, $\term{!x : !s}, \term{!y: !t}, \term{M: ?s > !t}$, $\term{N: ?r > !u}$, $\term{z: ?s > !t}$ and $\term{P: ?r > (?s > !t)}$, and we do not allow abstractions to capture in $\term{M}, \term{N}$, or $\term{P}$. 
\end{definition}
Note that this theory includes beta- and eta-reduction. To see it includes eta-reduction at higher-type, consider the higher-order eta equation with $\term{P} = \term{*}$.\footnote{Following the definition of substitution, given a context $\term{\{}\!-\!\term{\}.M}$ with hole $\term{\{}\!-\!\term{\}}$, the substitution of a term $\term{N}$ into the hole is given by $\term{N;M}$, in particular with $\term{N}$ not binding in $\term{M}$. This means the eta equations together give $\term{<x>.[x].M} = \term{M}$, where $\term x \not\in \fv{\term{M}}$. }
\begin{theorem}\label{thm:sound-ccc-eqns}
Terms modulo $=_{\textsf{eqn}}$ form a strict Cartesian closed category.
\end{theorem}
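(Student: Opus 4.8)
Write $\SCatSimeqn$ for the category at issue: its objects are type vectors $\type{!t}$, its morphisms $\type{!s}\to\type{!t}$ are $\ccceq$-classes of closed terms $\term{M:?s>!t}$, composition is $\term{M;N}$, and the identity is $\term*$. By Lemma~\ref{lem:assoc} these data form a category before quotienting, and since $\ccceq$ is a congruence --- closed under all contexts, in particular under pre- and post-composition with a fixed term --- the quotient $\SCatSimeqn$ is a well-defined category; for the same reason every structural combinator introduced before the statement ($\ten$ on morphisms, $!$, $\delta$, $\pi_1,\pi_2$, $\epsilon$, $\eta$, and the internal-hom action) is a fixed context applied to its term arguments, hence descends to $\SCatSimeqn$. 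Strictness is immediate: on objects $\ten$ is concatenation of type vectors with unit $\type\e$, which is strictly associative and unital, so all associators and unitors are identity morphisms.

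The plan has two halves. \emph{(I) $\ten$ as finite product.} First I would upgrade the strict premonoidal structure of Proposition~\ref{prop:premonoidal} to a symmetric strict monoidal one. The Interchange law $\term{<?x>.N.[!x].M} =_\iota \term{M.<?y>.N.[!y]}$ says exactly that $(\id\ten N)\circ(M\ten\id)=(M\ten\id)\circ(\id\ten N)$, i.e.\ that every morphism is central; this makes $\ten$ a genuine bifunctor, and it is the same law that restores \emph{naturality of the symmetry}, the property which --- as remarked after Proposition~\ref{prop:premonoidal} --- fails in the bare $\beta\eta$-theory; the symmetry's own hexagon and involution laws hold by $\beta$. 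Next I would check that every object $\type{!t}$ carries a cocommutative comonoid structure with comultiplication $\delta$ and counit $!$: counitality, coassociativity and cocommutativity reduce to $\beta$, first-order $\eta$, the Diagonal law and the Terminal law, and the coherence of $\delta,!$ with $\ten$ is the same kind of bookkeeping. The decisive observation is that \emph{naturality} of $\delta$ and $!$ is literally the content of the two remaining equations: the Diagonal law is $\delta\circ M=(M\ten M)\circ\delta$ and the Terminal law is $!\circ M=!$, for every $\term M$. By Fox's characterisation, a symmetric monoidal category carrying such natural, $\ten$-coherent cocommutative comonoids is Cartesian, with $\ten=\tim$ the categorical product, $\type\e$ terminal, and the induced projections equal (modulo the reversal convention) to the listed $\pi_1,\pi_2$; equivalently, one checks the universal property of $\type{!s!t}$ directly, building the pairing from $\delta$ and verifying the two projection laws by $\beta$ and uniqueness by the Diagonal and Terminal laws.

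\emph{(II) Exponentials.} Set $\cat{!s->!t}\defeq\type{?s>!t}$. Evaluation is $\epsilon=\term{<z>.z}$, the unit of the exponential adjunction is $\eta=\term{<?x>.[[!x]]}$, and the transpose of a morphism out of a product with $\type{!s}$ is the term that abstracts its input, hides it under a thunk, then runs the original term --- a term of the same shape as $\eta$, likewise recoverable from the listed $\eta$ and internal-hom data. I would then verify the triangle identities, the non-trivial one reducing to a single $\beta$-step (pushing the thunk and forcing it with $\term{<z>.z}$ restores the hidden stack segment and runs the term), and that the two transpose maps are mutually inverse: one composite is the identity by $\beta$, the other by the higher-order Eta law $\term P =_\epsilon \term{<?x>.[[!x].P.<z>.z]}$, which states precisely that every $\term P$ equals the transpose of its own uncurrying. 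Naturality of the bijection in all arguments follows from $\beta$ together with Interchange. Assembling (I) and (II) yields a strict Cartesian closed category.

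\textbf{The main obstacle} is not conceptual but a matter of disciplined bookkeeping: one must hold fixed the several conventions in play --- $\type{?s}$ as the stack form of the object $\type{!s}$ (top element first), which factor each projection retains, and the order in which duplicated or lifted stack segments reappear --- so that, for instance, the Diagonal law as stated really does instantiate $\delta\circ M=(M\ten M)\circ\delta$ once the appropriate symmetries are inserted, and so that the transpose lands in the right hom-set. The one genuinely delicate step is the passage from premonoidal to symmetric monoidal: since naturality of symmetry provably fails in the $\beta\eta$-theory, it must be confirmed that Interchange, and nothing weaker, is what repairs it --- this is the hinge on which the argument, and in particular the applicability of Fox's characterisation, turns.
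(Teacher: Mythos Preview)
Your proposal is correct but takes a genuinely different route from the paper's own proof. The paper verifies the universal properties of product and exponential by brute force: it defines the pairing $\langle\term N,\term M\rangle\defeq\delta;(\term N\tim\term M)$ and checks $\langle\term N,\term M\rangle;\pi_1\ccceq\term N$ and $\langle\term N,\term M\rangle;\pi_2\ccceq\term M$ by explicit chains of $\beta$, $\eta$, $\iota$, $!$ rewrites, then shows uniqueness $\term P\ccceq\langle\term P;\pi_1,\term P;\pi_2\rangle$ using the Diagonal and Interchange laws directly; for closure it defines $\term M^*\defeq\term{<?x>.[[!x].M]}$ and checks $(\id\tim\term M^*);\epsilon\ccceq\term M$ by $\beta\eta$ and $((\id\tim\term N);\epsilon)^*\ccceq\term N$ by the higher-order Eta law. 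No appeal to Fox, no prior upgrade to a symmetric monoidal category, and no explicit discussion of comonoid structure.

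Your approach via Fox's characterisation is more conceptual: it explains \emph{why} each generator of the theory is present --- Interchange is centrality, Diagonal is naturality of $\delta$, Terminal is naturality of $!$, higher-order Eta is uniqueness of currying --- and packages the product verification into a single citation. The paper's approach is more elementary and self-contained: it never names Fox, never needs to check the comonoid coherence conditions or the symmetry hexagon, and its explicit equational chains make the role of each law visible at the point of use rather than at the level of structure. Your exponential argument and the paper's are essentially the same; the genuine divergence is entirely in part~(I).
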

The proof of the theorem above provides a canonical functor from  the free Cartesian closed category generated over a set of base types $\Sigma$, denoted $\textup{\textsf{CCC}}(\Sigma)$, to the category of FMC terms generated over th same signature, denoted $\SCatSimeqn$. We construct a left-inverse CCC-functor interpreting FMC-terms into $\lambda$-terms (with products and patterns), thus proving completeness.  In general, all constructions also work with also with constants drawn from a monoidal signature, as well as simply with a signature given by a set of base types.

The interpretation $\ccc{-}: \SCatSimeqn  \to \textup{\textsf{CCC}}(\Sigma)$ preserves types.
The top-level arrow $\type{>}$ of FMC-types becomes sequent entailment $\vdash$: the type of the input stack becomes the type of the $\lambda$-context and the type of the output stack becomes the type of the $\lambda$-term.

A \textit{valuation $v$} is a function assigning to each FMC-variable $\term{x:t}$ a $\lambda$-term $v(\term{x}) \in \ccctype{t}$. 
Given a valuation $v$, let $v\{\term{x}\leftarrow t\}$ denote the valuation which assigns $t$ to $\term x$ and otherwise behaves as $v$. 
We write contexts and products as vectors and elide the isomorphisms for associativity and unitality so that concatenation of $s$ and $t$ may be written as $s \cdot t$. 
\begin{definition}\label{defn:interp-to-lambda}
For each valuation $v$, define on the type derivation of $\term{G |- M: ?s > !t}$ an open $\lambda$-term 
$\cccterm{G |- M: ?s > !t}_v$, given by its action on contexts:
\begin{align*}
\ccc{\Gamma \vdash \term{*: ?s > !s}}_v&( s)&=  &\  s \\
\ccc{\Gamma \vdash \term{x: \alpha}}_v&&=  &\  v(\term x) \\
\ccc{\Gamma \vdash \term{<x>.M: r?s > !t}}_v&( s \cdot r)&=  &\ \ccc{\Gamma, \term{x: r} \vdash \term{M: ?s > !t}}_{v\{\term{x} \leftarrow r\}} ( s) \\
\ccc{\Gamma \vdash \term{[N].M: ?s > !t}}_v&( s)&=  &\ \ccc{\Gamma \vdash \term{M: r?s >!t}}_v ( s \cdot \ccc{\Gamma \vdash \term{N: r}}_v) \\
\ccc{\Gamma, \term{x: ?r > !u} \vdash \term{x.M: ?r?s > !t}}_v&( s \cdot  r)&= &\ \ccc{\Gamma,\term{x: ?r > !u} \vdash \term{M: ?u?s > !t}}_v ( s \cdot v(\term{x})( r)) 
\end{align*}  
\end{definition}

	\begin{theorem}
	\label{thm:ccc-eqns}
	Terms modulo $=_{\textsf{eqn}}$ form a complete language for Cartesian closed categories.
	\end{theorem}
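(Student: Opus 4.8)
The plan is to exhibit the interpretation $\ccc{-}$ of Definition~\ref{defn:interp-to-lambda} as a strict CCC-functor $\SCatSimeqn\to\textup{\textsf{CCC}}(\Sigma)$ that is a left inverse to the canonical functor $F\colon\textup{\textsf{CCC}}(\Sigma)\to\SCatSimeqn$ supplied by the proof of Theorem~\ref{thm:sound-ccc-eqns}. Here $\textup{\textsf{CCC}}(\Sigma)$ is the free Cartesian closed category on $\Sigma$, concretely the category of $\lambda$-terms with finite products and pattern-matching modulo $\beta\eta$ and surjective pairing; $F$ exists and is unique among strict CCC-functors fixing base types because $\SCatSimeqn$ is a strict CCC (Theorem~\ref{thm:sound-ccc-eqns}) and $\textup{\textsf{CCC}}(\Sigma)$ is free. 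Soundness of $\ccceq$ for CCC semantics is then Theorem~\ref{thm:sound-ccc-eqns}, and completeness --- that every CCC-valid equation of FMC-terms is already derivable in $\ccceq$ --- is exactly faithfulness of $F$, which follows once $\ccc{-}\circ F=\id_{\textup{\textsf{CCC}}(\Sigma)}$: from $F(f)=F(g)$ one gets $f=\ccc{F(f)}=\ccc{F(g)}=g$.

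First I would establish that $\ccc{-}$ is well defined. By induction on typing derivations one shows that $\cccterm{G |- M:t}_v$ depends neither on the chosen derivation (the two variable clauses cover disjoint cases, base versus higher type) nor on the values of $v$ outside $\fv{\term M}$. Next come two structural lemmas: $\ccc{-}$ sends sequential composition to composition of $\textup{\textsf{CCC}}(\Sigma)$, $\cccterm{N;M}_v=\cccterm{M}_v\circ\cccterm{N}_v$, and substitution to adjustment of the valuation, $\cccterm{\{N/x\}M}_v=\cccterm{M}_{v\{\term{x}\from\cccterm{N}_v\}}$, both by straightforward inductions along the defining clauses. The substance is that $\ccc{-}$ respects $\ccceq$: since the theory is closed under all contexts and $\ccc{-}$ is compositional, it suffices to transport the six generators along $\ccc{-}$ and check each as an identity of the free CCC --- Beta becomes a $\beta$-step, Interchange becomes the commutation of two independent simultaneous $\lambda$-substitutions, Diagonal and Terminal become the naturality of the CCC diagonal $\delta$ and terminal map $!$ (i.e.\ of contraction and weakening), and the two Eta equations become $\eta$-expansion at base type and at function type. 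With the structural lemmas in hand, $\ccc{-}$ is then seen to be a strict CCC-functor: it fixes base types, carries the premonoidal product $\ten$ --- which is a Cartesian product $\tim$ in $\SCatSimeqn$ --- to the product of $\textup{\textsf{CCC}}(\Sigma)$, the exponent $\cat{!s->!t}=\type{?s>!t}$ to the internal hom, and the structure maps $!$, $\delta$, $\pi_1$, $\pi_2$, $\epsilon$, $\eta$ and the action $\cat{M->N}$ to their counterparts, all by unfolding Definition~\ref{defn:interp-to-lambda}.

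The identity $\ccc{-}\circ F=\id$ is then an induction over $\textup{\textsf{CCC}}(\Sigma)$. On objects it is immediate, since both functors fix base types and strictly preserve products and exponentials. On morphisms, both $F$ and $\ccc{-}$ are strict CCC-functors, so it is enough to check the generators of the free CCC --- identities, the structural and universal arrows, and the images of any signature constants --- with compound morphisms following by functoriality; for each generator one reads off $F$ (the FMC term given in the proof of Theorem~\ref{thm:sound-ccc-eqns}), applies $\ccc{-}$, and verifies the resulting $\lambda$-term equals the generator modulo $\beta\eta$. Faithfulness of $F$, hence completeness, follows. One moreover expects the symmetric identity $F\circ\ccc{-}=\id_{\SCatSimeqn}$ to hold --- reassembling a term from its interpretation using the $\eta$-laws of $\ccceq$ --- which would upgrade faithfulness to fullness and thus give an equivalence $\SCatSimeqn\simeq\textup{\textsf{CCC}}(\Sigma)$; this is not needed for completeness as stated.

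The main obstacle I expect is the soundness of $\ccc{-}$ for the Interchange and Diagonal laws. There the FMC idiom of lifting a block of arguments off a stack and restoring it, $\term{<?x>.N.[!x].M}$, must be matched exactly against simultaneous $\lambda$-substitution, and one has to keep careful track of vector variables $\term{!x}$ as product patterns together with the non-capture side conditions --- making the well-definedness of $\ccc{-}$ under $\alpha$-conversion and the two variable rules line up cleanly with these pattern manipulations is the delicate part. The remaining inductions are routine, resting on Lemma~\ref{lem:assoc} and the compositionality built into Definition~\ref{defn:interp-to-lambda}.
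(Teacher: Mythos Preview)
Your proposal is correct and follows essentially the same route as the paper: establish weakening, sequencing, and substitution lemmas for $\ccc{-}$; verify that each generator of $\ccceq$ is sent to an equality in $\textup{\textsf{CCC}}(\Sigma)$; show $\ccc{-}$ is a strict CCC-functor; and deduce faithfulness of $F$ from $\ccc{-}\circ F=\id$. The only difference is in the last step: the paper verifies $\ccc{\termint{\Gamma\vdash M}}=M$ by a direct induction on the typing derivation of the $\lambda$-term $M$ (handling variable, abstraction, application, pattern, and tuple cases in turn), whereas you invoke the universal property of $\textup{\textsf{CCC}}(\Sigma)$ to conclude that the strict CCC-endofunctor $\ccc{-}\circ F$ fixing base types must be the identity --- a cleaner argument once the CCC-functor status of $\ccc{-}$ is in hand.
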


\subsubsection*{Machine Equivalence}\label{sec:machine-equiv}
	A  natural contextual equivalence on terms is given by \emph{machine equivalence}, defined inductively on types below. It resembles the logical relation for program equivalence of Pitts and Stark~\cite{Pitts-Stark-1998}. We write $\result SM$ for $T$ if $(S,\term M)\eval(T,\term *)$ and take here the only constants to be of base type.
	
	\begin{definition}
	Closed terms $\term{M:?s>!t}$ and $\term{M':?s>!t}$ are \emph{machine equivalent at type $\type{?s>!t}$} if for equivalent inputs the machine gives equivalent outputs,
	\[
		\term{M\sim M': ?s>!t}
	~\defeq~
		\forall\, S{\sim}S':\type{!s}.~ \result SM \sim \result{S'}{M'} : \type{!t}
	\]
	where two terms of base type are equivalent if they are equal, and two stacks are equivalent if their terms are pairwise equivalent. Equivalence extends to open terms $\term{G |- M:t}$ and $\term{G |- M':t}$ as follows: $\term{!w:!w |- M\sim M':t}$ if and only if
	\[
		\forall\,W{\sim}W':\type{!w}.~\term{\{W/!w\}M} \sim \term{\{W'/!w\}M'} : \type{t}~.
	\]
	\end{definition}
Machine equivalence validates the equational theory (and in particular the beta and eta equations). Thus we have the following result. 

	\begin{theorem}
	\label{thm:ccc}
	Terms modulo machine equivalence form a Cartesian closed category.
	\end{theorem}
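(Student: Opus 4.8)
The plan is to exhibit $\SCatSim$ as a quotient of the Cartesian closed category $\SCatSimeqn$ of Theorem~\ref{thm:sound-ccc-eqns} by a congruence that respects the Cartesian closed structure, so that the structure descends to the quotient. The substantive step is a \emph{fundamental lemma} for $\sim$: it is an equivalence relation on each (open) hom-set, it is preserved by substitution of $\sim$-equivalent closed terms for a variable, and each term constructor $\term*$, $\term{x.M}$, $\term{[N].M}$, $\term{<x>.M}$ --- and hence sequential composition $\term{N;M}$ --- preserves $\sim$. Since $\sim$ is, as the paper notes, a Pitts--Stark-style logical relation, I would prove this by induction on types, which are structurally decreasing: the relation at a stack type $\type{!t}$ refers back only to the relations at the strictly smaller term types of its entries, and the relation at a function type only to those of its input and output stack types. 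The constructor cases are read off the machine transitions --- the run of $\term{N;M}$ is the run of $\term N$ followed by that of $\term M$; a push deposits the literal argument term on a stack, so pushing $\sim$-equivalent terms onto $\sim$-equivalent stacks yields $\sim$-equivalent stacks; and pop and variable-prefixing have exactly the elimination/Kleisli shape built into the clauses of $\sim$ --- using termination of the machine on typed terms~\cite[Theorem 3.12]{Heijltjes-2022} to keep $\result SM$ well defined on well-typed closed data. As the paper observes, plugging a term $\term N$ into a one-hole context amounts to forming $\term{N;M}$ for a suitable $\term M$, so closure of $\sim$ under arbitrary contexts follows from compatibility with composition together with the constructor cases.

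Next I would check that the equational theory is contained in $\sim$: each of the six schemas (Beta, Interchange, Diagonal, Terminal, first-order and higher-order Eta) is validated by machine equivalence. Beta is immediate, since $\term{[N].<x>.M}$ and $\term{M\{N/x\}}$ pass through a common machine state; the remaining schemas follow by inspecting the machine runs of the two sides started on $\sim$-equivalent input stacks and checking that the resulting outputs are again $\sim$-equivalent, appealing to the congruence lemma wherever proper subterms are involved. By the context-closure part of that lemma, $\sim$ then contains the whole context-closed theory $\ccceq$.

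Combining these, there is a full, identity-on-objects functor $Q\colon\SCatSimeqn\to\SCatSim$: objects (type vectors) are common; every closed term of type $\type{?s>!t}$ occurs in both quotients; and a $\ccceq$-class is contained in a $\sim$-class, so $Q$ is well defined and surjective on morphisms. The terminal object $\e$, the products $\cat{!s*!t}~=~\type{!s!t}$ with their projections, the exponents $\cat{!s->!t}~=~\type{?s>!t}$ with evaluation, and --- crucially --- the pairing and currying operations, being \emph{syntactic} constructions on terms, all descend along $Q$ by the congruence lemma. Their universal properties transfer: existence of mediating morphisms is inherited from $\SCatSimeqn$, and uniqueness \emph{up to $\sim$} follows from the $\eta$-laws for products and exponentials, which are provable in $\SCatSimeqn$ and hence hold under $Q$, together with the fact that pairing and currying preserve $\sim$. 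Concretely, if $h$ mediates for $f$ and $g$ in $\SCatSim$, lifting to $\SCatSimeqn$ gives $h=\langle\pi_1\circ h,\pi_2\circ h\rangle$ there, while $\pi_i\circ h\sim$ the $i$th component; since pairing descends, $h\sim\langle f,g\rangle$, and the exponential case is analogous. Hence $\SCatSim$ is a (strict) Cartesian closed category, with the associator and unitors inherited from $\SCatSimeqn$.

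The main obstacle is the fundamental lemma of the first paragraph: that machine equivalence is a syntactic congruence. The delicate points are the mutual recursion between term- and stack-equivalence at higher type, resolved by the structural induction on types; the behaviour under closing substitutions; and, once one moves from the sequential fragment treated explicitly in the paper to the full FMC, the need for a Kripke-style indexing over the non-main locations in the manner of Pitts and Stark, alongside the appeal to machine termination so that $\result SM$ remains a total function on well-typed inputs. The second and third paragraphs are then essentially bookkeeping about quotients of Cartesian closed categories.
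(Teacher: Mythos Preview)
Your proposal is correct and follows essentially the same route as the paper: prove that machine equivalence is a congruence (the paper's Proposition~\ref{prop:congruence}, your ``fundamental lemma''), check directly that each generating schema of $\ccceq$ is validated by $\sim$ via inspection of machine runs, and conclude that the CCC structure of $\SCatSimeqn$ descends along the resulting quotient. Your categorical framing of the last step (an identity-on-objects full functor along which pairing and currying descend) is slightly more explicit than the paper's, which simply states the inclusion $(\ccceq)\subseteq(\sim)$ and leaves the quotient argument implicit, but the content is the same.
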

In fact, the category given by terms modulo machine equivalence is just the extensional collapse of the category of terms modulo the equational theory. Note that machine equivalence is strictly coarser than the equational theory: a situation analogous to that of the simply-typed $\lambda$-calculus with products, considered modulo an appropriate contextual equivalence.
%
%


\section{Further work}

The current type system for the FMC is \textit{too strong} for practical programming: it captures such intensional (and unobservable) aspects of computation as the number of elements read from a random stream. We aim to investigate more abstract type systems, including dealing with the particular properties of effectful locations. The results in this paper concerning the type system, which is essentially a presentation of intuitionistic logic, the operational intuition and the close denotational relationship with the $\lambda$-calculus make a strong basis for future refinements which account properly for effects. There are several ways in which we already know how to weaken the type system: introducing a recursor, stream types  $\type{t}^*$, which type a stream of terms of type $\type{t}$, and ignoring types on non-main locations. A close link with string diagrams is evident from the results presented, including with the recently introduced higher-order string diagrams for CCCs \cite{DBLP:journals/corr/abs-2107-13433}. This is another avenue for investigation. 


\bibliographystyle{plainurl}
\bibliography{FMC}


\newpage
\appendix



\section{Details for Section~\ref{sec:SN}: Strong Normalization}
\label{A:SN}

We re-state and prove the lemmata for the strong normalization proof in Section~\ref{sec:SN}.

\begin{theorem}[Lemma~\ref{lem:increasing} restatement]
For all terms $\Gamma \vdash \term{M: t}$ and valuations $v\leq w$ over $\Gamma$, we have that:
\begin{enumerate}
	\item $\snterm{M}_v \in \sntype{t}$ 
	\item $\snterm{M}_v \leq_{\sntype{t}} \snterm{M}_w$.
\end{enumerate}
\end{theorem}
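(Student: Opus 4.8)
The plan is to establish both clauses simultaneously by induction on the typing derivation of $\Gamma\vdash\term{M:t}$, with one case for each of the four term constructors, following the four clauses of Definition~\ref{def:sninterp} (recall that $\type t$ is always of the form $\type{?sA>!tA}$, since base types are omitted here). The two clauses have to be proved together: in the abstraction clause $\snterm{a<x>.M}_v(s,a(r))$ the popped value $r$ is supplied to the subterm $\term M$ through the updated valuation $v\{\term x\leftarrow r\}$, so the fact that $\snterm M_v$ is monotone in its argument (clause~1) at the outer level devolves, at the inner level, into the fact that the interpretation is monotone in the valuation (clause~2); the sequential-variable clause $\snterm{x.M}_v$ behaves the same way. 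Hence the induction hypothesis consists of both clauses for all proper subterms, and in each step I use clause~1 of the hypothesis (monotonicity of a sub-interpretation in its argument) together with clause~2 (monotonicity in the valuation), composing them by transitivity of $\leq$.

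Before the induction I would record some one-line monotonicity facts, all immediate from the definitions of the domain constructions: the pairing $\N^\leq\times(-)$, the concatenation $(s,t)$, the singleton injection $a(-)$ and the indexed product $\Pi_{a\in A}$ are each monotone in every argument; $0_{\type t}$ is the least element of $\sntype t$; and the collapse map $\collapse{-}_{\type{?sA>!tA}}\colon f\mapsto\pi_1(f(0_{\type{!sA}}))$ is monotone, since $f\leq f'$ gives $f(0_{\type{!sA}})\leq f'(0_{\type{!sA}})$ and hence $\pi_1(f(0_{\type{!sA}}))\leq\pi_1(f'(0_{\type{!sA}}))$. I also note that $v\leq w$ and $r\leq r'$ imply $v\{\term x\leftarrow r\}\leq w\{\term x\leftarrow r'\}$ as valuations on $\Gamma,\term{x:r}$, and that $v(\term x)$ is always a monotone function, being an element of a function domain.

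The cases are then routine. For $\term{*}$ the interpretation $(t)\mapsto(0,t)$ is constant in the first component and the identity in the second, hence monotone and independent of $v$. For $\term{a<x>.M}$, from $v\leq w$ and $(s,a(r))\leq(s',a(r'))$ we get $s\leq s'$ and $r\leq r'$, hence $v\{\term x\leftarrow r\}\leq w\{\term x\leftarrow r'\}$; clause~1 of the hypothesis for $\term M$ gives $\snterm M_{v\{\term x\leftarrow r\}}(s)\leq\snterm M_{v\{\term x\leftarrow r\}}(s')$, clause~2 gives $\snterm M_{v\{\term x\leftarrow r\}}(s')\leq\snterm M_{w\{\term x\leftarrow r'\}}(s')$, and transitivity then yields $(1{+}m,t)\leq(1{+}m',t')$. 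For $\term{[N]a.M}$, the hypothesis for $\term N$ gives $\snterm N_v\in\sntype r$ and $\snterm N_v\leq\snterm N_w$, whence $f\leq f'$, $a(f)\leq a(f')$ and $\collapse f\leq\collapse{f'}$; clauses~1 and~2 for $\term M$, chained as above, give $\snterm M_v(s,a(f))\leq\snterm M_w(s',a(f'))$, and adding $1+\collapse f\leq 1+\collapse{f'}$ to the first component closes the case. For $\term{x.M}$, monotonicity of $v(\term x)$ together with $v\leq w$ gives $v(\term x)(r)\leq w(\term x)(r')$, i.e.\ $(n,u)\leq(n',u')$; feeding $(s,u)\leq(s',u')$ and $v\leq w$ to the hypothesis for $\term M$ gives $(m,t)\leq(m',t')$, so $(n{+}m,t)\leq(n'{+}m',t')$. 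In every case well-definedness of the produced tuple (the membership statement of clause~1) follows from that of the sub-interpretations plus closure of the codomains under the monotone operations used. I expect the only point requiring care to be the need to prove the two clauses by a single induction, as forced by the abstraction and sequential-variable cases; beyond that the argument is the straightforward bookkeeping the paper alludes to.
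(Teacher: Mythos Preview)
Your proposal is correct and follows essentially the same route as the paper: a simultaneous induction on the typing derivation establishing both clauses together, with the abstraction case motivating the need for simultaneity. The only cosmetic difference is that the paper treats clauses~(1) and~(2) separately within each case (fixing the valuation for~(1), fixing the argument for~(2)), whereas you establish the combined inequality $\snterm{M}_v(s)\leq\snterm{M}_w(s')$ for $v\leq w$ and $s\leq s'$ in one pass and recover the two clauses as special cases; this is arguably cleaner but not materially different.
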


\begin{proof}
We prove both statements simultaneously by induction on the type derivation of $\term{G |- M: t}$.
Recall that the first item is equivalent to claiming $\snterm{M}_v$ is monotonic. When we write `increasing' here, we mean non-strictly.
\begin{itemize}

	\item
For the base case
\begin{align*}
	\Gamma \vdash \term{M} \equiv \term{*: ?tA > !tA},
\end{align*}
\begin{enumerate}
	\item
Observe that $\snterm{*}_v$(t) = (0, t) and so is monotonic.
	\item
Observe that $\snterm{*}_v
  = \snterm{*}_w$
for every $v$ and $w$ over $\Gamma$.
\end{enumerate}
\item
For the abstraction case, where $\term{x: r}$ and
\begin{align*}
	\Gamma \vdash \term{M} \equiv \term{a<x>.M': a(r)\,?sA > !tA},
\end{align*}
\begin{enumerate}

	\item
We must show the function
\begin{align*}
	\snterm{a<x>.M}_v(s, a(r)) = (1+n, t) \textup{ where } (\concat nt) = \snterm{M}_{v\{\term{x} \leftarrow r\}} (s)\
\end{align*} 
is monotonic. Indeed, we have that $(s, a(r)) \leq_{\sntype{!sA\,a(r)}} (s', a(r'))$ implies, by inductive hypothesis (ii) on $\term{M'}$, that
\begin{align*}
 \snterm{M'}_{v\{\term{x} \leftarrow r\}} \leq_{\sntype{?sA > !tA}} \snterm{M'}_{v\{\term{x} \leftarrow r'\}} \ ,
\end{align*}
and then by inductive hypothesis $(i)$ on $\term{M'}$, this implies
\begin{align*}
 \snterm{M'}_{v\{\term{x} \leftarrow r\}}(s) \leq_{\sntype{?sA > !tA}} \sn{\term{M'}}_{v\{\term{x} \leftarrow r'\}}(s') \ .
\end{align*}
Thus, increasing the input of the function increases its output.
\item
We wish to show, for arbitrary $(s, a(r)) \in \sntype{!sA\,a(r)}$, that
\[
v \leq w \quad \textup{implies} \quad \snterm{a<x>.M'}_v(s, a(r)) \leq_{\N \times \sntype{!tA}} \snterm{a<x>.M'}_w(s, a(r))\ .
\]
Unfolding definitions, we see we must show that $v \leq w$ implies
\begin{align*}  (1+n, t) &\leq_{\N \times \sntype{!tA}} (1+n', t')
\\  \textup{where }  (\concat nt) &= \snterm{M}_{v\{\term{x} \leftarrow r\}} (s)\ ,
\\  \textup{and }   (\concat n't') &= \snterm{M}_{w\{\term{x} \leftarrow r\}} (s)\
\end{align*}
By assumption, we have that, for all $r \in \sntype{r}$, $v\{\term{x} \leftarrow r\} \leq w\{\term{x} \leftarrow r\}$. Applying inductive hypothesis $(ii)$ on $\term{M'}$, we thus have that
\[
	\snterm{M'}_{v\{\term{x} \leftarrow r\}} \leq_{\sntype{?sA > !tA}} \snterm{M'}_{w\{\term{x} \leftarrow r\}}\ .
\]
and consequently $(n, t) \leq_{\N \times \sntype{!tA}} (n', t')$.
Indeed, the required result
immediately follows.
\end{enumerate}

	\item
For the application case, with $\Gamma \vdash \term{N: r}$ and
\begin{align*}
	\Gamma \vdash \term{M} \equiv \term{[N]a.M': ?sA > !tA},
\end{align*}
\begin{enumerate}
\item
We have to show that
\begin{align*}
 	\snterm{[N]a.M'}_v (s)  &= (1+n + \collapse{\snterm{N}_v}, t)
	\\		 \textup{where } (n, t) &= \snterm{M'}_v (s, a(\snterm{N}_v))\ .
\end{align*}
is monotonic. Applying inductive hypothesis $(i)$ on $\term{M'}$, we achieve that  $\snterm{M'}_v$ is monotonic. Thus, increasing the input $s$ increases $(n, t)$, which therefore increases the output of the entire function.
\item
We have to show, for arbitrary $s \in \sntype{!sA}$, that
\[
v \leq w \qquad \textup{implies} \qquad \snterm{[N]a.M'}_v (s) \leq_{\N \times \sntype{!tA}} \snterm{[N]a.M'}_w (s)
\]
Unfolding definitions, we see we must show that $v \leq w$ implies
\begin{align*}
 	(1+n + \collapse{\snterm{N}_v}, t) & \leq_{\N \times \sntype{!tA}} (1+n' + \collapse{\snterm{N}_w}, t')
	\\		 \textup{where } (n, t) &= \snterm{M'}_w (s, a(\snterm{N}_v))\ .
	\\		 \textup{and } (n', t') &= \snterm{M'}_w (s, a(\snterm{N}_w))\ .
\end{align*}
Applying inductive hypothesis $(ii)$ on $\term{M'}$ and $\term{N}$, we achieve
\[
	\sn{\term{M'}}_v \leq_{\sntype{a(r)\,?sA > !tA}} \sn{\term{M'}}_w \quad \textup{and} \quad \sn{\term{N}}_v \leq_{\sntype{r}} \sn{\term{N}}_w.
\]
The conjunction of both statements implies that
$(n, t) \leq_{\N \times \sntype{!t}} (n', t')$ .
We aditionaly observe that $\collapse{\snterm{N}_v} \leq_{\N} \collapse{\snterm{N}_w}$, and the result follows.
\end{enumerate}
\item
For the variable case,
\begin{align*}
	\Gamma, \term{x: ?rA > !uA} \vdash \term{M} \equiv \term{x.M': ?rA?sA > !tA},
\end{align*}
\begin{enumerate}
\item
We have to show that
\begin{align*}
 \snterm{x.M'}_v(\concat sr) &= (\concat{n+m}t)
\\       \textup{where } (\concat mt) {}&=\snterm{M'}_v (\concat su)
    \\   \textup{and } (\concat nu) {}&= v(\term{x})(r)\ .
\end{align*}
is monotonic.
Observe that $v(\term{x}) \in \sn{\type{?rA > !uA}}$ and so is monotonic. Thus, increasing the input $(s, r)$ increases $(n, u)$. We have from  inductive hypothesis (i) on $\term{M'}$ that  $\sn{ \term{M'}}$ is monotonic.  Altogether, this results in an increase in $(m, t)$ , which therefore increases the output of the entire function.

	\item
We wish to show, for arbitrary $(s, r) \in \type{!sA!rA}$, that
\[
	v \leq w \qquad \textup{implies} \qquad \snterm{x.M'}_v(s,r) \leq_{\N \times \sntype{!tA}} \snterm{x.M'}_w(s,r)\ .
\]
Unfolding definitions, we see we must show that $v \leq w$ implies
\begin{align*}
 (\concat{n+m}t) &\leq_{\N \times \sntype{!tA}} (\concat{n'+m'}t')
\\       \textup{where } (\concat mt) {}&=\snterm{M'}_v (\concat su)
    \\   \textup{and   }\ \  (\concat nu) {}&= v(\term{x})(r)
\\       \textup{and } (\concat m't') {}&=\snterm{M'}_w (\concat su')
    \\   \textup{and } (\concat n'u') {}&= w(\term{x})(r)\ .
\end{align*}
By assumption, we have $v(\term{x}) \leq_{\sntype{?rA > !uA}} w(\term{x})$, which implies that
$(n, u) \leq_{\N \times \sntype{!uA}} (n', u')$. Applying inductive hypothesis (ii) on $\term{M'}$, we have that
\[
	\sn{\term{M'}}_v \leq_{\sntype{?uA?sA>!tA}}  \sn{\term{M'}}_w.
\]
Altogether, this implies $(m, t) \leq_{\N \times \sntype{!tA}} (m', t')$.
Thus, we achieve the required result.
\qedhere
\end{enumerate}
\end{itemize}
\end{proof}


We add the following \emph{Weakening Lemma}.

\begin{lemma}[Weakening]
\label{lem:weakening}
For all terms $\term{G |- M: t}$, valuations $v$ on $\Gamma$, and $s \in \sntype{s}$, we have that
\begin{align*}
	\snterm{G |- M:t}_{v} = \snterm{G , x:s |- M:t}_{v\{\term{x} \leftarrow s\}}
\end{align*}
where $\term{x} \notin \fv{\term{M}}$.
\end{lemma}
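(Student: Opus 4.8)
The plan is to prove the Weakening Lemma by a routine induction on the typing derivation of $\term{G |- M:t}$ (equivalently, on the structure of $\term M$): for every valuation $v$ on $\Gamma$ I would establish the equality of functionals $\snterm{G |- M:t}_v = \snterm{G , x:s |- M:t}_{v\{\term x\leftarrow s\}}$ by comparing their values on an arbitrary argument. The guiding observation is the one implicit in the statement: in Definition~\ref{def:sninterp} the valuation is consulted only at the free variables of the term being interpreted, so extending $v$ at a variable $\term x\notin\fv{\term M}$ cannot change the outcome. Well-definedness of each functional that appears is already supplied by Lemma~\ref{lem:increasing}, so beyond unfolding the defining equations and applying the induction hypothesis nothing extra is required.

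The leaf and composite cases go through directly. For $\term M\equiv\term *$ the clause $\snterm{*}_v(t)=(0,t)$ does not mention $v$ at all, so the two functionals are literally identical. For the variable case $\term M\equiv\term{y.M'}$, from $\term x\notin\fv{\term{y.M'}}=\{\term y\}\cup\fv{\term{M'}}$ we get $\term x\neq\term y$ and $\term x\notin\fv{\term{M'}}$; hence $v\{\term x\leftarrow s\}(\term y)=v(\term y)$, the induction hypothesis on $\term{M'}$ gives $\snterm{M'}_v=\snterm{M'}_{v\{\term x\leftarrow s\}}$, and substituting these two facts into the defining equation for $\snterm{y.M'}_v$ — whose only appeals to the valuation are the lookup $v(\term y)$ and the recursive call on $\term{M'}$ — yields equality. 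For the application case $\term M\equiv\term{[N]a.M'}$ we have $\fv{\term{[N]a.M'}}=\fv{\term N}\cup\fv{\term{M'}}$, so $\term x$ is free in neither; applying the induction hypothesis separately to $\term N$ and $\term{M'}$ (so in particular $\snterm{N}_v=\snterm{N}_{v\{\term x\leftarrow s\}}$, hence also $\collapse{\snterm{N}_v}=\collapse{\snterm{N}_{v\{\term x\leftarrow s\}}}$) and substituting into $\snterm{[N]a.M'}_v(s)=(1\+m\+\collapse{\snterm{N}_v},t)$ with $(m,t)=\snterm{M'}_v(s,a(\snterm{N}_v))$ closes the case.

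The only case warranting a moment's care is the abstraction $\term M\equiv\term{a<y>.M'}$. By $\alpha$-conversion we may assume the bound variable $\term y$ differs from the weakening variable $\term x$; then $\term x\notin\fv{\term{a<y>.M'}}=\fv{\term{M'}}\setminus\{\term y\}$ together with $\term x\neq\term y$ gives $\term x\notin\fv{\term{M'}}$. In the defining equation $\snterm{a<y>.M'}_v(s,a(r))=(1\+m,t)$ with $(m,t)=\snterm{M'}_{v\{\term y\leftarrow r\}}(s)$, the valuation on the weakened side is $v\{\term x\leftarrow s\}\{\term y\leftarrow r\}$, which equals $(v\{\term y\leftarrow r\})\{\term x\leftarrow s\}$ because $\term x\neq\term y$; the induction hypothesis applied to $\term{M'}$ in the context $\Gamma,\term{y:r}$ (with weakening variable $\term x\notin\fv{\term{M'}}$) then finishes the case.

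I do not anticipate a genuine obstacle: the whole argument is bookkeeping about free-variable occurrences together with the commutativity of valuation updates at distinct variables, the only subtle point being the standard $\alpha$-renaming step in the abstraction case. This lemma is precisely what is needed downstream in order to argue that subterms contributing no free variable — for instance those on the discarding side of a reduction — can be ignored by the interpretation.
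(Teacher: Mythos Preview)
Your proposal is correct and takes exactly the same approach as the paper: induction on the typing derivation of $\term{G |- M:t}$. The paper's own proof consists of the single sentence ``Induction on the type derivation of $\term{G |- M: t}$,'' so your case analysis simply spells out in full what the paper leaves implicit.
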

\begin{proof}
Induction on the type derivation of $\term{G |- M: t}$.
\end{proof}


\begin{theorem}[Lemma~\ref{lem:sequencing} restatement]
For terms $\term{G |- M: ?sA?tA > !uA}$ and $\term{G |- N: ?rA > !sA}$ and valuation $v$ on $\Gamma$,
\begin{align*}
	\snterm{N;M}_v(t,r) = (i+j,u)
	\quad\text{where}\quad	\snterm{N}_v  (r)=(i,s) 
	\quad\text{and}  \quad  \snterm{M}_v(t,s)=(j,u)~.
\end{align*}
\end{theorem}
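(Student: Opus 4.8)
The plan is a structural induction on $\term N$ --- equivalently, an induction on the type derivation of $\term{G |- N : ?rA > !sA}$ --- following the fact that composition $\term{N;M}$ recurses on its first argument. In each case I unfold both $\snterm{N;M}_v$ and $\snterm N_v$ according to Definition~\ref{def:sninterp} and check that the two components match; the guiding observation is that the outermost constructor of $\term N$ is carried through composition unchanged and contributes the \emph{same} additive summand to the natural-number component on both sides. The base case $\term N\equiv\term*$ is immediate: composition gives $\term{*;M}=\term M$, the typing rule for $\term*$ forces $\type{!rA}=\type{!sA}$, so $\snterm{*}_v(r)=(0,r)$, whence $i=0$ and $s=r$, and $\snterm{*;M}_v(t,r)=\snterm M_v(t,r)=(j,u)=(0+j,u)$.

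For $\term N\equiv\term{x.N'}$ we use $\term{x.N';M}=\term{x.(N';M)}$ and the variable clause of Definition~\ref{def:sninterp}: the summand read off from $v(\term x)$ is the same on both sides, and the inductive hypothesis applied to $\term{N'}$ relates $\snterm{N';M}_v$ to $\snterm{N'}_v$ and $\snterm M_v$, so the natural-number components add up to $i+j$. The case $\term N\equiv\term{[P]a.N'}$ is analogous, with $\term{[P]a.N';M}=\term{[P]a.(N';M)}$: the application clause contributes $1+\collapse{\snterm P_v}$ on both sides ($\term P$ is untouched by composition, so $\snterm P_v$ is literally the same term), and the inductive hypothesis on $\term{N'}$ --- applied with the state extended by the singleton $a(\snterm P_v)$ --- handles the remainder. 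In both cases the only thing to watch is which segment of the input stack and which location-indexed component of the state each subterm consumes, but the typing discipline fixes this, exactly as in the proof of Lemma~\ref{lem:increasing}.

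The delicate case is the abstraction $\term N\equiv\term{a<x>.N'}$, where $\term{a<x>.N';M}=\term{a<x>.(N';M)}$ comes with the side condition $\term x\notin\fv{\term M}$. The abstraction clause runs $\term{N'}$ (resp.\ $\term{N';M}$) under the valuation $v\{\term x\leftarrow\xi\}$, with $\xi$ the value popped for $\term x$; the inductive hypothesis on $\term{N'}$ then yields a term $\snterm M_{v\{\term x\leftarrow\xi\}}(t,s)$, which must be identified with the $\snterm M_v(t,s)$ appearing on the right-hand side of the claim. This is precisely the content of the Weakening Lemma (Lemma~\ref{lem:weakening}): since $\term x\notin\fv{\term M}$ we have $\snterm M_{v\{\term x\leftarrow\xi\}}=\snterm M_v$. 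With this identification the leading $1$ on both sides and the two natural-number contributions coming from $\term{N'}$ and $\term M$ combine correctly, closing the case. I expect this appeal to Weakening to be the only point where any genuine care is needed; the rest is routine unfolding and arithmetic.
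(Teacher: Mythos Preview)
Your proposal is correct and matches the paper's own proof essentially step for step: both induct on the type derivation of $\term N$, treat the four constructors in the same way, and single out the abstraction case as the one needing the Weakening Lemma (with the same $\alpha$-renaming justification for $\term x\notin\fv{\term M}$). There is nothing to add.
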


\begin{proof}
We proceed by induction on the type derivation of $\term{G |- N:t}$.
\begin{itemize}

	\item
Case $\term{G |- * : ?sA > !sA}$. Given $s\in\sntype{!sA}$ and $t\in\sntype{!tA}$, let $\snterm{M}_v(t,s)=(m,u)$. Since $\snterm{*}_v(s)= (0,s)$, we need to show that $\snterm{*;M}_v(t,s)=(0+m,u)$, but this is immediate since $\term{*;M}=\term M$.

	\item
Case $\term{G , x: ?nA > !pA |- x.N : ?nA\,?rA > !sA}$ where $\term{G ,x: ?nA > !pA |- N : ?pA\,?rA > !sA}$. Given $n\in\sntype{!nA}$, $r\in\sntype{!rA}$ and $t\in\sntype{!tA}$, let 
\[
	v(\term x)(n)=(i,p) \qquad \snterm N_v(r,p) = (j,s) \qquad \snterm M_v(t,s) = (k,u)
\]
so that $\snterm{x.N}_v(r,n)=(i+j,s)$. We need to show that $\snterm{x.N;M}_v(t,r,n)=(i+j+k,u)$. By definition, $\term{x.N;M}=\term{x.(N;M)}$. The inductive hypothesis gives $\snterm{N;M}_v(t,r,p)=(j+k,u)$, so that $\snterm{x.(N;M)}_v(t,r,n)=(i+j+k,u)$ as required.

	\item
Case $\term{G |- [P]a.N : ?rA > !sA}$ where $\term{G |- P : p}$ and $\term{G |- N : a(p)\,?rA > !sA}$. Given $r\in\sntype{!rA}$ and $t\in\sntype{!tA}$, let 
\[
	\snterm P_v=p \qquad \snterm N_v(r,p)=(i,s) \qquad \snterm M_v(t,s) = (j,u)
\]
so that $\snterm{[P]a.N}_v(r)=(\floor p+1+i,s)$. We need to show that $\snterm{[P]a.N;M}_v(t,r)=(\floor p+1+i+j)$. By definition, $\term{[P]a.N;M}=\term{[P]a.(N;M)}$. The inductive hypothesis gives $\snterm{N;M}_v(t,r,p)=(i+j,u)$, so that $\snterm{[P]a.(N;M)}_v(t,r)=(\floor p+1+i+j)$, as required.

	\item
Case $\term{G |- a<x>.N : a(p)\,?rA > !sA}$ where $\term{G , x:p |- N: ?rA > !sA}$. Given $p\in\sntype{p}$, $r\in\sntype{!rA}$ and $t\in\sntype{!tA}$, let
\[
	\snterm{N}_{v\{\term x\from p\}}(r)=(i,s) \qquad \snterm{M}_v(t,s)=(j,u)
\]
so that $\snterm{a<x>.N}_v(r,p)=(1+i,s)$. We need to show that $\snterm{a<x>.N;M}_v(t,r,p)=(1+i+j,u)$. By definition, $\term{a<x>.N;M}=\term{a<x>.(N;M)}$. We assume $\term x$ is not free in $\term M$ (otherwise, $\alpha$-rename it in $\term{a<x>.N}$); then by Lemma~\ref{lem:weakening} also $\snterm{M}_{v\{\term x\from p\}}(t,s)=(j,u)$. The inductive hypothesis gives $\snterm{N;M}_{v\term x\from p\}}(t,r)=(i+j,u)$, so that $\snterm{a<x>.(N;M)}_v(t,r,p)=(1+i+j,u)$, as required.
\qedhere
\end{itemize}
\end{proof}


\begin{theorem}[Lemma~\ref{lem:substitution} restatement]
For terms $\term{G |- N:s}$ and $\term{G , x:s |- M:t}$ and valuation $v$ on $\Gamma$,
\[
	\snterm{\{N/x\}M}_v = \snterm M_{v\{\term x \from \snterm N_v\}}~.
\]
\end{theorem}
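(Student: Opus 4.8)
The plan is to prove the identity by induction on the typing derivation of $\term{G , x:s |- M:t}$, equivalently by structural induction on $\term M$: on each side one unfolds the appropriate clause of the definition of substitution and of the interpretation (Definition~\ref{def:sninterp}), applies the induction hypothesis to the immediate subterms, and checks that the step counts and output stacks agree. Throughout I write $f = \snterm N_v$, which lies in $\sntype{s}$ by Lemma~\ref{lem:increasing}, and $w = v\{\term x \from f\}$. Since base types are omitted in this section, the cases for $\term M$ are exactly $\term *$, $\term{y.M'}$, $\term{[P]a.M'}$ and $\term{a<y>.M'}$, where the leading variable $\term y$ may or may not coincide with $\term x$.

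I would dispatch the routine cases first. For $\term M \equiv \term *$ there is nothing to do, since $\snterm{*}_v$ does not depend on the valuation. For $\term M \equiv \term{[P]a.M'}$, substitution distributes as $\term{[\{N/x\}P]a.\{N/x\}M'}$; the induction hypothesis on $\term P$ shows that both sides push the same functional onto the stack at location $\term a$, after which the induction hypothesis on $\term{M'}$ finishes the case. For $\term M \equiv \term{y.M'}$ with $\term y \neq \term x$, the leading ``call'' to $\term y$ uses $w(\term y) = v(\term y)$ and so produces the same pair on both sides, after which the induction hypothesis on $\term{M'}$ closes the case; the shadowing case $\term M \equiv \term{a<x>.M'}$ is immediate, as both interpretations overwrite $\term x$ with the popped argument on the first step and the valuations agree from then on. For $\term M \equiv \term{a<y>.M'}$ with $\term y \neq \term x$, I would first $\alpha$-rename so that $\term y \notin \fv{\term N}$; then for each popped argument $r$, the induction hypothesis on $\term{M'}$ at valuation $v\{\term y \from r\}$ gives $\snterm{\{N/x\}M'}_{v\{\term y \from r\}} = \snterm{M'}_{v\{\term y \from r\}\{\term x \from \snterm N_{v\{\term y \from r\}}\}}$, and I would use the Weakening Lemma~\ref{lem:weakening} (valid since $\term y \notin \fv{\term N}$) to replace $\snterm N_{v\{\term y \from r\}}$ by $f$, together with commutativity of updates to distinct variables, rewriting the right-hand side as $\snterm{M'}_{w\{\term y \from r\}}$ --- which is exactly $\snterm{a<y>.M'}_w$ evaluated at $r$.

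The one genuinely load-bearing case is the leading-variable case $\term M \equiv \term{x.M'}$, where substitution unfolds into a \emph{composition}: $\term{\{N/x\}x.M'} = \term{N;\{N/x\}M'}$. The key move is to invoke Lemma~\ref{lem:sequencing}, which computes the interpretation of $\term{N;\{N/x\}M'}$ by first running $\term N$ and feeding its output stack to $\term{\{N/x\}M'}$, and then to observe that this is precisely how the variable clause of Definition~\ref{def:sninterp} behaves: it ``calls'' $w(\term x) = \snterm N_v$ on the relevant portion of the incoming stack and threads the result into $\term{M'}$. Matching the two, the leading summand $i$ produced by Lemma~\ref{lem:sequencing} equals the summand $n = \pi_1(w(\term x)(r))$ of the variable clause by definition of $w$, and the remaining summand and output agree by the induction hypothesis on $\term{M'}$, so the two pairs coincide. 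This alignment of the sequencing-lemma bookkeeping with the variable-interpretation clause is the main --- and essentially the only --- obstacle; the other point requiring care, tracking the context in which $\term N$ is interpreted underneath an abstraction, is exactly what the Weakening Lemma is for.
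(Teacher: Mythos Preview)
Your proposal is correct and follows essentially the same approach as the paper: induction on the typing derivation of $\term M$, with the Sequencing Lemma handling the key case $\term{x.M'}$ and the Weakening Lemma handling the abstraction case where the bound variable differs from $\term x$. The only cosmetic difference is that you explicitly treat the shadowing case $\term{a<x>.M'}$, whereas the paper silently dispenses with it via $\alpha$-conversion (contexts being finite functions).
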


\begin{proof}
We proceed by induction on the type derivation of $\Gamma, \term{x:w} \vdash \term{M: t}$.
\begin{itemize}
\item For the base case,
\begin{align*}
	\Gamma, \term{x:w} \vdash \term{M} \equiv \term{*: ?t_A > !t_A},
\end{align*}
observe that $\snterm{D |- *}_v$ is independent of $v$ and of $\Delta$. Thus, we have
\begin{align*}
\snterm{G |- \{N/x\}*}_v (t) = \snterm{G |- *}_v(t) = (0,t) = \snterm{G, x |- *}_{v\{\term x \leftarrow \snterm{N}_v\}}(t)\ ,
\end{align*}
as required.
\item
For the abstraction case,
\begin{align*}
	\Gamma, \term{x:w} \vdash \term{M} \equiv \term{a<y>.M': `a(r)?s_A > !t_A},
\end{align*}
we need to show that
\begin{align*}
\snterm{\{N/x\}a<y>.M'}_v(s, a(r)) = \snterm{a<y>.M'}_{v\{\term x \leftarrow \snterm{N}_v\}}(s, a(r)) \ ,
\end{align*}
where $\term x \neq \term y$. For the left-hand side, we have
\begin{align*}
 \sn{\term{\{N/x\}a<y>.M'}}_{v} ({s}, a(r)) &= \\
\sn{\term{a<y>.\{N/x\}M'}}_{v} ({s}, a(r)) & = (1+n, {t})\\
\textup{where } (n, {t}) &= \sn{\term{\{N/x\}M'}}_{v\{\term{y} \leftarrow r\}} ({s})\ ,
\end{align*}
and, for the right-hand side, we have
\begin{align*}
\sn{\term{a<y>.M'}}_{v\{\term x \leftarrow \snterm{N}_v\}} ({s}, a(r)) &= (1+n', {t'}) \\
\textup{where } (n', {t'}) &= \sn{\term{M'}}_{u\{\term x \leftarrow \snterm{N}_v\}} ({s}) \ .
\end{align*}
Let $u = v\{\term{y} \leftarrow r\}$.
Applying the inductive hypothesis on $\term{M'}$ gives the first equality below.
\[
	\snterm{\{N/x\}M}_{v\{\term y \leftarrow r\}} = \snterm{M'}_{u\{\term x \leftarrow \snterm{N}_u\}} =  \snterm{M'}_{u\{\term x \leftarrow \snterm{N}_v\}}
\]
By Lemma~\ref{lem:weakening} (the Weakening Lemma),
and since $\term y \not\in \fv{\term{N}}$, we have $\snterm{N}_u = \snterm{N}_v$. This gives the second equality.
Thus, we have that $(n, t) = (n', t')$ and the required result follows.
\item
For the application case, where $\Gamma, \term{x: c} \vdash  \term{P: r}$ and
\begin{align*}
	\Gamma, \term{x:w} \vdash \term{M} \equiv \term{[P]a.M': ?s_A > !t_A},
\end{align*}
we need to show that
\[
	\snterm{\{N/x\}[P]a.M'}_v(s) = \snterm{[P]a.M'}_{v\{\term x \leftarrow \snterm{N}_v\}}(s)\ .
\]
For the left-hand side, we have
\begin{align*}
 \sn{\term{\{N/x\}[P]a.M'}}_{v}({s}) &= \\
 \sn{\term{[\{N/x\}P]a.\{N/x\}M'}}_{v}({s}) &= (1+n+\collapse{\snterm{P\{N/x\}}_v}, t)  \\
 \textup{where } (n, t) &=  \sn{\term{\{N/x\}M'}}_{v}  ({s}, a(\sn{\term{\{N/x\}P}}_{v})) \ ,
\end{align*}
and, for the right-hand side, we have
\begin{align*}
\sn{\term{[P]a.M'}}_{v\{\term x \leftarrow \snterm{N}_v\}}({s}) &= (1+n'+\collapse{\snterm{P}_{v\{\term x \leftarrow \snterm{N}_v\}}}, t') \\
 \textup{where } (n', t') &= \sn{\term{M'}}_{v\{\term x \leftarrow \snterm{N}_v\}}  ({s}, a(\sn{ \term{P}}_{v\{\term x \leftarrow \snterm{N}_v\}})) \ .
\end{align*}
Applying the inductive hypothesis on  $\term{M'}$ and $\term{P}$ achieves
\begin{align*}
 \snterm{\{N/x\}M}_{v} &= \snterm{M'}_{v\{\term x \leftarrow \snterm{N}_v\}} \\
 \snterm{\{N/x\}P}_{v} &= \snterm{P}_{v\{\term x \leftarrow \snterm{N}_v\}} \  .
\end{align*}
Thus, we have that $(n, t) = (n', t')$ and indeed $1+n'+\collapse{\snterm{\{N/x\}P}_{v}} = 1+n'+\collapse{\snterm{P}_{v\{\term x \leftarrow \snterm{N}_v\}}}$ as required.
\item
For the variable case, where $\term{y} \neq \term{x}$,
\begin{align*}
	\Gamma, \term{x:w}, \term{y: ?r_A > !u_A} \vdash \term{M} \equiv \term{y.M': ?r_A?s_A> !t_A},
\end{align*}
 we need to show that
\[
	\snterm{\{N/x\}y.M}_v(s, r) = \snterm{y.M}_{v\{\term x \leftarrow \snterm{N}_v\}}(s,r)\ .
\]
For the left-hand side,
\begin{align*}
 \sn{\term{\{N/x\}y.M'}}_{v}({s}, {r}) &= \\
 \sn{\term{y.\{N/x\}M'}}_{v}({s}, {r}) &= (n+m, {t}) \\
  \textup{where } (m, {t}) &= \sn{\term{\{N/x\}M'}}_{v} ({s}, {u}) \\
  \textup{and }  (n, {u}) &= v(\term{y})({r})\ ,
\end{align*}
and for the right-hand side,
\begin{align*}
\sn{\term{y.M'}}_{v\{\term x \leftarrow \snterm{N}_v\}} ({s}, {r}) &= (n'+m', {t'}) \\
  \textup{where } (m', {t'}) &= \sn{\term{M'}}_{v\{\term x \leftarrow \snterm{N}_v\}} ({s}, {u'}) \\
   \textup{and } (n', {u'}) &= v\{\term x \leftarrow \snterm{N}_v\}(\term{y})({r})  \ .
\end{align*}
Observe that $v\{\term x \leftarrow \snterm{N}_v\}(\term{y}) = v(\term{y})$, which implies  $(n, u) = (n', u')$.
Application of the inductive hypothesis on $\term{M'}$ achieves
\begin{align*}
 \snterm{\{N/x\}M}_{v} &= \snterm{M'}_{v\{\term x \leftarrow \snterm{N}_v\}} \ .
\end{align*}
Thus, we have $(m, t) = (m', t')$ and the result follows.
\item
For the variable case,
\begin{align*}
	\Gamma, \term{x:?r_A > !u_A} \vdash \term{M} \equiv \term{x.M': ?r_A?s_A > !t_A},
\end{align*}
we need to show that
\[
	\snterm{\{N/x\}x.M}_v(s, r) = \snterm{x.M}_{v\{\term x \leftarrow \snterm{N}_v\}}(s,r)\ .
\]
For the left-hand side, by application of the Sequencing Lemma \ref{lem:sequencing}, we have
\begin{align*}
 \sn{\term{\{N/x\}x.M'}}_{v}  ({s}, {r}) &=\\
 \sn{\term{N;\{N/x\}M'}}_{v}  ({s}, {r}) &= (n+m, {t}) \\
    \textup{where } (m, {t}) &= \sn{ \term{\{N/x\}M'}}_{v} ({s}, {u}) \\
    \textup{and } (n, {u}) &=  \sn{\term{N}}_v ({r})  \ ,
\end{align*}
and, for the right-hand side,
\begin{align*}
 \sn{ \term{x.M'}}_{v\{\term x \leftarrow \snterm{N}_v\}} ({s}, {r}) &= (n'+m', {t'})  \\
    \textup{where }   (m', {t'}) &= \sn{\term{M'}}_{v\{\term x \leftarrow \snterm{N}_v\}} ({s}, {u'})\\
  \textup{and } (n', {u'}) &= v\{\term x \leftarrow \snterm{N}_v\}(\term{x})({r}) \ .
\end{align*}
Observe that  $v\{\term x \leftarrow \snterm{N}_v\}(\term{x}) = \sn{\term{N}}_{v}$, which implies $(n, u) = (n', u')$.
Application of the inductive hypothesis on $\term{M'}$ achieves
\begin{align*}
 \snterm{\{N/x\}M}_{v} &= \snterm{M'}_{v\{\term x \leftarrow \snterm{N}_v\}} \ .
\end{align*}
Thus, we have $(m, t) = (m', t')$ and the result follows.
\qedhere
\end{itemize}
\end{proof}


We simplify the remaining proof by separating the $\beta$-rule
\[
	\term{[N]a.H.a<x>.M}~\rw~\term{H.\{N/x\}M}
\]
into a \emph{permutation} part 
\[
	\term{[N]a.H.a<x>.M}~\sim_p~\term{H.[N]a.a<x>.M}
\]
and a \emph{strict reduction} part
\[
	\term{[N]a.a<x>.M}~\rw~\term{\{N/x\}M}~.
\]
The \emph{permutation equivalence} $\sim$ on terms is given by:
\[
\begin{aligned}
	\term{[P]a.[N]b.M} &~\sim_p~ \term{[N]b.[P]a.M}
\\	\term{a<x>.[N]b.M} &~\sim_p~ \term{[N]b.a<x>.M} & \text{if }x\notin\fv{\term N}
\\	\term{a<x>.b<y>.M} &~\sim_p~ \term{b<y>.a<x>.M}
\end{aligned}
\]
where in each case $a\neq b$. We first show that the interpretation of terms is preserved under permutations (the \emph{Permutation Lemma}).


\begin{lemma}[Permutation]
\label{lem:permutation}
For $\term{G |- M `{\,\black\sim_p\,} N : t}$ and any valuation $v$ on $\Gamma$, $\snterm M_v = \snterm N_v$.
\end{lemma}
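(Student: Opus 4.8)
The plan is to prove the statement by a case analysis on the three defining clauses of $\sim_p$, in each case unfolding the interpretation of Definition~\ref{def:sninterp} on both sides and checking that the resulting pairs of natural numbers and stack-tuples agree. Since $\sim_p$ is closed under all contexts, it suffices to establish the equality for a single top-level application of one of the three rules; closure under contexts then follows from a routine induction using the fact (implicit in Definition~\ref{def:sninterp}, and explicitly in the proofs of Lemmas~\ref{lem:increasing}--\ref{lem:substitution}) that the interpretation of a compound term is built functorially from the interpretations of its immediate subterms, so replacing a subterm by one with an equal interpretation leaves the whole interpretation unchanged.

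For each of the three base cases I would argue as follows. In the case $\term{[P]a.[N]b.M} \sim_p \term{[N]b.[P]a.M}$ with $\term a \neq \term b$, unfolding the application clause twice on each side gives, for an input stack $s$ (at the appropriate memory type), the value $(1 + 1 + \collapse{\snterm P_v} + \collapse{\snterm N_v} + k,\, t)$ on both sides, where $(k,t) = \snterm M_v(s, a(\snterm P_v), b(\snterm N_v))$; the only thing to check is that the two singletons $a(\snterm P_v)$ and $b(\snterm N_v)$ are placed on distinct locations (because $\term a\neq\term b$), so the order in which they are adjoined to $s$ is immaterial — this is exactly the pointwise-at-each-location structure of $\sntype{!tA}=\Pi_{a\in A}\sntype{!t_a}$. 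The numerical summands are added commutatively, so the two sides coincide. The case $\term{a<x>.b<y>.M}\sim_p\term{b<y>.a<x>.M}$ is analogous: unfolding the abstraction clause twice, both sides yield $(1+1+k,\,t)$ where $(k,t)=\snterm M_{v\{\term x\leftarrow r\}\{\term y\leftarrow q\}}(s)$ for inputs supplying $a(r)$ and $b(q)$ on the two distinct locations; since $\term x\neq\term y$ (they are bound by abstractions on different locations, and we may $\alpha$-rename to ensure distinctness), the two valuation updates commute, and again the location-indexed inputs are independent. The mixed case $\term{a<x>.[N]b.M}\sim_p\term{[N]b.a<x>.M}$ (with $\term a\neq\term b$ and $\term x\notin\fv{\term N}$) combines the two: the left side gives $(1 + (1 + \collapse{\snterm N_{v\{\term x\leftarrow r\}}} + k),\,t)$ and the right side gives $(1 + \collapse{\snterm N_v} + (1 + k'),\,t')$; here the side condition $\term x\notin\fv{\term N}$ is used, via the Weakening Lemma~\ref{lem:weakening} (or directly Lemma~\ref{lem:increasing}'s coincidence statement), to conclude $\snterm N_{v\{\term x\leftarrow r\}} = \snterm N_v$, after which both sides reduce to $(2 + \collapse{\snterm N_v} + k,\,t)$ with $(k,t)=\snterm M_{v\{\term x\leftarrow r\}}(s, b(\snterm N_v))$.

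The main obstacle is essentially bookkeeping rather than conceptual: one has to be careful that the input stack-tuples fed to $\snterm M_v$ on the two sides are literally the same element of the indexed-product domain, which requires invoking the convention that we elide the associativity/unitality isomorphisms and that singletons on distinct locations combine independently — i.e. $(s, a(p), b(q)) = (s, b(q), a(p))$ when $\term a\neq\term b$. I would state this once at the start of the proof as the key observation and then the three cases become short computations. The only genuinely non-trivial ingredient is the use of the side conditions ($\term x\notin\fv{\term N}$ in the mixed case, and the distinctness of bound variables) together with Lemma~\ref{lem:weakening} to discharge the $\collapse{\snterm N_v}$ terms; everything else is commutativity of $+$ in $\N$ and of updates/projections in the product domains.
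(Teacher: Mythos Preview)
Your proposal is correct and follows essentially the same approach as the paper: a case analysis on the three generating rules of $\sim_p$, with the mixed case $\term{a<x>.[N]b.M}\sim_p\term{[N]b.a<x>.M}$ singled out as the only non-trivial one and discharged via the Weakening Lemma (using $\term x\notin\fv{\term N}$), while the other two cases reduce to unfolding definitions plus the observation that singletons on distinct locations commute in the indexed product. The paper's proof is terser---it writes out only the mixed case and dismisses the others as immediate---but your more explicit treatment of the bookkeeping (commutativity of valuation updates, independence of $a(-)$ and $b(-)$ for $a\neq b$) is exactly what ``unfolding definitions'' amounts to.
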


\begin{proof}
The most interesting case is
\begin{align*}
	\term{a<x>.[N]b.M} \sim_p \term{[N]b.a<x>.M: `a(r)?s_A > !t_A} \ ,
\end{align*}
where $\term{x} \not\in \fv{\term{N}}$.
To see this, observe the following are equivalent:
\begin{align*}
 \sn{\term{a<x>.[N]b.M: }}_v ({s}, a(r)) &= (n+2+\collapse{\snterm{N}_{v\{\term{x} \leftarrow r\}}}, {t}), \\
 \textup{where } (\concat{n}{{t}}) &= \sn{\term{M}}_{v\{\term{x} \leftarrow r\}} ({s}, b(\sn{\term{N}}_{v\{\term{x} \leftarrow r\}}), \\
 \sn{\term{[N]b.a<x>.M}}_v ({s}, a(r)) &= (n+2+\collapse{\snterm{N}_{v}}, {t}), \\
   \textup{where } (\concat{n}{{t}}) &= \sn{\term{P}}_{v\{\term{x} \leftarrow r\}}  ({s}, b(\sn{\term{N}}_{v} ),
\end{align*}
using Lemma~\ref{lem:weakening}, the Weakening Lemma (since $\term{x} \not\in \fv{\term{N}}$). The other cases follow immediately from unfolding definitions.
\end{proof}


\begin{theorem}[Lemma~\ref{lem:reduction monotone} restatement]
If $\term{G |- M -> N : t}$ then $\snterm{M}_v \geq_{\sntype{t}} \snterm{N}_v$ for every valuation v on $\Gamma$.
\end{theorem}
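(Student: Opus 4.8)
The plan is to exploit the decomposition of the $\beta$-rule into a permutation part and a strict reduction part set up just above. By Lemma~\ref{lem:permutation} the interpretation $\snterm{-}_v$ is invariant under $\sim_p$, so permutation steps neither raise nor lower it; hence it suffices to prove the inequality for the strict rule $\term{[N]a.a<x>.M} \rw \term{\{N/x\}M}$ applied at the root, and then propagate it through arbitrary one-hole contexts. Concretely I would argue by induction on the derivation of $\term{G |- M -> N : t}$: the base case is the strict rule at the root, and the inductive cases are the context-formers of the grammar with the hole in a continuation, $\term{x.C}$, $\term{[N]a.C}$, $\term{a<x>.C}$, or in an argument, $\term{[C]a.M}$.

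For the base case, unfolding Definition~\ref{def:sninterp} twice gives, writing $f = \snterm{N}_v$,
\[
	\snterm{[N]a.a<x>.M}_v(s) ~=~ (2 + m + \collapse{f},\ t) \qquad\text{where}\quad (m,t) = \snterm{M}_{v\{\term x\from f\}}(s)~,
\]
while the Substitution Lemma (Lemma~\ref{lem:substitution}) gives $\snterm{\{N/x\}M}_v = \snterm{M}_{v\{\term x\from \snterm N_v\}} = \snterm{M}_{v\{\term x\from f\}}$, hence $\snterm{\{N/x\}M}_v(s) = (m,t)$ for the very same $(m,t)$. Since the order on $\N^\leq\times\sntype{!tA}$ is the product order, $(2 + m + \collapse f,\,t) \geq (m,t)$, and as $s$ is arbitrary this yields $\snterm{[N]a.a<x>.M}_v \geq \snterm{\{N/x\}M}_v$. (Incidentally this records that a strict step drops the first component by at least $2 + \collapse f$, which is exactly what Lemma~\ref{lem:reduction collapsed} will exploit.)

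For the contextual cases, assume $\snterm{C}_w \geq \snterm{C'}_w$ for \emph{every} valuation $w$ and inspect each context-former via Definition~\ref{def:sninterp}. The cases $\term{x.C}$ and $\term{[N]a.C}$ are immediate: the interpretation of the reduct is obtained by plugging the smaller $\snterm{C'}_v$ into the same monotone combination, so item (i) of Lemma~\ref{lem:increasing} closes them. For $\term{a<x>.C}$ one instantiates the hypothesis at the valuation $v\{\term x\from r\}$, which is legitimate precisely because the hypothesis is universally quantified over valuations. The only case needing an extra observation is $\term{[C]a.M}$: here $\snterm{[C]a.M}_v(s) = (1 + m + \collapse{f},\,t)$ with $f = \snterm{C}_v$ and $(m,t) = \snterm{M}_v(s,a(f))$; passing to $C'$ replaces $f$ by $f' = \snterm{C'}_v \leq f$, and one needs both $\collapse{f}\geq\collapse{f'}$ (immediate, since $\collapse{g} = \pi_1(g(0))$ is visibly monotone) and $\snterm{M}_v(s,a(f))\geq\snterm{M}_v(s,a(f'))$, i.e.\ monotonicity of $\snterm{M}_v$ in its stack argument, again Lemma~\ref{lem:increasing}(i). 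Combining gives $(1 + m + \collapse f,\,t)\geq(1 + m' + \collapse{f'},\,t')$.

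I do not expect a genuine obstacle here: the real content has been front-loaded into the Permutation and Substitution Lemmas and into monotonicity. The one point deserving slight care is the argument-position case $\term{[C]a.M}$, the sole place where the interpretation of a subterm is both fed as data into another subterm and separately measured through $\collapse{-}$, so both uses must be seen to be monotone. It is also the decomposition into $\sim_p$ plus the strict rule that keeps the argument routine --- it removes any need to reason inside the interpretation about the head context $\term H$ that appears in the original rule $\term{[N]a.H.a<x>.M}\rw\term{H.\{N/x\}M}$. (If $\eta$-reduction is to be included as well, an analogous unfolding shows an $\eta$-step likewise does not increase $\snterm{-}_v$.)
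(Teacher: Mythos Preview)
Your proposal is correct and matches the paper's proof essentially step for step: induction on one-step reductions, with the Permutation Lemma reducing the base case to the strict $\beta$-rule (handled via the Substitution Lemma), and the contextual cases $\term{x.C}$, $\term{a<x>.C}$, $\term{[N]a.C}$, $\term{[C]a.M}$ handled exactly as you describe, the last using monotonicity of $\snterm{M}_v$ (Lemma~\ref{lem:increasing}(i)) together with monotonicity of $\collapse{-}$. Your identification of the argument-position case as the only one needing care is precisely where the paper's argument also leans on Lemma~\ref{lem:increasing}.
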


\begin{proof}
We proceed by induction on the derivation of one-step reductions. Using the Permutation Lemma \ref{lem:permutation}, we need only consider the \textit{strict} $\beta$-rule in the base case.
\begin{itemize}
\item The base case,
\begin{align*}
	\term{[N]a.a<x>.M}~\rw~\term{\{N/x\}M} : \type{?s_A > !t_A},
\end{align*}
where $\term{x, N: r}$ requires showing, for arbitrary $s \in \sntype{!s_A}$, that
\begin{align*}
	\snterm{[N]a.a<x>.M}_v(s)~\geq_{\N \times \sntype{!t_A}}~\snterm{\{N/x\}M}_v(s) \ .
\end{align*}
Unfolding definitions, we must show
\begin{align*}
(2 + n + \collapse{\snterm{N}_{v}}, t) & \geq_{\N \times \sntype{!t_A}}~\snterm{\{N/x\}M}_v(s)
\\		 \textup{where } (n, t) &= \snterm{M}_{v\{x \leftarrow \snterm{N}_v \}}(s)\ .
\end{align*}
Applying the Substitution Lemma \ref{lem:substitution},
\[
	\snterm{M}_{v\{x \leftarrow \snterm{N}_v \}} =  \snterm{\{N/x\}M}_{v},
\]
 we see $(n, t) = \snterm{\{N/x\}M}_{v}(s)$.
Observing that $(2 + n + \collapse{\snterm{N}_{v}}, t) \geq_{\N \times \sntype{!t_A}} (n, t)$, the result follows.
\item The abstraction case,
\begin{align*}
	\term{a<x>.M}~\rw ~\term{a<x>.M'} : \type{ `a(r)?s_A > !t_A},
\end{align*}
requires showing, for arbitrary $(s, a(r)) \in \sntype{!s_A`a(r)}$, that
\[
	\snterm{a<x>.M}_v(s, a(r))~ \geq_{\N \times \sntype{!t_A}} ~\snterm{a<x>.M'}_v(s, a(r)) \ .
\]
Unfolding definitions, we must show
\begin{align*}
	(1+n, t) &\geq_{\N \times \sntype{!t_A}} (1+n', t')
\\  \textup{where } (\concat nt) &= \snterm{M}_{v\{\term{x} \leftarrow r\}} (s)
\\  \textup{and } (\concat {n'}t') &= \snterm{M'}_{v\{\term{x} \leftarrow r\}} (s)\ .
\end{align*}
Applying the inductive hypothesis on $\term{M} \rw \term{M'}$, we achieve
\[
	\snterm{M}_w(s) \geq_{\N \times \sntype{!t_A}} \snterm{M'}_{w}(s)  \ ,
\]
for any valuation $w$. In particular, we can set $w = \{\term{x} \leftarrow r\}$
and thus we have that $(n, t) \geq_{\N \times \sntype{!t_A}} (n', t')$. The result follows.
\item The application, function case,
\begin{align*}
	\term{[N]a.M}~\rw ~\term{[N]a.M'} : \type{ ?s_A > !t_A},
\end{align*}
requires showing, for arbitrary $s \in \sntype{!s_A}$, that
\[
	\snterm{[N]a.M}_v(s)~ \geq_{\N \times \sntype{!t_A}} ~\snterm{[N]a.M'}_v(s) \ .
\]
Unfolding definitions, we must show
\begin{align*}
	(1+n+\collapse{\snterm{N}_v}, t) &\geq_{\N \times \sntype{!t_A}} (1+n'+\collapse{\snterm{N}_v}, t')
\\  \textup{where } (\concat nt) &= \snterm{M}_{v}(s, a(\snterm{N}_v))
\\  \textup{and } (\concat n't') &= \snterm{M'}_{v}(s, a(\snterm{N}_v)) \ .
\end{align*}
Applying the inductive hypothesis on   $\term{M} \rw \term{M'}$,  we achieve
\[
	\snterm{M}_w(s, a(\snterm{N}_v)) \geq_{\N \times \sntype{!t_A}} \snterm{M'}_{w}(s, a(\snterm{N}_v))  \ ,
\]
and thus that $(n, t) \geq_{\N \times \sntype{!t}} (n', t')$. The result follows.
\item The application, argument case, where $\term{N: r}$,
\begin{align*}
	\term{[N].M}~\rw ~\term{[N'].M} : \type{ ?s_A > !t_A},
\end{align*}
requires showing, for arbitrary $s \in \sntype{!s_A}$, that
\[
	\snterm{[N]a.M}_v(s)~ \geq_{\N \times \sntype{!t_A}} ~\snterm{[N']a.M}_v(s) \ .
\]
Unfolding definitions, we must show
\begin{align*}
(1+n+\collapse{\snterm{N}_v}, t) &\geq_{\N \times \sntype{!t_A}} (1+n'+\collapse{\snterm{N'}_v}, t')
\\  \textup{where } (\concat nt) &= \snterm{M}_{v}(s, a(\snterm{N}_v))
\\  \textup{and } (\concat n't') &= \snterm{M}_{v}(s, a(\snterm{N'}_v))
\end{align*}
Applying the inductive hypothesis on $\term{N} \rw \term{N'}$, we achieve
\[
	\snterm{N}_v \geq_{\sntype{r}} \snterm{N'}_v
\]
which allows us to apply monotonicity of $\snterm{M}_v$ to see that $(n, t) \geq_{\N \times \sntype{!t_A}} (n', t')$. It then follows that $ (1+n+\collapse{\snterm{N}_v}, t)\geq_{\N \times \sntype{!t_A}}  (1+n'+\collapse{\snterm{N'}_v}, t')$, as required.
\item The variable case,
\begin{align*}
	\term{x.M}~\rw~ \term{x.M'} : \type{?r_A?s_A > !t_A},
\end{align*}
with $\term{x: ?r_A > !u_A}$
requires showing, for arbitrary $(s,r) \in \sntype{!s_A!r_A}$, that
\[
	\snterm{x.M}_v(s, r)~ \geq_{\N \times \sntype{!t_A}} ~\snterm{x.M'}_v(s, r) \ .
\]
Unfolding definitions, we must show
\begin{align*}
	 (n+m, t) &\geq_{\N \times \sntype{!t_A}} (n+m', t') \\
  \textup{where } (\concat mt) &= \snterm{M}_{v}(s, u) \\
  \textup{and } (\concat m't') &= \snterm{M'}_{v}(s, u) \\
  \textup{and } (n, u) &= v(\term{x})(r)
\end{align*}
Applying the inductive hypothesis on   $\term{M} \rw \term{M'}$,  we achieve, for any $u \in \sntype{!u_A}$,
\[
	\snterm{M}_w(s, u) \geq_{\N \times \sntype{!t_A}} \snterm{M'}_{w}(s, u)  \ ,
\]
and thus that $(m, t) \geq_{\N \times \sntype{!t_A}} (m', t')$ and consequently $(n+m, t) \geq_{\N \times \sntype{!t_A}} (n+m', t')$, as required.
\qedhere
\end{itemize}
\end{proof}


\begin{theorem}[Lemma~\ref{lem:reduction collapsed} restatement]
If $\term{G |- M -> N: ?sA > !tA}$ then $\pi_1(\snterm M_v(s)) >_\N \pi_1(\snterm N_v(s))$ for every $s\in\sntype{!sA}$ and valuation $v$ on $\Gamma$.
\end{theorem}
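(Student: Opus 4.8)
The plan is to proceed by induction on the derivation of the one-step reduction $\term{G |- M -> N: ?sA>!tA}$, in close parallel with the proof of Lemma~\ref{lem:reduction monotone}. As there, the Permutation Lemma (Lemma~\ref{lem:permutation}) lets us decompose the $\beta$-rule into a permutation part, which leaves the interpretation unchanged, and a \emph{strict} reduction part $\term{[N]a.a<x>.M}~\rw~\term{\{N/x\}M}$; so it suffices to treat this strict $\beta$-step as the base case together with the four contextual-closure cases. In the base case, unfolding the definition gives $\snterm{[N]a.a<x>.M}_v(s) = (2 + n + \collapse{\snterm N_v},\, t)$ where $(n,t) = \snterm M_{v\{\term x \from \snterm N_v\}}(s)$; the Substitution Lemma (Lemma~\ref{lem:substitution}) identifies $\snterm M_{v\{\term x \from \snterm N_v\}}$ with $\snterm{\{N/x\}M}_v$, so $\pi_1(\snterm{\{N/x\}M}_v(s)) = n$, and since $2 + n + \collapse{\snterm N_v} > n$ the claim holds — with slack at least $2$, reflecting the pop- and push-steps eliminated by the redex.

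For the three ``congruence in the tail'' cases — reduction under an abstraction $\term{a<x>.M}~\rw~\term{a<x>.M'}$, under the function part of an application $\term{[N]a.M}~\rw~\term{[N]a.M'}$, and under a variable prefix $\term{x.M}~\rw~\term{x.M'}$ — the first component of the interpretation of the whole term is a sum of a context-determined quantity (respectively $1$; $1+\collapse{\snterm N_v}$; the number supplied by $v(\term x)$) and the first component $\pi_1(\snterm M_\cdot(\cdot))$ of the subterm's interpretation, evaluated at an input and valuation fixed by the context. The induction hypothesis applied to $\term M~\rw~\term M'$ at that very input and valuation yields a strict decrease of this latter component; the context-determined summand is identical on both sides, so the strict inequality is preserved. (For the abstraction case one instantiates the valuation at $v\{\term x \from r\}$, exactly as in Lemma~\ref{lem:reduction monotone}.)

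The remaining, and most delicate, case is reduction inside a pushed argument, $\term{[N]a.M}~\rw~\term{[N']a.M}$ with $\term N~\rw~\term N'$. Here $\pi_1(\snterm{[N]a.M}_v(s)) = 1 + n + \collapse{\snterm N_v}$ with $(n,t) = \snterm M_v(s, a(\snterm N_v))$, and likewise $1 + n' + \collapse{\snterm{N'}_v}$ with $(n',t') = \snterm M_v(s, a(\snterm{N'}_v))$, so we must show $n + \collapse{\snterm N_v} > n' + \collapse{\snterm{N'}_v}$. This needs two ingredients. First, applying the present induction hypothesis to $\term N~\rw~\term N'$ at the least input $0$ (legitimate since here all types are function types, base types being omitted) gives $\collapse{\snterm N_v} = \pi_1(\snterm N_v(0)) >_\N \pi_1(\snterm{N'}_v(0)) = \collapse{\snterm{N'}_v}$, i.e.\ $\collapse{\snterm N_v} \geq \collapse{\snterm{N'}_v} + 1$. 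Second, Lemma~\ref{lem:reduction monotone} gives $\snterm N_v \geq_{\sntype r} \snterm{N'}_v$, hence $a(\snterm N_v) \geq a(\snterm{N'}_v)$, and by monotonicity of $\snterm M_v$ (Lemma~\ref{lem:increasing}) we get $(n,t) \geq (n',t')$, in particular $n \geq n'$. Combining, $n + \collapse{\snterm N_v} \geq n' + \collapse{\snterm{N'}_v} + 1 > n' + \collapse{\snterm{N'}_v}$, as required.

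The main obstacle is precisely this last case: unlike the others, reduction happens in a subterm whose interpretation enters the measure only through the collapse $\collapse{\snterm N_v}$ (and through the non-strict ``pass-through'' contribution $n$). The strict gain therefore has to be extracted from the collapse by re-using this lemma at input $0$, while the monotonicity Lemma~\ref{lem:reduction monotone} is needed to ensure the pass-through part does not decrease and spoil the bound. Everything else is a routine unfolding of Definition~\ref{def:sninterp}.
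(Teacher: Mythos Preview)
Your proposal is correct and follows essentially the same approach as the paper's proof: induction on the derivation of the one-step reduction, with the Permutation Lemma reducing the base case to the strict $\beta$-step, and the argument case handled by combining the induction hypothesis at the least input (to get a strict decrease of the collapse) with Lemma~\ref{lem:reduction monotone} plus monotonicity of $\snterm M_v$ (to ensure the pass-through part $n$ does not increase). The paper's proof is identical in structure and in the key argument-case manoeuvre.
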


\begin{proof}
We proceed by induction on the derivation of one-step reductions. Using the Permutation Lemma \ref{lem:permutation}, we need only consider the \textit{strict} $\beta$-rule in the base case.
\begin{itemize}

	\item
The base case,
\begin{align*}
	\term{[N]a.a<x>.M}~\rw~\term{\{N/x\}M} : \type{?s_A > !t_A},
\end{align*}
where $\term{x, N: r}$ requires showing, for arbitrary $s \in \sntype{!s_A}$, that
\begin{align*}
	\pi_1(\snterm{[N]a.a<x>.M}_v(s))~ >_{\N}~ \pi_1(\snterm{\{N/x\}M}_v(s)) \ .
\end{align*}
Unfolding definitions, we must show
\begin{align*}
2 + n + \collapse{\snterm{N}_{v}} & >_{\N} ~\pi_1(\snterm{\{N/x\}M}_v(s))
\\		 \textup{where } (n, t) &= \snterm{M}_{v\{x \leftarrow \snterm{N}_v \}}(s)\ .
\end{align*}
Applying the Substitution Lemma \ref{lem:substitution},
\[
	\snterm{M}_{v\{x \leftarrow \snterm{N}_v \}} =  \snterm{\{N/x\}M}_{v},
\]
 we see $(n, t) = \snterm{\{N/x\}M}_{v}(s)$.
Indeed, the required result follows immediately from observing that  $n = \pi_1( \snterm{\{N/x\}M}_{v}(s))$.
\item The abstraction case,
\begin{align*}
	\term{a<x>.M}~\rw ~\term{a<x>.M'} : \type{ `a(r)?s_A > !t_A},
\end{align*}
requires showing, for arbitrary $(s, a(r)) \in \sntype{!s_A`a(r)}$, that
\[
	\pi_1(\snterm{a<x>.M}_v(s, a(r)))~ >_{\N} ~\pi_1(\snterm{a<x>.M'}_v(s, a(r))) \ .
\]
Unfolding definitions, we must show
\begin{align*}
	1+n &>_{\N} 1+n'
\\  \textup{where } (\concat nt) &= \snterm{M}_{v\{\term{x} \leftarrow r\}} (s)
\\  \textup{and } (\concat {n'}t') &= \snterm{M'}_{v\{\term{x} \leftarrow r\}} (s)\ .
\end{align*}
Applying the inductive hypothesis on $\term{M} \rw \term{M'}$, we achieve
\[
	\pi_1(\snterm{M}_w(s)) >_{\N} \pi_1(\snterm{M'}_{w}(s))  \ ,
\]
for any valuation $w$. In particular, we can set $w = \{\term{x} \leftarrow r\}$ and thus we have that $n >_{\N} n'$, as required.
\item The application, function case,
\begin{align*}
	\term{[N]a.M}~\rw ~\term{[N]a.M'} : \type{ ?s_A > !t_A},
\end{align*}
requires showing, for arbitrary $s \in \sntype{!s_A}$, that
\[
	\pi_1(\snterm{[N]a.M}_v(s))~ >_{\N} ~\pi_1(\snterm{[N]a.M'}_v(s)) \ .
\]
Unfolding definitions, we must show
\begin{align*}
	1+n+\collapse{\snterm{N}_v} &>_{\N} 1+n'+\collapse{\snterm{N}_v}
\\  \textup{where } (\concat nt) &= \snterm{M}_{v}(s, a(\snterm{N}_v))
\\  \textup{and } (\concat n't') &= \snterm{M'}_{v}(s, a(\snterm{N}_v)) \ .
\end{align*}
Applying the inductive hypothesis on   $\term{M} \rw \term{M'}$,  we achieve
\[
	\pi_1(\snterm{M}_w(s, a(\snterm{N}_v)))  >_{\N} \pi_1(\snterm{M'}_{w}(s, a(\snterm{N}_v)))  \ ,
\]
and thus that $n >_{\N} n'$, as required.
\item The application, argument case, where $\term{N: r}$,
\begin{align*}
	\term{[N].M}~\rw ~\term{[N'].M} : \type{ ?s_A > !t_A},
\end{align*}
requires showing, for arbitrary $s \in \sntype{!s_A}$, that
\[
	\pi_1(\snterm{[N]a.M}_v(s))~ >_{\N} ~\pi_1(\snterm{[N']a.M}_v(s)) \ .
\]
Unfolding definitions, we must show
\begin{align*}
1+n+\collapse{\snterm{N}_v} &>_{\N} 1+n'+\collapse{\snterm{N'}_v}
\\  \textup{where } (\concat nt) &= \snterm{M}_{v}(s, a(\snterm{N}_v))
\\  \textup{and } (\concat n't') &= \snterm{M}_{v}(s, a(\snterm{N'}_v))
\end{align*}
Applying the inductive hypothesis on $\term{N} \rw \term{N'}$, we achieve
\[
	\pi_1(\snterm{N}_v(s)) >_{\N} \pi_1(\snterm{N'}_v(s))
\]
In the special case where $s$ is the minimal element, we recover $\collapse{\snterm{N}_v} >_{\N} \collapse{\snterm{N'}_v}$.
Additionally, by Lemma \ref{lem:reduction monotone}, we have that $\snterm{N} \geq_{\N \times \sntype{!t}} \snterm{N'}$. This allows us to apply monotonicity of $\snterm{M}$ to deduce that $(n, t) \geq_{\N \times \sntype{!t}} (n', t')$.
Altogether, this suffices for the result.
%
\item
The variable case,
\begin{align*}
	\term{x.M}~\rw~ \term{x.M'} : \type{?r_A?s_A > !t_A},
\end{align*}
with $\term{x: ?r_A > !u_A}$
requires showing, for arbitrary $(s,r) \in \sntype{!s_A!r_A}$, that
\[
	\pi_1(\snterm{x.M}_v(s, r))~ >_{\N} ~\pi_1(\snterm{x.M'}_v(s, r)) \ .
\]
Unfolding definitions, we must show
\begin{align*}
	 n+m &>_{\N} n+m' \\
  \textup{where } (\concat mt) &= \snterm{M}_{v}(s, u) \\
  \textup{and } (\concat m't') &= \snterm{M'}_{v}(s, u) \\
  \textup{and } (n, u) &= v(\term{x})(r)
\end{align*}
Applying the inductive hypothesis on   $\term{M} \rw \term{M'}$,  we achieve, for any $u \in \sntype{!u_A}$,
\[
	\pi_1(\snterm{M}_w(s, u)) >_{\N} \pi_1(\snterm{M'}_{w}(s, u))  \ ,
\]
and thus that $m >_{\N} m'$, as required.
\qedhere
\end{itemize}
\end{proof}


\section{Proofs for Section~\ref{sec:categories}: Categorical Semantics}
\label{A:semantics}

We formally define the category $\SCatSim$ of sequential $\lambda$-terms generated by a set of base types $\Sigma$, modulo an equivalence $\sim$ as follows.
\begin{definition}
The category $\SCatSim$ is given by: \emph{objects} are type vectors $\type{!t}$, and a \emph{morphism} from $\cat{!s}$ to $\cat{!t}$ is an equivalence class under $\sim$ of closed sequential $\lambda$-terms $\term{N:?s>!t}$. (Diagrammatic) composition is sequencing $\term{N;M}$ with identity $\id=\term*$. Open terms are included in the category by closing them:
\[
	\term{?x:?r |- M:?s>!t}\quad\mapsto\quad\term{|- <?x>.M:?r\,?s>!t}~.
\]
\end{definition}
We will vary the equivalence $\sim$ throughout this section.

\subsection{A pre-monoidal category}

\begin{theorem}[Proposition~\ref{prop:premonoidal} restatement]
Terms modulo $\beta\eta$-equivalence form a strict premonoidal category.
\end{theorem}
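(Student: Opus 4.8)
The plan is to establish the three components of a \emph{strict premonoidal category}~\cite{Power-Robinson-1997}: an underlying category; a \emph{binoidal} structure — for each object $X$, functors $X\ten-$ and $-\ten X$ agreeing on objects, so that $X\ten Y$ is unambiguous; and associativity and unit isomorphisms, required to be central, natural in each variable, and coherent, which will here all be identities. Concretely I will (i) show that $\beta\eta$-equivalence is a congruence for sequencing, so that $\term{N;M}$ with unit $\term*$ (Lemma~\ref{lem:assoc}) induces a category $\SCatBeta$ on $\beta\eta$-classes; (ii) show that the two operations on morphisms from the excerpt give well-defined functors agreeing on objects; (iii) observe that concatenation of type vectors is strictly associative and unital, so all structural maps may be taken to be identities.

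For (i), Lemma~\ref{lem:assoc} already supplies associativity and unit of sequencing, so it remains to verify that $\term M\rw\term{M'}$ implies $\term{N;M}\rw\term{N;M'}$ and $\term{M;N}\rw^{*}\term{M';N}$. The first is immediate: by induction on $\term N$ the operation $\term{N;-}$ is a one-hole context built from $\term{x.\{\}}$, $\term{[P]a.\{\}}$ and $\term{a<x>.\{\}}$, and reduction is closed under all contexts. The second goes by induction on the reduction, using the routine identity that substitution commutes with sequencing, $\term{\{P/x\}(Q;N)}=\term{(\{P/x\}Q);(\{P/x\}N)}$ (itself proved by induction on $\term Q$, using associativity), together with the fact that morphisms are closed terms, so $\term{\{P/x\}N}=\term N$: a redex of $\term M$ either lies strictly below its spine, where it persists verbatim in $\term{M;N}$, or is the leading redex, in which case contracting it in $\term{M;N}$ and re-bracketing via the substitution identity produces $\term{M';N}$. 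Hence sequencing descends to $\beta\eta$-classes, yielding the category $\SCatBeta$.

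For (ii), on objects set $X\ten Y\defeq\type{XY}$, so that $X\ten-$ and $-\ten X$ agree there. The functor $-\ten X$ sends a morphism $\term M:\type{?r>!s}$ to the \emph{same term} at the enlarged type, legitimate by the \emph{expansion} property (Property~3.9 of~\cite{Heijltjes-2022}); this is functorial on the nose, since neither $\term*$ nor the syntactic composite $\term{N;M}$ is changed by re-typing, and it respects $\beta\eta$ because the $\beta$- and $\eta$-rules and their side conditions do not mention types (subject reduction keeping us in type). The functor $X\ten-$ sends $\term M$ to $\term{<?x>.M.[!x]}$, with $\term{<?x>}$ popping the $X$-block and $\term{[!x]}$ restoring it; it respects $\beta\eta$ by the congruence of (i); it preserves identities because $\term{<?x>.[!x]}$ $\eta$-reduces to $\term*$, by repeatedly $\eta$-contracting the innermost $\term{<x_i>.[x_i]}$; and it preserves composition because $(X\ten M);(X\ten N)$ — pop the $X$-block, run $\term M$, push it back, pop it again, run $\term N$, push it back — has its intervening push-then-pop as a string of $\beta$-redexes contracting to the identity renaming, leaving $\term{<?x>.(M;N).[!x]}=X\ten(M;N)$ up to $\alpha$-conversion.

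For (iii) and the conclusion: concatenation of type vectors is strictly associative and unital, so $(X\ten Y)\ten Z=X\ten(Y\ten Z)$ and $\e\ten X=X=X\ten\e$ hold on objects, and the corresponding equalities on morphisms are direct computations — re-typing commutes with re-typing; the nested pop--run--push $Y\ten(Z\ten\term M)$, which pops $Y$ then $Z$, runs $\term M$, and pushes $Z$ then $Y$, coincides up to $\alpha$ with $(Y\ten Z)\ten\term M$, which pops and pushes the single block $\type{YZ}$; and $\e\ten\term M=\term M=\term M\ten\e$, one side having empty abstraction/application sequences and the other being mere re-typing. Taking the associator and unitors to be identities — which are automatically central, natural, and satisfy the pentagon and triangle — then completes the proof that $\SCatBeta$ is a strict premonoidal category. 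I do not expect a conceptual obstacle anywhere; the genuinely careful parts are the bookkeeping of the reversed-vector abbreviations $\term{<?x>}$ and $\term{[!x]}$ under concatenation, and the standard-but-fiddly lemma that substitution commutes with sequencing which underpins (i) — each dispatched by a short induction on term structure or on reduction length.
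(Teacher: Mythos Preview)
Your proposal is correct and takes essentially the same approach as the paper: establish functoriality of the two tensor actions (the expansion action on the nose, the pop--run--push action via $\eta$ for identities and $\beta$ for composition) and observe that concatenation of type vectors is strictly associative and unital so the structural isomorphisms may be taken as identities. Your treatment is more thorough than the paper's---you explicitly justify that $\beta\eta$ is a congruence for sequencing, and you verify the strictness conditions on morphisms (e.g.\ that $Y\ten(Z\ten\term M)$ and $(Y\ten Z)\ten\term M$ agree) which the paper leaves implicit---but the core argument is identical.
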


\begin{proof}
The associativity and unitality morphisms are identities on type vectors. It remains to show that the left action $\cat{- \ten !t}$ and right action $\cat{!t \ten -}$ are functorial. Let $\term{M:?r>!s}$ and $\term{N:?s>!u}$, and let $\term{!x:!t}$ and $\term{!y:!t}$. For the left action:
\[
	\id_{\cat{!s}}\cat{*!t}~=~\term{*:?s?t>!t!s}~=~\id_{\cat{!s*!t}}
\]
\[
\begin{aligned}
	&\cat{(M*!t);(N*!t)}
\\	&\quad=~(\term{<?x>.M.[!x]:?t?r>!s!t})\,;\,(\term{<?y>.N.[!y]:?t?s>!u!t})
\\	&\quad=~(\term{<?x>.M.[!x].<?y>.N.[!y]:?t?r>!u!t})
\\	&\quad=_\beta\;(\term{<?x>.M.N.[!x]:?t?r>!u!t})
\\	&\quad=~\cat{(M;N)*!t}\ .
\end{aligned}
\]
For the right action:
\[
	\cat{!t*}\id_{\cat{!s}}
	~=~\term{<?x>.[!x]:?t?s>!s!t}
	~=_\eta~\term{*:?t?s>!s!t}
    ~=~\id_{\cat{!t*!s}}
\]
\[
\begin{aligned}
	&\cat{(!t*M);(!t*N)}
\\	&\quad=~(\term{M:?r?t>!t!s})\,;\,(\term{N:?s?t>!t!u})
\\	&\quad=~\term{M;N:?r?t>!t!u}
\\	&\quad=~\cat{!t*(M;N)}\ .
\end{aligned}\qedhere
\]
\end{proof}

\subsection{A Cartesian closed category}
We now take $\sim$ to be the equational theory defined in Section 6 and show that $\SCatSim$ is in fact a Cartesian closed category. 
\begin{remark}
For the $\eta$-law, taking $\term{P} = \term{*: (?r > !t) > (?r > !t)}$ in $\term * = \term{<x>.[[x].P.<z>.z]}$ results in $\term{*} = \term{<x>.[[x].<z>.z]} = \term{<x>.[x]}$ at higher type, so $\term{*} = \term{<x>.[x]}$ holds for all types. Iterating results in $\term{*} = \term{<?x>.[!x]}$. We will freely use this as a case of the $\eta$-law without mention.
\end{remark}
\begin{theorem}[Theorem \ref{thm:sound-ccc-eqns} restatement]
The category $\SCatSim$ is a strict Cartesian closed category.
\end{theorem}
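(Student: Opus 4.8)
The plan is to exhibit on $\SCatSim$ the data of a strict Cartesian closed category and to verify its axioms by unfolding the equational laws of $=_{\textsf{eqn}}$: the three structural laws Interchange, Diagonal and Terminal will supply the Cartesian structure, while Beta and the higher-order Eta law will supply the exponential. That $\SCatSim$ is a category is Lemma~\ref{lem:assoc}, and since $=_{\textsf{eqn}}$ refines $\beta\eta$-equivalence the strict premonoidal structure $(\ten,\e)$ of Proposition~\ref{prop:premonoidal} descends to it, with all associators and unitors equal to the identity term $\term{*}$. The unit $\e$ is terminal: for any $\term{M:?s>}$, instantiating the Terminal law with empty output gives $\term{M} = \term{M.*} =_! \term{<?x>}$, so every morphism into $\e$ is the canonical one.

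Next I would upgrade the premonoidal product to a strict \emph{symmetric monoidal} one and then to the Cartesian product. After unfolding the definitions of the two actions of $\ten$, the Interchange law becomes exactly the centrality equation $(\term{M}\ten\id);(\id\ten\term{N}) = (\id\ten\term{N});(\term{M}\ten\id)$, which is precisely what turns a premonoidal category into a monoidal one; the evident swap term then provides a symmetry whose naturality and coherence (hexagons, involutivity) reduce to Interchange together with routine $\beta\eta$-computation. With the diagonal $\delta=\term{<?x>.[!x].[!x]}$ and projections $\pi_1,\pi_2$ as in the displayed data, the Diagonal law is literally the naturality square $\term{M};\delta = \delta;(\term{M}\ten\term{M})$, the Terminal law is naturality of $!$, and the commutative-comonoid axioms for $(\delta,!)$ and their compatibility with $\ten$ follow by $\beta$-computation; Fox's theorem then yields that $\ten$ is a Cartesian product. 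Alternatively one verifies the universal property by hand: the pairing $\langle\term{M},\term{N}\rangle$ is obtained by precomposing with $\delta$ and running $\term{M}$ and $\term{N}$ on the two copies via the $\ten$-actions (well defined by Interchange), the projection equations follow from Beta and the first-order Eta law, and uniqueness of the mediating morphism follows from surjective pairing $\id=\langle\pi_1,\pi_2\rangle$ and naturality of pairing in its domain, itself a consequence of Diagonal and Interchange.

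For closure, take the exponent $\cat{!s->!t}\defeq\type{?s>!t}$ as a one-element type vector, evaluation $\epsilon = \term{<z>.z}:\type{(?s>!t)?s>!t}$, and, for a morphism given by $\term{M:?s?r>!t}$, the curried term $\Lambda(f)\defeq\term{<?x>.[[!x].M]}:\type{?r>(?s>!t)}$ with $\term{!x:!r}$. The Beta law gives the computation law $\epsilon\circ(\id\ten\Lambda(f))=f$ directly --- the thunk $\term{[!x].M}$ is pushed, then popped and run --- and the higher-order Eta law $\term{P}=_\epsilon\term{<?x>.[[!x].P.<z>.z]}$ is exactly $\Lambda(\epsilon\circ(\id\ten\term{P}))=\term{P}$, which with the computation law gives the bijection witnessing the adjunction $(-\ten\cat{!s})\dashv(\cat{!s}\to-)$; functoriality of the exponential and the remaining naturality conditions follow from the displayed terms for $\cat{M->N}$ and the unit by the same mechanical unfolding. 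Strictness is inherited throughout, and Theorem~\ref{thm:sound-ccc-eqns} follows.

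The main obstacle is bookkeeping discipline rather than conceptual difficulty: one must track carefully the reversal conventions (a morphism $\cat{!s-->!t}$ is a term $\term{?s>!t}$, and iterated abstractions $\term{<?x>}$ and applications $\term{[!x]}$ reverse stack order), and the real content lies in the \emph{uniqueness} halves of the two universal properties --- that every morphism into a product is a pairing and every morphism into an exponent is a $\Lambda$ --- for which the Diagonal, Terminal and higher-order Eta laws are the essential ingredients, with Interchange doing the work of collapsing the merely premonoidal structure down to a genuinely Cartesian one.
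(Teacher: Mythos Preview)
Your proposal is correct. The paper takes precisely the direct route you sketch as your ``alternative'': it defines the pairing by $\langle\term{N},\term{M}\rangle \defeq \delta;(\term{N}\times\term{M})$, verifies the two projection equations by explicit $\beta/\eta/!/\iota$-rewriting, and then proves surjective pairing $\term{P}=\langle\term{P};\pi_1,\term{P};\pi_2\rangle$ by a short chain of $\beta\eta$, $\Delta$ and $\iota$ steps. It never passes through Fox's theorem, and in fact never isolates the symmetric-monoidal structure or the comonoid axioms as intermediate lemmas; these are implicitly absorbed into the two universal-property computations. The closed structure is handled exactly as you describe: curry is $\term{M}^*\defeq\term{<?x>.[[!x].M]}$, existence is Beta, and uniqueness is the higher-order Eta law verbatim.

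Your Fox-based presentation is more conceptual and makes transparent \emph{why} the equational theory has the shape it does (Interchange $=$ centrality, Diagonal $=$ naturality of $\delta$, Terminal $=$ naturality of $!$), at the cost of having to discharge the comonoid coherences and symmetry axioms separately. The paper's direct computation is shorter on the page but less illuminating about the role of each axiom. Either route is fine; since you already outline the direct one, you have effectively covered the paper's proof as well.
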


\begin{proof}
Consider the equipment defined in Section 6. 
To prove we have a Cartesian category, we show existence of a terminal object and existence and uniqueness of products. 
Terminality of the unit object (the empty type vector) follows from the terminality ($!$) equation. 
We define the pairing of terms $\term{N: ?r > !s}$ and $\term{M: ?r > !t}$ as
\begin{align*}
	[\term{N}, \term{M}] ~\defeq ~ \Delta ; (\term{N} \times \term{M}) = \term{<?x>.[!x].[!x] \ ; \ M.<?z>.N.[!z]} : \type{ ?r > !s!t} 
\end{align*}
To verify the existence of the product, we show that 
\begin{align*}
	[\term{N}, \term{M}] \ ; \pi_1 &=_{} \term{<?x>.[!x].[!x] \ ; \ M.<?z>.N.[!z] \ ;\  <?z'>.<?y>.[!y]} \\
		&=_\eta \term{<?x>.[!x].[!x].M.<?z>.N.[!z].<?z'>} \\
		&=_\beta \term{ <?x>.[!x].[!x].M.<?z>.N} \\
		&=_! \term{ <?x>.[!x].[!x].<?x`{'}>.N} \\
		&=_\beta \term{ <?x>.[!x].N} \\	
		&=_\eta \term{N} \\
	[\term{N}, \term{M}] \ ; \pi_2  &= \term{<?x>.[!x].[!x]\ ; \ M.<?z>.N.[!z]\ ; \ <?z'>.<?y>.[!z']} \\
		&=_\beta \term{<?x>.[!x].[!x].M.<?z>.N.<?y>.[!z]} \\
		&=_! \term{<?x>.[!x].[!x].M.<?z>.<?x'>.[!z]} \\
		&=_\iota \term{<?x>.[!x].[!x].<?x`{''}>.<?x'>.[!x`{''}].M} \\
		&=_\beta \term{<?x>.[!x].<?x'>.[!x].M} \\
		&=_\beta \term{<?x>.[!x].M} \\
		 &=_{\eta}  \term{M} 
\end{align*}
where $\term{!x, !x', !x`{''}: !r}$,  $\term{!y: !s}$ and $\term{!z, !z': !t}$.
To verify the uniqueness of the product, we show that for any term $\term{P: ?r > !s!t}$,
\begin{align*}
		\term{P} &=_{\beta\eta} \term{P.<?y>.<?z>.[!z].[!y].[!z].[!y].<?u>.<?v>.}\pi_1\term{.[!v].[!u].}\pi_2 \\
			&=_{\Delta} \term{<?x>.[!x].P.[!x].P.<?u>.<?v>.}\pi_1\term{.[!v].[!u].}\pi_2 \\
			&=_{\iota} \term{<?x>.[!x].P.}\pi_1\term{.[!x].P.}  \pi_2 \\
			&=_{\iota} \term{<?x>.[!x].[!x]  .  P.}\pi_2\term{.<?w>.P.}\pi_1\term{.[!w]} \\ 
			&= [\term{P} ; \pi_1, \term{P} ; \pi_2] 
\end{align*}
where $\pi_1 = \term{<?a>.<?b>.[!a]}$, $\pi_2 = \term{<?a>.<?b>.[!b]}$ and 
where $\term{x: !r}, \term{v, z: !s}$ and $\term{u, w, y: !t}$. 

For the closed structure, we show the existence and uniqueness of exponents. We define  the currying of a term $\term{M: ?s?r>!t}$ as 
\begin{align*}
	\term{M}^* \defeq_{\eta} \term{<?x>.[[!x].M]}: \type{?s > (?r > !t)} \ ,
\end{align*}
where $\term{!x: !s}$. 
To verify the existence and uniqueness, respectively, of exponents, we show that
\begin{align*} 
	  (\textsf{id}_{\cat{!r}} \times \term{M}^*)\  ;\  \epsilon &= \term{<?x>.[[!x].M]} \term{;} \term{<k>.k} =_\beta \term{<?x>.[!x].M} =_\eta \term{M} 
\\	 ((\textsf{id}_{\cat{!s}} \times \term{N}) \ ; \ \epsilon)^* &= \term{<?x>.[[!x].N.<k>.k]} =_{\epsilon}  \term{N} 
\end{align*}
for $\term{!x : !s}$ and $\term{k : ?r > !t}$.

The associators and unitors are  given by identities  and so are trivially natural isomorphisms which satisfy the coherence diagrams. This fact also makes the category a \textit{strict} Cartesian closed category. 
\end{proof}
%
%
Note that then the canonical morphisms can be recovered from the definition of products and currying above as
$	\Delta = \langle \textsf{id}, \textsf{id} \rangle, 	 \pi_1=\id\cat*{!},	 \pi_2={!}\cat*\id,	\term{M} \times \term{N} = \langle \pi_1 ; \term{M}, \pi_2 ; \term{N} \rangle,	\sigma = \langle \pi_2, \pi_1 \rangle , \eta = \textsf{id}^*$, \textit{etc}.
%

We proceed to show that the canonical functor induced by the soundness theorem is in fact faithful. 

\subsection{Completeness}

First, we introduce the simply typed $\lambda$-calculus with patterns, which we use for our translation. 
\begin{definition}
The \textit{$\lambda$-calculus (with $n$-ary tuples and patterns)} generated by a signature $\Sigma$ is given by the following grammar:
\begin{align*}
	M, N &~=~ x \mid~ MN ~\mid~ \lambda p.M ~\mid~ (M_1, \ldots, M_n) \\
	p,q, r, s, t&~=~ x~|~ (p_1, \ldots, p_n)
\end{align*}
where from left to right the \textit{term} constructors are a \textit{variable}, 
\textit{application}, \textit{abstraction} over a \textit{pattern} and \textit{product}. A {product} is a vector of terms and a pattern is a vector of variables. We freely allow coercion from patterns to terms. Terms are considered modulo $\alpha$-equivalence. We write products, contexts and patterns  as vectors $(s_1, \ldots, s_n)$ and elide the isomorphisms for associativity and unitality so that concatenation of $s$ and $t$ may be written as $(s \cdot t)$. 
\end{definition}

\begin{definition}
\textit{Simple types} are given by the following grammar\begin{align*}
A,B ~ \Coloneqq ~ \alpha \in \Sigma ~ \mid ~ {A} \to {B} ~\mid~ A \times B
\end{align*}
A \textit{judgement} $\Gamma \vdash M: A$ is a typed term in a \textit{context} $\Gamma = x_1:A_1, \ldots, x_n:A_n$, a finite function from variables to types. 
The \textit{typing rules} for the simply-typed $\lambda$-calculus (STLC) are given in Figure \ref{fig:STLC-types}.
\end{definition}

\begin{definition}
The \textit{equational theory} of the STLC is the least equivalence generated by the following laws, closed under any context:
\begin{align*}
	&\textup{Beta (Function):} & (\lambda p.M)N &\to_\beta M\{N/p\} \\
	&\textup{Eta (Function): } \textup{x} \not\in \fv{M}  &\lambda (p_1,\ldots, p_n).M(p_1,\ldots,p_n) &=_\eta M: A \to B \  \\
	&\textup{Eta (Product):} & (\pi_1(M), \ldots, \pi_n(M)) &=_\eta M: {A}_1 \times \ldots \times {A}_n
\end{align*}
where, in the first case, $\Gamma, x:A \vdash M:B$ and $\Gamma \vdash N:A$, and where
\[
	\{(N_1, \ldots, N_n)/(p_1, \ldots, p_n)\} = \{N_1 / p_1\}\ldots \{N_n / p_n\}
\]
denotes simultaneous substitution, and, in the last case, we define the syntactic sugar $\pi_i = \lambda (x_1, \ldots, x_n).x_i$. 
\end{definition}
\begin{figure}
\[
\begin{array}{cc}
\qquad
	\infer[\TR x]
	 {{p_1: {A_1} \ldots, p_n: {A_n} \vdash {p_i}: {A_i}}}
	{}
\end{array}
\]
\newline
\vspace{-\baselineskip}
\[
	\infer[ ]
	  {{\Gamma, (p_1, \ldots, p_n): {A_1} \times \ldots \times {A_n}  \vdash M : {C}}}
	  {{\Gamma, p_1: {A_1}, \ldots, p_n: {A_n} \vdash M: {C}}}
\qquad
\infer[\TR y]
	  {{\Gamma \vdash (M_1, \ldots, M_n): {A_1} \times \ldots \times {A_n}}}
	  {{\{\Gamma \vdash M_i: {A_i}}\}_{i\in \{1, \ldots, n\}} 
	  }
\]	
\newline
\vspace{-\baselineskip}
\[
\begin{array}{cc}\quad
	\infer[\TR a]
	  {{\Gamma \vdash MN: {B}}}
	  {{\Gamma \vdash M: {A}} &&
	   {\Gamma \vdash N: {A} \to {B}}
	  }
	& \qquad\qquad\quad
	\infer[\TR l]
	  {{\Gamma \vdash \lambda p. M : {A} \to {B}}}
	  {{\Gamma , p: {A} \vdash M: {B}}}
\end{array}
\]
\caption{Typing rules for the $\lambda$-calculus with patterns}
\label{fig:STLC-types}
\end{figure}
Note that in this calculus, the $\beta$-law for products is implemented by $\beta$-reduction.

We present the free functor induced by Theorem \ref{thm:sound-ccc-eqns} as a map from FMC terms to lambda terms, defined by induction on type derivations. Let $\textsf{CCC}(\Sigma)$ denote the free Cartesian closed category generated by a signature $\Sigma$ of base types.
\begin{lemma}
The interpretation functor $\termint{-}: \textup{\textsf{CCC}}({\Sigma}) \to \SCatSim$ can equivalently be defined inductively on $\lambda$-terms as follows. On types and contexts, respectively, define:
\begin{align*}
	\termint{{\alpha}} &= \type{\alpha}, \textup{ where } {\alpha} \in \Sigma \\
	\termint{{A} \to {B}} &=\termint{{A}} \type{>} \termint{{B}} \\
	\termint{A_1 \times \ldots \times A_n} &=\termint{A_1}\ldots\termint{A_n}  \\
	\termint{A_1, \ldots, A_n} &= \termint{A_n} \ldots \termint{A_1} 
\end{align*}
Inductively on the type derivation of $\Gamma \vdash M: A$, define an FMC term $\termint{\Gamma \vdash M: A} : \termint{\Gamma} \type{>} \termint{A}$
by
\begin{align*}
	\termint{ p_1: A_1, \ldots, p_n: A_n \vdash p_i: A_i} &= \term{<a_n>.\ldots.<a_1>.[a_i]},  \\
	\termint{\Gamma \vdash (M_1, \ldots, M_n): A_1 \times \ldots \times A_n} &= \term{<?x>.[!x].}\termint{\Gamma \vdash M_1: A_1}\term{\ldots} \term{[!x].}\termint{\Gamma \vdash M_n: A_n} \\ 
	\termint{\Gamma, (p_1, \ldots, p_n): A_1 \times \ldots \times A_n \vdash M:B} &= \termint{\Gamma, p_1: A_1, \ldots, p_n: A_n \vdash M:B}\\
	\termint{\Gamma \vdash MN: B} &= \term{<?x>.[!x].}\termint{\Gamma \vdash N: A}\term{.}\\
		& \qquad \term{[!x].}\termint{\Gamma \vdash M: A \to B}\term{.<k>.k} \\
	\termint{\Gamma \vdash \lambda p.M: A \to B} &= \term{<?x>.[[!x].}\termint{\Gamma, p: A \vdash M: B}\term{]}
\end{align*}
where in each case $\term{!x}: \termint{\Gamma}$, $\term{a_i} : \termint{A_i}$, and $\term{k}: \termint{A \to B}$.
\end{lemma}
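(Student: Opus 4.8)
The plan is to identify the functor $\termint{-}$ induced by Theorem~\ref{thm:sound-ccc-eqns} as the standard categorical interpretation of the simply-typed $\lambda$-calculus with products in the Cartesian closed category $\SCatSimeqn$, and then to check, clause by clause, that this interpretation computes exactly the explicit FMC terms listed in the lemma. By definition the induced functor is the structure-preserving extension of the base-type assignment $\type{\alpha}\mapsto\type{\alpha}$, so it is forced to send a variable to a projection, a tuple to a pairing, an application to evaluation-after-pairing, and an abstraction to currying. The lemma merely replaces each of these categorical morphisms by the explicit FMC term that realises it in $\SCatSimeqn$, using the equipment $\pi_1,\pi_2,\delta,\epsilon$ and the currying operation $(-)^*$ set up in the proof of Theorem~\ref{thm:sound-ccc-eqns}.

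First I would fix the induction on the type derivation of $\Gamma\vdash M:A$. By functoriality of the canonical interpretation together with the induction hypothesis, $\termint{M}$ is the image in $\SCatSimeqn$ of the categorical composite built from the already-translated subterms, so at each constructor it suffices to verify a single $\ccceq$-identity between that composite and the FMC term named in the lemma. There are four: (i) the $i$-th projection out of $\termint{A_1}\dots\termint{A_n}$ equals $\term{<a_n>.\ldots.<a_1>.[a_i]}$; (ii) the $n$-ary pairing of the $\termint{M_j}$ equals $\term{<?x>.[!x].}\termint{M_1}\term{\ldots}\term{[!x].}\termint{M_n}$; (iii) evaluation after the pairing of the two translated subterms equals the application term $\term{<?x>.[!x].}\termint{N}\term{.[!x].}\termint{M}\term{.<k>.k}$; and (iv) currying of $\termint{M}$ equals $\term{<?x>.[[!x].}\termint{M}\term{]}$. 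Each identity is discharged by unfolding the definitions of $\pi_1,\pi_2,\delta,\epsilon$ and $(-)^*$ and reducing with the laws $=_\beta$, $=_\iota$, $=_\Delta$, $=_!$, $=_\eta$; identity (iv) is essentially the definition of FMC-currying, hence holds by $=_\eta$, and (i) follows by iterating the projection computations already carried out in the proof of Theorem~\ref{thm:sound-ccc-eqns}.

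The only remaining clause is the pattern-context rule, whose right-hand side is the unchanged FMC term; here the content is purely a coherence check, namely that the context $\Gamma,(p_1,\dots,p_n):A_1\times\dots\times A_n$ and its unpacking $\Gamma,p_1:A_1,\dots,p_n:A_n$ denote the same FMC object. This reconciles the two reversal conventions at play --- a product type reads left-to-right, $\termint{A_1\times\dots\times A_n}=\termint{A_1}\dots\termint{A_n}$, whereas a context reads right-to-left --- and is routine once those conventions are lined up on the product entry. With every clause verified, well-definedness with respect to the $\lambda$-calculus equational theory (function and product $\beta$ and $\eta$) and functoriality (preservation of identities and of composition) need no separate argument: the inductive map has been shown to compute the same morphisms as the canonical functor, which already possesses both properties.

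I expect the two composite identities (ii) and (iii) to be the main obstacle. Unlike projection and currying, they carry the context-lifting boilerplate $\term{<?x>.[!x]}$, and verifying it requires the diagonal law $=_\Delta$ to duplicate the context so that it can be supplied to each subterm, together with the interchange law $=_\iota$ to commute the saved variables past the intervening computations --- all while keeping the reversal conventions straight, so that the explicit term matches the categorical composite on the nose rather than a permuted variant. Getting the relative order of the abstractions, the re-pushes $\term{[!x]}$, and the final evaluation step $\term{<k>.k}$ exactly right is where the care lies; the surrounding argument is structural.
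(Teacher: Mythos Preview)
Your proposal is correct and follows essentially the same approach as the paper's proof. The paper simply records the standard categorical interpretation of the $\lambda$-calculus in a CCC---variable $\mapsto\pi_i$, tuple $\mapsto\langle\ldots\rangle$, application $\mapsto\langle\termint{N},\termint{M}\rangle;\epsilon$, abstraction $\mapsto(-)^*$---and then invokes the CCC structure on $\SCatSimeqn$ established in Theorem~\ref{thm:sound-ccc-eqns} to translate each of these morphisms to the explicit FMC term in the statement; your plan is a more detailed elaboration of precisely this, and the equational verifications you flag as the main work (particularly for pairing and application) are left entirely implicit in the paper.
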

\begin{proof}
We present below the equivalence between the simply typed $\lambda$-calculus and the free Cartesian closed category over the signature $\Sigma$. Using the functor from \textsf{CCC}($\Sigma$) to $\SCatSim$ described Theorem \ref{thm:ccc}  allows us to translate the right-hand sides to FMC terms, giving the result. 
In the following, $\langle M,N \rangle$ denotes the pairing of morphisms, given by $\Delta ; (M \times N)$. 
\begin{align*}
	\termint{ p_1: A_1, \ldots, p_n: A_n \vdash p_i: A_i} &= \pi_i,  \\ 
	\termint{\Gamma \vdash (M_1, \ldots, M_n): A_1 \times \ldots \times A_n} &= \langle \termint{\Gamma \vdash M_1: A_1}, \ldots, \termint{\Gamma \vdash M_n: A_n} \rangle \\
	\termint{\Gamma, (x_1, \ldots, x_n): A_1 \times \ldots \times A_n \vdash M:B} &= \termint{\Gamma, x_1: A_1, \ldots, x_n: A_n \vdash M:B}\\
	\termint{\Gamma \vdash MN: B} &= \langle \termint{\Gamma \vdash N: A}, \termint{\Gamma \vdash M: A \to B} \rangle \ ; \ \epsilon \\
	\termint{\Gamma \vdash \lambda p.M: A \to B} &= \termint{\Gamma, p: A \vdash M: B}^*\ . \qedhere
\end{align*}
\end{proof}


We construct a functor $\ccc{-}: \SCatSim \to \textup{\textsf{CCC}}(\Sigma)$ which we will show is the left-inverse of the interpretation $\termint{-}: \textup{\textsf{CCC}}({\Sigma}) \to \SCatSim$ given above. 
We first introduce a \textit{valuation}, which interprets the free variables of a term.
%
\begin{definition}[Valuation]
A \textit{valuation $v$} is a function assigning to each FMC-variable $\term{x:t}$ a $\lambda$-term $v(\term{x}) \in \ccctype{t}$. 
Given a valuation $v$, let $v\{\term{x}\leftarrow N\}$ denote the valuation which assigns $N$ to $\term x$ and otherwise behaves as $v$. 
\end{definition}

The definition of  $\ccc{-}: \SCatSim \to \textup{\textsf{CCC}}(\Sigma)$ follows the stack machine intuition of the FMC. The context of the corresponding lambda term represents the input stack and the lambda term itself represents the state of the stack at the end of the run, as a function of the input stack. 
The top-level arrow $\type{>}$ becomes sequent entailment $\vdash$ and the type of the input stack becomes the type of the $\lambda$-context and the type of the output stack becomes the type of the $\lambda$-term.

\begin{theorem}[Definition \ref{defn:interp-to-lambda} restatement]
For every signature $\Sigma$, define the interpretation $\ccc{-}: \SCatSim  \to \textup{\textsf{CCC}}(\Sigma) $ inductively on types as:
\begin{align*}
\ccc{\type{\alpha}} &= \alpha \textup{ where } \alpha \in \Sigma \\
\ccc{\type{t_1\ldots t_n}} &= \ccc{\type{t_1}} \times \ldots \times \ccc{\type{t_n}} \\
\ccc{\type{?s > !t}} &= \ccc{\type{!s}} \to \ccc{\type{!t}}\ .
\end{align*}
For each valuation $v$, define on the type derivation of $\term{G |- M: ?s > !t}$ an open $\lambda$-term
\[
	s: \ccctype{!s} \vdash \cccterm{G |- M: ?s > !t}_v(s): \ccctype{!t}
\]
  as follows:
\begin{align*}
\ccc{\Gamma \vdash \term{*: ?s > !s}}_v&( s)&=  &\  s \\
\ccc{\Gamma \vdash \term{x: \alpha}}_v&&=&\ v(\term{x})\\
\ccc{\Gamma \vdash \term{<x>.M: r?s > !t}}_v&( s \cdot r)&=  &\ \ccc{\Gamma, \term{x: r} \vdash \term{M: ?s > !t}}_{v\{\term{x} \leftarrow r\}} ( s) \\
\ccc{\Gamma \vdash \term{[N].M: ?s > !t}}_v&( s)&=  &\ \ccc{\Gamma \vdash \term{M: r?s >!t}}_v ( s \cdot \ccc{\Gamma \vdash \term{N: r}}_v) \\
\ccc{\Gamma, \term{x: ?r > !u} \vdash \term{x.M: ?r?s > !t}}_v&( s \cdot  r)&= &\ \ccc{\Gamma,\term{x: ?r > !u} \vdash \term{M: ?u?s > !t}}_v ( s \cdot v(\term{x})( r))\ .
\end{align*}
  We omit the context and/or types of terms inside the function when it is clear. 
\end{theorem}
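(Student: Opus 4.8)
The statement is a definition, so the proof is a verification that the given clauses determine a well-defined functor $\ccc{-}:\SCatSim\to\textup{\textsf{CCC}}(\Sigma)$. I would organise this around three tasks: well-typedness of the produced $\lambda$-term, functoriality, and soundness with respect to the equational theory $\ccceq$. Well-foundedness of the recursion is immediate, since each clause recurses only into strict subderivations of the (uniquely determined) typing derivation of $\term{G |- M:?s>!t}$.

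First I would establish the typing claim $s:\ccctype{!s}\vdash\cccterm{G |- M: ?s>!t}_v(s):\ccctype{!t}$, under the standing assumption that the valuation $v$ sends each $\term{x:t}$ to a $\lambda$-term of type $\ccctype{t}$, by induction on the type derivation. This is routine: the nil clause returns the input context $s$ unchanged; the abstraction clause strips the top factor $\ccctype{r}$ off the input and extends $v$; the application clause appends $\cccterm{N}_v$ of type $\ccctype{r}$ to the stack; and the two variable clauses apply $v(\term x)$, which carries the function type $\ccctype{?r>!u}$, to the relevant segment of the stack. Each case matches a rule of Figure~\ref{fig:FMC-types} directly.

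Next I would record the action of $\ccc{-}$ on morphisms: a closed term $\term{M:?s>!t}$ maps to $\lambda s.\,\cccterm{M}(s)$ of type $\ccctype{!s}\to\ccctype{!t}$ (the valuation being vacuous on closed terms), and check functoriality. Preservation of identities is immediate from $\cccterm{*}(s)=s$. For composition I would first prove a sequencing lemma for $\ccc{-}$ --- the analogue of Lemma~\ref{lem:sequencing} --- stating that $\cccterm{N;M}_v$ agrees with $\cccterm{M}_v$ precomposed with $\cccterm{N}_v$ on the appropriate portion of the input stack, by induction on the derivation of $\term N$. Since diagrammatic composition in $\SCatSim$ is sequencing, this yields that $\ccc{-}$ sends $\term{N;M}$ to the composite of $\cccterm{N}$ and $\cccterm{M}$ in $\textup{\textsf{CCC}}(\Sigma)$.

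The main obstacle is soundness: that $\term M\ccceq\term{M'}$ implies $\cccterm{M}_v=_{\beta\eta}\cccterm{M'}_v$, which is exactly what makes $\ccc{-}$ well-defined on $\sim$-classes. The prerequisite is a substitution lemma $\cccterm{\{N/x\}M}_v=_{\beta\eta}\cccterm{M}_{v\{\term x\leftarrow\cccterm{N}_v\}}$, the analogue of Lemma~\ref{lem:substitution}, proved by induction on the derivation of $\term M$ and invoking the sequencing lemma in the variable case (where substitution introduces a composition $\term{N;\{N/x\}M'}$). With substitution and sequencing in hand, I would verify each of the six generating laws of $\ccceq$ by unfolding the defining clauses and reducing in the target calculus: Beta and the two Eta laws reduce to $\beta\eta$ in the STLC modulo substitution, while Interchange, Diagonal, and Terminal amount to recognising the symmetry, diagonal, and projection behaviour induced by the repeated $\term{[!x]}$ pushes. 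I expect the higher-order Eta law and the Diagonal law to carry the bulk of the $\beta\eta$-bookkeeping, since the former forces an $\eta$-expansion at function type inside the $\lambda$-term and both rely on the precise interaction of the application and abstraction clauses.
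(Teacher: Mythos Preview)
Your proposal is correct and mirrors the paper's own development: after the definition, the paper establishes well-definedness via a Weakening lemma, a Sequencing lemma (the analogue of Lemma~\ref{lem:sequencing}), a Substitution lemma (the analogue of Lemma~\ref{lem:substitution}), and then verifies each generating law of $\ccceq$ by unfolding; functoriality is handled exactly as you describe, via Sequencing. The only ingredient you did not explicitly name is the Weakening lemma, which the paper uses when discharging the Interchange and Diagonal cases (to drop the freshly bound $\term{!x}$ from the valuation when interpreting the subterm in which it does not occur).
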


We proceed to show that the interpretation functor is well-defined. 

We write $v\{\term{!x} \leftarrow  r\}$ as shorthand for $v\{\term{x_1} \leftarrow r_1 \}\ldots\{\term{x_n} \leftarrow r_n\}$, given a valuation $v$ and $r = (r_1, \ldots, r_n)$, and 
 $\cccterm{!{N}}_v$ as shorthand for $(\cccterm{N_1}_v \cdots \cccterm{N_n}_v)$. 
\begin{lemma}
For all terms $\term{G |- <?x>.M: ?r?s > !t}$, $\term{G |- !N: !r}$ and valuation $v$, we have that
\begin{align*}
	\cccterm{<?x>.M}_v( s \cdot  r) = \cccterm{M}_{v\{\term{!x} \leftarrow  r\}}( s)\quad {and} \quad
	\cccterm{[!N].M}_v( s) = \cccterm{M}_v( s \cdot \cccterm{!N}_v),
\end{align*} 
where ${r}: \ccctype{!r}$ and ${s} : \ccctype{!s}$.
\end{lemma}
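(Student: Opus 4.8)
The plan is to derive each of the two identities by a routine induction on the length $n$ of the vector $\term{!x}$ (respectively $\term{!N}$), using only the single-element abstraction and application clauses of Definition~\ref{defn:interp-to-lambda}.

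For the abstraction identity, the base case $n=0$ is immediate: $\term{<?x>.M}$ is then literally $\term M$ and $v\{\term{!x}\leftarrow r\}=v$. For the inductive step I would write $\term{!x}=\term{!x'}\cdot\term{x_n}$ and correspondingly $r=r'\cdot r_n$ with $r'$ ranging over $\ccctype{!r'}$, so that $\term{<?x>.M}=\term{<x_n>.<?x'>.M}$. The single-element abstraction clause then gives
\[
	\cccterm{<x_n>.<?x'>.M}_v\bigl((s\cdot r')\cdot r_n\bigr)
	~=~\cccterm{<?x'>.M}_{v\{\term{x_n}\leftarrow r_n\}}(s\cdot r')~,
\]
and the induction hypothesis, applied to the shorter vector $\term{!x'}$, rewrites the right-hand side as $\cccterm{M}_{v\{\term{x_n}\leftarrow r_n\}\{\term{!x'}\leftarrow r'\}}(s)$. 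Since the abstracted variables $\term{x_1},\dots,\term{x_n}$ are pairwise distinct (they constitute the context fragment $\term{!x:!r}$), updating a valuation by them in any order yields the same function, so this equals $\cccterm{M}_{v\{\term{!x}\leftarrow r\}}(s)$, as required. The one bookkeeping point is that the type annotations line up: peeling $\term{<x_n>}$ off a term of type $\type{?r?s>!t}$ leaves a term of type $\type{?r'?s>!t}$ with $\type{r_n\,?r'}=\type{?r}$, so the clause applies with its ``$r$'' instantiated to $\type{r_n}$ and its ``$?s$'' to $\type{?r'?s}$.

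The application identity is handled in the same manner, now peeling the \emph{outermost} push: for $n>0$ write $\term{!N}=\term{N_1}\cdot\term{!N'}$, so that $\term{[!N].M}=\term{[N_1].[!N'].M}$, and the single-element application clause gives $\cccterm{[N_1].[!N'].M}_v(s)=\cccterm{[!N'].M}_v\bigl(s\cdot\cccterm{N_1}_v\bigr)$. Since $\term{[N_1]}$ binds nothing, the valuation is unchanged, so the induction hypothesis applies directly and yields $\cccterm{M}_v\bigl((s\cdot\cccterm{N_1}_v)\cdot\cccterm{!N'}_v\bigr)$, which after collapsing the associativity isomorphism is exactly $\cccterm{M}_v(s\cdot\cccterm{!N}_v)$.

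I do not anticipate any genuine obstacle: the content is purely the iteration of the defining clauses. The single point meriting (minor) care is the interaction between the reversal convention $\term{<?x>}=\term{<x_n>\dots<x_1>}$ and the left-to-right order of the updates in $v\{\term{!x}\leftarrow r\}$ --- harmless precisely because the abstracted variables are distinct. This lemma is exactly what lets the subsequent verifications that $\ccc{-}$ is functorial and a left inverse of $\termint{-}$ treat the vector notations $\term{<?x>}$ and $\term{[!x]}$ in one stroke.
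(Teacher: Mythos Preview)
Your proposal is correct and takes exactly the same approach as the paper: the paper's proof is the single line ``By induction on the sizes of vectors $\term{!x}$ and $\term{!N}$, respectively,'' and you have simply spelled out that induction in full, including the harmless bookkeeping about the reversal convention and distinctness of the abstracted variables.
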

\begin{proof}
By induction on the sizes of vectors $\term{!x}$ and $\term{!N}$, respectively.
\end{proof}

\begin{lemma}[Weakening]\label{weakening}
For all terms $\Gamma \vdash \term{M: t}$, valuations $v$ and $s \in \ccctype{s}$, we have that
\begin{align*}
\cccterm{G |- \term{M}}_{v} = \cccterm{G, \term{x} |- \term{M}}_{v\{\term{x} \leftarrow s\}} ,
\end{align*}
where $\term{x : s} \not\in \fv{\term{M}}$.
\end{lemma}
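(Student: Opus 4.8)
The plan is to prove this by structural induction on the type derivation of $\Gamma \vdash \term{M:t}$, matching the five defining clauses of Definition~\ref{defn:interp-to-lambda}. Since each interpretation $\cccterm{M}_v$ is determined by its action on an input stack, it suffices to fix an arbitrary stack argument $\sigma$ and show $\cccterm{M}_v(\sigma) = \cccterm{M}_{v\{\term x \leftarrow s\}}(\sigma)$ as $\lambda$-terms. The guiding observation is that the defining clauses consult the valuation only at variables occurring free in the term, so extending $v$ at a fresh variable $\term x \notin \fv{\term M}$ cannot change the result.

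The base cases are immediate. If $\term M \equiv \term{*}$ then $\cccterm{*}_v(\sigma) = \sigma$ independently of the valuation. If $\term M \equiv \term{y:\alpha}$ then $\term x \notin \fv{\term M}$ forces $\term y \neq \term x$, so $(v\{\term x \leftarrow s\})(\term y) = v(\term y)$ and $\cccterm{y}_v = v(\term y) = \cccterm{y}_{v\{\term x \leftarrow s\}}$.

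For the inductive cases I would argue as follows. In the application case $\term M \equiv \term{[N].M'}$, we have $\term x \notin \fv{\term N}$ and $\term x \notin \fv{\term{M'}}$, so the induction hypothesis applies to both subterms, giving $\cccterm{N}_v = \cccterm{N}_{v\{\term x \leftarrow s\}}$ and $\cccterm{M'}_v = \cccterm{M'}_{v\{\term x \leftarrow s\}}$; unfolding the clause $\cccterm{[N].M'}_v(\sigma) = \cccterm{M'}_v(\sigma \cdot \cccterm{N}_v)$ on both sides then closes it. The higher-order variable case $\term M \equiv \term{y.M'}$ is analogous: $\term x \notin \fv{\term M}$ forces $\term x \neq \term y$, so $(v\{\term x \leftarrow s\})(\term y) = v(\term y)$, and the induction hypothesis on $\term{M'}$ finishes it after unfolding $\cccterm{y.M'}_v(\sigma \cdot r) = \cccterm{M'}_v(\sigma \cdot v(\term y)(r))$. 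For the abstraction case $\term M \equiv \term{<y>.M'}$, I would first $\alpha$-rename so that the abstracted variable $\term y$ is distinct from $\term x$ (and does not occur in $s$); then $\term x \notin \fv{\term{M'}}$, and since valuation updates at distinct variables commute, $v\{\term x \leftarrow s\}\{\term y \leftarrow r\} = v\{\term y \leftarrow r\}\{\term x \leftarrow s\}$, so the induction hypothesis applied to $\term{M'}$ with the valuation $v\{\term y \leftarrow r\}$ yields the required identity after unfolding $\cccterm{<y>.M'}_v(\sigma \cdot r) = \cccterm{M'}_{v\{\term y \leftarrow r\}}(\sigma)$.

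I do not anticipate any real obstacle: this is a routine structural induction, of exactly the same shape as the Weakening Lemma~\ref{lem:weakening} used in the strong normalization proof. The only point needing care is the abstraction case, where one must appeal to $\alpha$-conversion to keep $\term x$ separate from the abstracted variable and then use commutativity of valuation updates at distinct variables --- both entirely standard.
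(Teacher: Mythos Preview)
Your proposal is correct and takes exactly the same approach as the paper, which simply states ``Induction on the type derivation of $\Gamma \vdash \term{M: t}$'' without spelling out the cases. You have filled in the details the paper omits, and your case analysis---including the use of $\alpha$-conversion and commutativity of valuation updates in the abstraction case---is sound.
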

\begin{proof}
Induction on the type derivation of $\Gamma \vdash \term{M: t}$.
\end{proof}

\begin{lemma}[Sequencing]
\label{lem:ccc-sequencing}
For all terms $\Gamma \vdash \term{ M: ?u?s > !t}$ and $\Gamma \vdash \term{N: ?r > !u}$ and for all valuations $v$, we have that:
\begin{align*}
 \sn{\term{N;M}}_v ({s} \cdot {r}) &= \sn{\term{M}}_{v} ({s} \cdot  \sn{\term{N}}_{v} ({r}))
\end{align*}
\end{lemma}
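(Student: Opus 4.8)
The plan is to follow the proof of Lemma~\ref{lem:sequencing} almost verbatim, transposing it from the $\N$-weighted interpretation of the strong-normalization argument to the interpretation $\ccc{-}$ into the $\lambda$-calculus. Since composition $\term{N;M}$ is defined by recursion on the structure of its left argument --- it pushes the leading action of $\term N$ past the join point, bottoming out at $\term{*;M}=\term M$ --- I would argue by induction on the type derivation of $\term{G |- N: ?r > !u}$, stating the induction hypothesis in the fully general form of the lemma, quantified over all contexts $\Gamma$ and all valuations $v$, so that it may be applied to the subterm of $\term N$ under an extended context.

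First I would dispatch the base case $\term N=\term*$: then $\term{*;M}=\term M$ and $\cccterm{*}_v(r)=r$, so both sides of the claimed equation are literally $\cccterm{M}_v(s\cdot r)$. The inductive step splits into exactly three sub-cases --- $\term N=\term{<x>.N'}$, $\term N=\term{[P].N'}$, and $\term N=\term{x.N'}$ with $\term x$ of function type --- and no others, since the rule introducing a base-type variable yields a term of type $\type{\alpha}$, which is not of the form $\type{?r > !u}$ and hence cannot be the left argument of a composition. In each sub-case I would: (i) rewrite $\term{N;M}$ by the matching clause of the definition of composition --- $\term{<x>.N';M}=\term{<x>.(N';M)}$, $\term{[P].N';M}=\term{[P].(N';M)}$, or $\term{x.N';M}=\term{x.(N';M)}$; (ii) unfold the corresponding clause of $\ccc{-}$ from Definition~\ref{defn:interp-to-lambda} (abstraction, application, or sequential-variable) on the left-hand side, and peel off one copy of the same clause on the right-hand side; (iii) apply the induction hypothesis to the immediate subterm $\term{N'}$ together with $\term M$; and (iv) reassemble the stack segments, observing that the two sides then coincide. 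The abstraction sub-case carries the one extra subtlety: the composition clause requires $\term x\notin\fv{\term M}$ (arranged by $\alpha$-conversion), and unfolding the abstraction clause of $\ccc{-}$ evaluates $\term{N';M}$ under the extended valuation $v\{\term x\leftarrow c\}$; to return the resulting outer occurrence $\cccterm{M}_{v\{\term x\leftarrow c\}}$ to $\cccterm{M}_v$ I would invoke the Weakening Lemma~\ref{weakening}, using precisely that $\term x\notin\fv{\term M}$. The application and variable sub-cases involve no change of valuation and close by (i)--(iv) alone.

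I do not anticipate any real difficulty: each step is an instance of a defining equation, of the induction hypothesis, or of Lemma~\ref{weakening}, and the overall shape is identical to the proof of Lemma~\ref{lem:sequencing}. The one point genuinely needing care --- the closest thing to an obstacle --- is the abstraction sub-case, where one must both (a) rename $\term x$ out of $\fv{\term M}$ \emph{before} applying the composition clause, and (b) use Lemma~\ref{weakening} to discard the binding of $\term x$ in the outer interpretation of $\term M$; everything else is the same segment-tracking bookkeeping already carried out for Lemma~\ref{lem:sequencing}, and it can be presented in the same four-case format.
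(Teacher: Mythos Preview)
Your proposal is correct and matches the paper's own approach exactly: the paper's proof of Lemma~\ref{lem:ccc-sequencing} simply says to proceed by induction on the type derivation of $\Gamma \vdash \term{N: ?r > !u}$ and refers the reader to the proof of Lemma~\ref{lem:sequencing} for the details. Your identification of the four cases, of the use of Lemma~\ref{weakening} in the abstraction sub-case, and of why the base-type variable rule does not arise, are all accurate.
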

\begin{proof}
We proceed by induction on the type derivation of $\Gamma \vdash \term{N: ?r > !u}$.
See Lemma \ref{lem:sequencing} for a similar proof. 
\end{proof}

\begin{lemma}[Substitution]\label{lem:ccc-substitution}
For every pair of terms $\Gamma, \term{x:w} \vdash \term{M: t}$ and $\Gamma \vdash \term{N: w}$ and for every valuation $v$, we have
\begin{align*}
	\cccterm{\term{\{N/x\}M}}_{v} = \cccterm{\term{M}}_{v\{\term{x} \leftarrow \sn{\term{N}}_v\} } \ .
\end{align*}
\end{lemma}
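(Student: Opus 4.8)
The plan is to prove the identity by induction on the type derivation of $\term{G , x:w |- M:t}$, following the same pattern as the proof of Lemma~\ref{lem:substitution} (its counterpart for the strong-normalization interpretation) and using the Weakening Lemma~\ref{weakening} together with the Sequencing Lemma~\ref{lem:ccc-sequencing} already established for $\ccc{-}$. The base case $\term M\equiv\term{*}$ is immediate, since both sides compute to the identity function on stacks, which does not mention the valuation. For a base-type variable $\term{y}$ there are two subcases: if $\term{y}\neq\term{x}$ then $\term{y}$ occurs in $\Gamma$, substitution leaves it unchanged, and $v\{\term x\from\cccterm{N}_v\}(\term y)=v(\term y)$; if $\term{y}=\term{x}$ then $\term{\{N/x\}x}=\term{N}$, while $\cccterm{x}_{v\{\term x\from\cccterm{N}_v\}}=v\{\term x\from\cccterm{N}_v\}(\term x)=\cccterm{N}_v$. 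For an application $\term M\equiv\term{[N'].M'}$, I would unfold $\term{\{N/x\}[N'].M'}=\term{[\{N/x\}N'].\{N/x\}M'}$ and the defining clause of Definition~\ref{defn:interp-to-lambda} for an application, then apply the induction hypothesis to the subterms $\term{N'}$ and $\term{M'}$ to finish.

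The abstraction case $\term M\equiv\term{<y>.M'}$ is the one that uses Weakening. First $\alpha$-rename so that $\term{y}\neq\term{x}$ and $\term{y}\notin\fv{\term{N}}$, making $\term{\{N/x\}<y>.M'}=\term{<y>.\{N/x\}M'}$. Evaluating both sides on a stack $s\cdot r$, the left-hand side unfolds to $\cccterm{\{N/x\}M'}_{v\{\term y\from r\}}(s)$, and the induction hypothesis on $\term{M'}$ rewrites this to $\cccterm{M'}_{v\{\term y\from r\}\{\term x\from\cccterm{N}_{v\{\term y\from r\}}\}}(s)$. By Lemma~\ref{weakening}, since $\term{y}\notin\fv{\term{N}}$, we have $\cccterm{N}_{v\{\term y\from r\}}=\cccterm{N}_v$; and since $\term{x}\neq\term{y}$ the two valuation updates commute, so this equals $\cccterm{M'}_{v\{\term x\from\cccterm{N}_v\}\{\term y\from r\}}(s)$, which is the right-hand side evaluated at $s\cdot r$.

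For the sequential variable case $\term M\equiv\term{y.M'}$: when $\term{y}\neq\term{x}$, substitution commutes past the head, the update at $\term{x}$ does not affect the lookup of $\term{y}$, and the induction hypothesis on $\term{M'}$ closes the case. The remaining subcase $\term{y}=\term{x}$, with $\term{x: ?r > !u}$, is where the Sequencing Lemma enters: here $\term{\{N/x\}x.M'}=\term{N;\{N/x\}M'}$, so by Lemma~\ref{lem:ccc-sequencing} the left-hand side evaluated on a stack $s\cdot r$ equals $\cccterm{\{N/x\}M'}_v(s\cdot\cccterm{N}_v(r))$; the induction hypothesis on $\term{M'}$ and the identity $v\{\term x\from\cccterm{N}_v\}(\term x)=\cccterm{N}_v$ then identify this with $\cccterm{x.M'}_{v\{\term x\from\cccterm{N}_v\}}(s\cdot r)$, the right-hand side.

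I do not anticipate a genuine obstacle: the argument is the same kind of bookkeeping as in Lemma~\ref{lem:substitution}, the only delicate points being the $\alpha$-conversion in the abstraction case (discharged by Weakening) and remembering to invoke the Sequencing Lemma exactly when the substituted variable heads a sequential-variable term. This lemma is also the key ingredient for checking that $\ccc{-}$ validates the $\beta$-law, since $\cccterm{[N].<x>.M}_v(s)=\cccterm{<x>.M}_v(s\cdot\cccterm{N}_v)=\cccterm{M}_{v\{\term x\from\cccterm{N}_v\}}(s)=\cccterm{\{N/x\}M}_v(s)$, which is one of the equations $\ccc{-}$ must respect in order to descend to a functor $\SCatSim\to\textsf{CCC}(\Sigma)$.
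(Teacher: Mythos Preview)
Your proposal is correct and follows essentially the same approach as the paper: the paper's proof simply reads ``We proceed by induction on the type derivation of $\Gamma \vdash \term{M: t}$. See Lemma~\ref{lem:substitution} for a similar proof,'' and you have faithfully unpacked exactly that, invoking Weakening (Lemma~\ref{weakening}) in the abstraction case and Sequencing (Lemma~\ref{lem:ccc-sequencing}) when the substituted variable heads a sequential-variable term. You even correctly added the base-type variable case, which is present in the $\ccc{-}$ interpretation but absent from the strong-normalization interpretation (where base types were explicitly omitted).
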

\begin{proof}
We proceed by induction on the type derivation of $\Gamma \vdash \term{M: t}$.
See Lemma \ref{lem:substitution} for a similar proof. 
\end{proof}

\begin{lemma}[Well-Definedness]
For any closed terms $\term{G |- M: t}$ and $\term{G |- N: t}$ and valuation $v$, we have
\begin{align*}
	\term{M} =_{\textsf{eqn}} \term{N} \quad {implies} \quad \cccterm{M}_v = \cccterm{N}_v \ . 
\end{align*}
\end{lemma}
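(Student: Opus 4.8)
The plan is to induct on the derivation of $\term{M} \ccceq \term{N}$, where equality of the resulting $\lambda$-terms is taken in $\textsf{CCC}(\Sigma)$, i.e.\ modulo the equational theory of the $\lambda$-calculus with patterns. Reflexivity, symmetry and transitivity are immediate, so the argument has two halves: first, that $\cccterm{-}_v$ is a \emph{congruence} (closure under all contexts), and second, that it validates each of the six generating laws. For congruence I would observe that every clause of Definition~\ref{defn:interp-to-lambda} expresses the interpretation of a compound term as a function of the interpretations of its immediate subterms under suitable valuations: $\cccterm{<x>.M}_v(s \cdot r) = \cccterm{M}_{v\{\term{x}\leftarrow r\}}(s)$, $\cccterm{x.M}_v(s \cdot r) = \cccterm{M}_v(s \cdot v(\term x)(r))$, and $\cccterm{[N].M}_v(s) = \cccterm{M}_v(s \cdot \cccterm{N}_v)$, the last also depending on $\cccterm{N}_v$. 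Hence if $\cccterm{M}_v = \cccterm{M'}_v$ for all $v$ then the same holds with $\term M,\term{M'}$ sitting inside any context built from the term constructors; since $\term{N;M}$ is a defined operation on these constructors, the Sequencing Lemma~\ref{lem:ccc-sequencing} covers composition contexts as well. It therefore suffices to check the six axioms.

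For \textbf{Beta}, unfolding gives $\cccterm{[N].<x>.M}_v(s) = \cccterm{<x>.M}_v(s \cdot \cccterm{N}_v) = \cccterm{M}_{v\{\term x \leftarrow \cccterm{N}_v\}}(s)$, which by the Substitution Lemma~\ref{lem:ccc-substitution} is $\cccterm{\{N/x\}M}_v(s)$. \textbf{First-order Eta}, $\term{*} =_\eta \term{<a>.[a]}$ at $\type{\alpha > \alpha}$, is immediate from the clauses, both sides sending the single input to itself. \textbf{Interchange}, $\term{<?x>.N.[!x].M} =_\iota \term{M.<?y>.N.[!y]}$, is a routine unfolding: on an input $(s \cdot r)$ both sides feed $r$ to $\cccterm{N}_v$ and $s$ to $\cccterm{M}_v$ and reassemble the outputs, and these agree because neither block binds in the other (modulo the elided associativity and unitality of tupling). \textbf{Diagonal} and \textbf{Terminal} are the cases where the Cartesian, effect-free nature of the target does the work: since $\cccterm{M}_v$ is an ordinary $\lambda$-term it may be copied and discarded freely, so $\term{M<?y>.[!y].[!y]}$ and $\term{<?x>.[!x].M.[!x].M}$ both reduce to the pairing of $\cccterm{M}_v$ with itself, and $\term{M.<?y>}$ and $\term{<?x>}$ both reduce to the constant empty tuple, using the product-$\eta$ and terminality laws of the target. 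I would highlight this in the write-up, since it is precisely the paper's running point: $\cccterm{-}$ is sound for the equational theory exactly because it lands in a Cartesian closed, effect-free calculus.

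The hard part will be \textbf{Higher-order Eta}, $\term{P} =_\epsilon \term{<?x>.[[!x].P.<z>.z]}$ at $\type{?r > (?s > !t)}$. Here I would unfold the right-hand side with the vector lemma, push the thunk $\term{[!x].P.<z>.z}$ onto the empty stack, and then collapse the result back to $\cccterm{P}_v$; this uses the Eta (Function) law of the $\lambda$-calculus with patterns, and demands care in interpreting the thunked subterm once it occupies argument position and in tracking the valuation under which each sub-interpretation is taken --- the Weakening Lemma~\ref{weakening} is needed to drop the valuation entry for $\term{!x}$ after it has been used. Every remaining case is bookkeeping once the congruence observation and the Substitution and Sequencing Lemmas are in hand.
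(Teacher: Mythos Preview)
Your proposal is correct and follows essentially the same approach as the paper: check each generating axiom by unfolding the interpretation, invoking the Substitution Lemma for Beta, the Sequencing and Weakening Lemmas throughout, and relying on the Cartesian structure of the target for Diagonal and Terminal. You are in fact slightly more careful than the paper in two respects: you make the congruence step explicit (the paper silently assumes that checking the generators suffices), and you flag that Higher-order Eta needs the function-$\eta$ law of the target calculus, which the paper's calculation uses implicitly when it identifies the pushed thunk with $\cccterm{P}_v(r)$ as a function.
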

\begin{proof}
We prove the statement in the case of each equation, making free use of the Sequencing and Weakening lemmas. 
\begin{itemize}
\item Beta: $\term{[N].<x>.M} \rw \term{M\{N/x\}: ?s > !t}$,
\begin{align*}
\cccterm{[N].<x>.M}_v( s) &= \cccterm{<x>.M}_v( s \cdot \cccterm{N}_v) = \cccterm{M}_{v\{\term{x} \leftarrow \cccterm{N}_v\}}(s) \\
	\textup{Substitution} &= \cccterm{\{N/x\}M}_v(s)
\end{align*}
where $\term{x, N:r}$.
\item 
Interchange: $\term{<?x>.N.[!x].M} =_\iota \term{M.<?y>.N.[!y]:  ?r?s > !t!u}$,
\begin{align*}
\cccterm{<?x>.N.[!x].M}_v( s \cdot  r) &= \cccterm{N.[!x].M}_{v\{\term{!x} \leftarrow  r\}}( s) \\
	&= \cccterm{[!x].M}_{v\{\term{!x} \leftarrow  r\}}(\cccterm{N}_v({s})) \\
	&= \cccterm{M}_{v}(\cccterm{N}_v({s}) \cdot  r)
\\ & = (\cccterm{N}_v({s}) \cdot \cccterm{M}_v( r)) \\
 \cccterm{M.<?y>.N.[!y]}_v({s} \cdot {r}) &= \cccterm{<?y>.g.[!y]}_v({s} \cdot \cccterm{M}_v({r})) \\
	&= \cccterm{N.[!y]}_{v\{\term{!y} \leftarrow \cccterm{M}_v({r}) \}}({s}) \\
	 &= \cccterm{[!y]}_{v\{\term{!y} \leftarrow \cccterm{M}_v({r}) \}}(\cccterm{N}_v({s})) \\
	&= (\cccterm{N}_v({s}) \cdot \cccterm{M}_v( r)) 
\end{align*}
where $\term{!x: !r}$, $\term{!y: !u}$, $\term{M: ?r > !u}$ and $\term{N: ?s > !t}$.
\item Diagonal: $\term{M.<?y>.[!y].[!y]} \to_\Delta \term{<?x>.[!x].M.[!x].M: ?s > !t!t}$,
\begin{align*}
 \cccterm{M.<?y>.[!y].[!y]}_v( s) &= \cccterm{<?y>.[!y].[!y]}_v(\cccterm{M}_v( s))  \\
	&= \cccterm{[!y].[!y]}_{v\{\term{!y} \leftarrow \cccterm{M}_v( s)\}}()  \\
	&= (\cccterm{M}_v( s)\cdot \cccterm{M}_v( s))  \\
\cccterm{<?x>.[!x].M.[!x].M}_v( s) &= \cccterm{[!x].M.[!x].M}_{v\{\term{!x} \leftarrow {s}\}}() \\
	&= \cccterm{M.[!x].M}_{v\{\term{!x} \leftarrow {s}\}}( s) \\
	&= \cccterm{[!x].M}_{v\{\term{!x} \leftarrow {s}\}}(\cccterm{M}_v( s)) \\
	&= \cccterm{M}_{v\{\term{!x} \leftarrow {s}\}}(\cccterm{M}_v( s) \cdot  s) \\
	&= (\cccterm{M}_v( s)\cdot \cccterm{M}_v( s))  
\end{align*}
where $\term{M: ?s > !t}$, $\term{!x: !s}$, $\term{!y: !t}$.
\item Terminal: $\term{M.<?y>} \to_! \term{<?x>: ?s > }$
\begin{align*}
	\cccterm{M.<?y>}_v( s) &= \cccterm{<?y>}_v(\cccterm{M}_v( s)) = (\epsilon) \\
	\cccterm{<?x>}_v(s) &= (\epsilon)
\end{align*}
where $\term{M : ?s > !t}$, $\term{!x: !s}$ and $\term{!y: !t}$
\item
{Extensionality:} $\term{M} = \term{<?x>.[[!x].M.<g>.g]: ?r > (?s > !t)}$
\begin{align*}
	\cccterm{<?x>.[[!x].M.<g>.g]}_v({r}) &= \cccterm{[[!x].M.<g>.g]}_{v\{\term{!x} \leftarrow {r}\}}(\epsilon) \\
	&= \cccterm{[!x].M.<g>.g}_{v\{\term{!x} \leftarrow {r}\}} \\
		&= \cccterm{M.<g>.g}_{v}({r}) \\
	&= \cccterm{<g>.g}_{v}(\cccterm{M}_v({r})) \\
		&= \cccterm{g}_{v\{\term{g} \leftarrow \cccterm{M}_v({r})\}}(\epsilon) \\
	&=\cccterm{M}_v({r}) 
\end{align*}
where $\term{M: ?r > (?s > !t)}$ $\term{!x: !r}$ and $\term{g: ?s > !t}$.
\end{itemize}
\end{proof}

\begin{lemma}
The interpretation $\ccc{-}: \SCatSim  \to \textup{\textsf{CCC}}(\Sigma)$ is a Cartesian closed functor. 
\end{lemma}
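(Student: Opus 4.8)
The plan is to verify, in order, that $\ccc{-}$ is a functor and then that it preserves the Cartesian closed structure; by Theorem~\ref{thm:sound-ccc-eqns} the source $\SCatSim$ is a CCC and by construction $\textup{\textsf{CCC}}(\Sigma)$ is a CCC, so this is all that is required. The object part of $\ccc{-}$ is read directly off the clauses $\ccc{\type{t_1\dots t_n}}=\ccc{\type{t_1}}\times\dots\times\ccc{\type{t_n}}$ and $\ccc{\type{?s>!t}}=\ccc{\type{!s}}\to\ccc{\type{!t}}$; since valuations are irrelevant for closed terms, the Well-Definedness Lemma already gives that $\ccc{-}$ descends to a well-defined map on each hom-set $\SCatSim(\cat{!s},\cat{!t})\to\textup{\textsf{CCC}}(\Sigma)(\ccc{\type{!s}},\ccc{\type{!t}})$. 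So the remaining content is (i) functoriality and (ii) preservation of products, the terminal object, and exponentials.

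For functoriality: the identity $\id_{\cat{!s}}=\term{*}:\type{?s>!s}$ maps, via the first clause of Definition~\ref{defn:interp-to-lambda}, to the $\lambda$-term $s\mapsto s$, i.e.\ the identity on $\ccc{\type{!s}}$. Composition in $\SCatSim$ is sequencing $\term{N;M}$, while composition in $\textup{\textsf{CCC}}(\Sigma)$, presented by the $\lambda$-calculus with patterns, is substitution into the single input variable; specialising the Sequencing Lemma~\ref{lem:ccc-sequencing} to an empty middle segment gives $\ccc{\term{N;M}}_v(r)=\ccc{\term{M}}_v(\ccc{\term{N}}_v(r))$, which is exactly that composite. (The Weakening and Substitution lemmas \ref{weakening} and \ref{lem:ccc-substitution} are precisely what make both the Sequencing Lemma and the Well-Definedness Lemma go through.) Hence $\ccc{-}$ is a functor.

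For the Cartesian closed structure, the object part is preserved on the nose: $\ccc{\cat{1}}=\ccc{\type{\e}}=1$, $\ccc{\cat{!s*!t}}=\ccc{\type{!s!t}}=\ccc{\cat{!s}}\times\ccc{\cat{!t}}$, and $\ccc{\cat{!s->!t}}=\ccc{\type{?s>!t}}=\ccc{\cat{!s}}\to\ccc{\cat{!t}}$. Granted this, it suffices to check that $\ccc{-}$ carries the two projections $\pi_1,\pi_2$ and the evaluation map $\epsilon=\term{<z>.z}$ of $\SCatSim$ to the corresponding structure maps of $\textup{\textsf{CCC}}(\Sigma)$: preservation of the terminal map $!$ is then automatic by uniqueness of arrows into the terminal object, and preservation of the pairing $\langle-,-\rangle$, of $\Delta$ and of $-\times-$ on morphisms, and of currying $(-)^*$ and the unit $\eta$, all follow formally from their universal properties together with functoriality and product preservation. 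Each of the three remaining checks is a short unfolding of Definition~\ref{defn:interp-to-lambda}; for instance $\ccc{\term{<z>.z}}_v(a\cdot f)=v\{\term{z}\leftarrow f\}(\term{z})(a)=f\,a$, which is evaluation in $\textup{\textsf{CCC}}(\Sigma)$.

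I expect the only real obstacle to be bookkeeping of the vector-reversal conventions: a function type $\type{?s>!t}$ is interpreted through the \emph{forward} product $\ccc{\type{!s}}$, and FMC contexts are interpreted in reverse order ($\ccc{A_1,\dots,A_n}=\ccc{A_n}\dots\ccc{A_1}$), so when matching $\ccc{\pi_1}$ and $\ccc{\pi_2}$ against the projections of $\textup{\textsf{CCC}}(\Sigma)$ one must track carefully which factor of which product is being retained, and it is cleanest to peel off whole vectors of abstractions and applications at once using the auxiliary lemma on $\ccc{\term{<?x>.M}}_v$ and $\ccc{\term{[!N].M}}_v$ rather than one binder at a time. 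Beyond this bookkeeping no new idea is needed; everything reduces to the Sequencing, Weakening and Substitution lemmas already in hand, and the functor obtained is in fact a \emph{strict} Cartesian closed functor, the structure being preserved on the nose.
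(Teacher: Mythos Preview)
Your proposal is correct, and the functoriality part matches the paper exactly: identity via the $\term{*}$ clause and composition via the Sequencing Lemma.

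For the Cartesian closed structure, however, you take a genuinely different route from the paper. You check preservation of the \emph{structure maps} $\pi_1,\pi_2,\epsilon$ and then invoke the universal properties to obtain preservation of $\langle-,-\rangle$, $\Delta$, $-\times-$ on morphisms, currying, and $\eta$ for free. The paper instead verifies directly, by unfolding Definition~\ref{defn:interp-to-lambda} together with Sequencing and Weakening, that $\ccc{-}$ commutes with the \emph{bifunctors} $\times$ and $\to$ on morphisms, i.e.\ that $\ccc{\term{M}\times\term{N}}=\ccc{\term{M}}\times\ccc{\term{N}}$ and $\ccc{\term{M}\to\term{N}}=\ccc{\term{M}}\to\ccc{\term{N}}$. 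In the strict setting this suffices: since $\pi_1=\id\times{!}$ and $\pi_2={!}\times\id$, preservation of $\times$ on morphisms plus the terminal object yields preservation of projections, and similarly for exponentials. Your approach is the more textbook category-theoretic one and makes the dependence on universal properties explicit; the paper's approach is more computational, staying closer to the concrete term definitions in Section~\ref{sec:categories} and avoiding any appeal to uniqueness. Both are equally short once the Sequencing and Weakening lemmas are available, and your warning about the vector-reversal bookkeeping is exactly the place where either argument requires care.
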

\begin{proof}
Functoriality follows immediately from Lemma \ref{lem:sequencing} (Sequencing). Unitality follows from observing $\ccc{\term{*}}_v = \textsf{id}$.
To show  $\ccc{-}$ is a \textit{Cartesian} functor, we must show it commutes with the product functor $\times$.
Making free use of the Sequencing and Weakening lemmas, we have:
\begin{align*}
\ccc{\term{M: ?s > !t} \times \term{N: ?r > !u}}_v( s \cdot  r) &= \cccterm{N.<?u>.M.[!u]}_v( s \cdot  r) \\
	&= \cccterm{<?u>.M.[!u]}_v( s\ \cdot \cccterm{N}_v( r)) &  \\
	 &= \cccterm{M.[!u]}_{v\{\term{!u} \leftarrow \cccterm{N}_v( r) \}}( s) \\
	&= \cccterm{[!u]}_{v\{\term{!u} \leftarrow \cccterm{N}_v(r) \}}(\cccterm{M}_{v\{\term{!u} \leftarrow \cccterm{N}_v( r) \}}( s)) & \\
	 &= \cccterm{[!u]}_{v\{\term{!u} \leftarrow \cccterm{N}_v( r) \}}(\cccterm{M}_{v}( s)) \\
	&= (\cccterm{M}_{v}( s) \cdot \cccterm{N}_v( r)) & \\
	&= (\cccterm{M}_v \times \cccterm{N})_v( s \cdot  r) \ ,
\end{align*}where $\term{!u: !u}$ and $s \in \ccc{\type{!s}}$,  $r \in \ccc{\type{!r}}$.
To show $\ccc{-}$ is a Cartesian \textit{closed} functor, we must further show that it preserves the arrow functor $\to$. Indeed, we have
\begin{align*}
\ccc{\term{M} \to \term{N}}_v(f) &= \cccterm{<x>.[M.x.N]}_v(f) \\
	&= \cccterm{[M.x.N]}_{v\{\term{x} \leftarrow f\}}(\epsilon) \\
	&= \cccterm{M.x.N}_{v\{\term{x} \leftarrow f\}} \\
	&=  \cccterm{M}_{v\{\term{x} \leftarrow f\}} \ ; \ \cccterm{x}_{v\{\term{x} \leftarrow f\}} \ ; \ \cccterm{N}_{v\{\term{x} \leftarrow f\}} \\
&=  \cccterm{M}_{v} \ ; \ f \ ; \ \cccterm{N}_{v} \\
	&= (\cccterm{M}_v \to \cccterm{N}_v)(f) 
\end{align*}
where $f : \ccc{\type{?s > !t}}$, $\term{x: ?s > !t}$ and $\term{M: ?r > !s}$ and $\term{N: ?t > !u}$.
\end{proof}

We demonstrate completeness by proving $\ccc{\termint{\Gamma \vdash M}}_v = \Gamma \vdash M: A$.
\begin{theorem}[Theorem \ref{thm:ccc-eqns} restatement]
The functor $\termint{-}: \textup{\textsf{CCC}}(\Sigma) \to \SCatSim$ is faithful.
\end{theorem}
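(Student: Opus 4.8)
The plan is to exhibit $\ccc{-}$ as a left inverse of $\termint{-}$, namely to prove that
\[
	\ccc{\termint{\Gamma \vdash M}}_v \;=\; \Gamma \vdash M : A
\]
in $\textup{\textsf{CCC}}(\Sigma)$ --- equivalently, modulo the equational theory of the simply-typed $\lambda$-calculus with patterns --- for every $\lambda$-term $\Gamma \vdash M : A$ and every valuation $v$. Faithfulness of $\termint{-}$ follows at once: if $\termint{\Gamma \vdash M} \ccceq \termint{\Gamma \vdash N}$ as morphisms of $\SCatSim$, then applying the functor $\ccc{-}$, which is well defined on $\ccceq$-classes by the Well-Definedness Lemma and is a Cartesian closed functor, yields $M = \ccc{\termint{\Gamma \vdash M}} = \ccc{\termint{\Gamma \vdash N}} = N$ in $\textup{\textsf{CCC}}(\Sigma)$, so $\termint{-}$ is injective on hom-sets. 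A more abstract variant of the same point: $\ccc{-}\circ\termint{-}$ is a strict Cartesian closed endofunctor of $\textup{\textsf{CCC}}(\Sigma)$ that fixes every generating base type, since both $\termint{-}$ and $\ccc{-}$ act as the identity on base types, hence is the identity functor by the universal property of the free strict CCC, so $\termint{-}$ has a left inverse. I would nonetheless carry out the explicit computation, as it is exactly what the preceding development is set up to support and it keeps the argument self-contained.

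First I would prove the left-inverse identity by induction on the type derivation of $\Gamma \vdash M : A$, using the inductive characterisation of $\termint{-}$ on $\lambda$-terms together with the structural facts about $\ccc{-}$ already established: the vector abstraction/application identities that unfold $\cccterm{<?x>.M}$ and $\cccterm{[!N].M}$, the Sequencing Lemma~\ref{lem:ccc-sequencing}, the Weakening Lemma~\ref{weakening}, and the Substitution Lemma~\ref{lem:ccc-substitution}. Since $\ccc{-}$ has already been shown to be a Cartesian closed functor, each case of the induction reduces to unfolding the two translations and reassembling the pieces into the expected CCC operation.

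The cases are as follows. For the projection, $\termint{p_1 : A_1, \ldots, p_n : A_n \vdash p_i : A_i} = \term{<a_n>.\ldots.<a_1>.[a_i]}$, so unfolding $\ccc{-}$ with the unstacking identity simply reads the $i$-th component out of the input tuple, producing $p_i$. The product-pattern case is trivial, as $\termint{-}$ and the matching of a pattern against a product are both the identity. For tupling, $\termint{\Gamma \vdash (M_1, \ldots, M_n)}$ pushes $n$ copies of the context and evaluates each $\termint{M_i}$ in turn, so unstacking together with Weakening and preservation of $\Delta$ and $\times$ assembles the tuple $(\ccc{\termint{M_1}}, \ldots, \ccc{\termint{M_n}})$, which by the induction hypothesis equals $(M_1, \ldots, M_n)$. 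For application, $\termint{\Gamma \vdash MN} = \term{<?x>.[!x].}\termint{N}\term{.[!x].}\termint{M}\term{.<k>.k}$; here unstacking and the Sequencing Lemma place $\langle \ccc{\termint{N}}, \ccc{\termint{M}} \rangle$ on the stack and turn the trailing $\term{<k>.k}$ into $\epsilon$, so the induction hypothesis gives $\langle N, M \rangle ; \epsilon = MN$. For abstraction, $\termint{\Gamma \vdash \lambda p.M} = \term{<?x>.[[!x].}\termint{\Gamma, p : A \vdash M}\term{]}$; the image under $\ccc{-}$ of the inner thunk is the currying of $\ccc{\termint{\Gamma, p : A \vdash M}}$, which by the induction hypothesis equals $\lambda p.M$, currying being $\lambda$-abstraction over the pattern in $\textup{\textsf{CCC}}(\Sigma)$. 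The Substitution Lemma~\ref{lem:ccc-substitution} is what aligns the variable substitution implicit in the $\ccc{-}$ clauses for $\term{<x>.M}$ and $\term{x.M}$ with $\beta$-substitution on the $\lambda$-side.

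I expect the main obstacle to be bookkeeping rather than anything conceptual: keeping the reversals straight between the FMC's stack order and the $\lambda$-calculus's context and product order, and threading the valuation $v$ through the abstraction and variable clauses so that it stays consistent with the substitutions produced by the unstacking identities. The one genuinely substantive ingredient --- that $\ccc{-}$ validates every equation of the theory (beta, eta, interchange, diagonal, terminal, extensionality) exactly, which is what licenses transporting equalities through $\ccc{-}$ in the argument above --- is precisely the content of the already-proved Well-Definedness Lemma, so nothing further is required there.
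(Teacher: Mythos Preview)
Your proposal is correct and matches the paper's approach: both prove faithfulness by showing $\ccc{-}$ is a left inverse of $\termint{-}$ via induction on the $\lambda$-term's type derivation, unfolding the translations case by case with the Sequencing and Weakening Lemmas. The only minor deviations are that the paper does not actually invoke the Substitution Lemma in this proof (Sequencing and Weakening suffice), and in the abstraction case it makes the $\eta$-expansion step explicit where you appeal to currying; your additional abstract argument via the universal property of the free CCC is sound but not used in the paper.
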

\begin{proof}
We show that $\ccc{\termint{\Gamma \vdash M: A}}_v = \Gamma \vdash M: A$, or equivalently that 
\begin{align*}
	\ccc{\termint{a_1: A_1, \ldots, a_n: A_n \vdash M: B}}_v(a_n\cdots a_1) = M: B\ .
\end{align*}
Note we confuse between the input variables $a_i$ and variables of the context of $M$ in order to avoid proliferation of $\alpha$-conversions. 
We proceed by induction on the type derivation of $\Gamma \vdash M: A$. 
\begin{itemize}
\item The base (variable) case:
\begin{align*}
\ccc{\termint{a_1: A_1, \ldots, a_n: A_n \vdash a_i: A_i}}_v(a_n \cdots a_1) &= \ccc{\term{<a_n>.\ldots.<a_1>.[a_i]}}_v(a_n \cdots a_1) \\
	&= a_i,
\end{align*}
where $\term{a_i}: \termint{A_i}$.
\item The abstraction case:
\begin{align*}
	\ccc{\termint{{c}: \Gamma \vdash \lambda a.M: A \to B}}({c}) 
		&= \ccc{\term{<?x>.[[!x].}\termint{{c}, a\vdash M}\term{]}} ({c}) \\
		&= \ccc{\term{[[!x].}\termint{{c}, a \vdash M}\term{]}}_{\{\term{!x}\leftarrow {c}\}} (\epsilon) \\
		&= \ccc{\term{[!x].}\termint{{c}, a \vdash M}\term{}}_{\{\term{!x}\leftarrow {c}\}} \\
		\textup{$\eta$-expansion} &= \lambda a. \ccc{\term{[!x].}\termint{{c}, a \vdash M\term{}}}_{\{\term{!x}\leftarrow {c}\}}(a) \\
		&= \lambda a.\ccc{\termint{{c}, a \vdash M}\term{}}_{\{\term{!x}\leftarrow {c}\}} (a \cdot {c}) \\
	\textup{I.H.}	&= \lambda a.M
\end{align*}
where $\term{!x:} \termint{\Gamma}$
\item The application case:
\begin{align*}
	\ccc{\termint{{c: \Gamma} \vdash M(N): B}}({c})
		&= \ccc{\term{<?x>.[!x].}\termint{{c} \vdash N}\term{.[!x].}\termint{{c} \vdash M }\term{.<k>.k}}({c}) 
\\		&= \ccc{\termint{{c} \vdash N}\term{.[!x].}\termint{{c} \vdash M}\term{.<k>.k}}_{\{\term{!x} \leftarrow {c}\}}({c})
\\	\textup{Sequencing} &= \ccc{\term{[!x].}\termint{{c}\vdash M}\term{.<k>.k}}_{\{\term{!x} \leftarrow {c}\}}(\ccc{\termint{{c} \vdash N}}_{\{\term{!x} \leftarrow {c}\}} ({c}) )
\\	\textup{Weakening + I.H.} &= \ccc{\term{[!x].}\termint{{c} \vdash M}\term{.<k>.k}}_{\{\term{!x} \leftarrow {c}\}}( N )
\\	\textup{Sequencing} &= \ccc{\termint{{c} \vdash M}\term{.<k>.k}}_{\{\term{!x} \leftarrow {c}\}}( N \cdot c )
\\	\textup{Sequencing} &= \ccc{\term{<k>.k}}_{\{\term{!x} \leftarrow {c}\}}( N , \ccc{\termint{{c}\vdash M}}_{\{\term{!x} \leftarrow {c}\}} (c) )
\\	\textup{Weakening + I.H.} &= \ccc{\term{<k>.k}}( N , M )
\\	\textup{} &= \ccc{\term{k}}_{\{\term k \leftarrow M\}}( N )
\\	\textup{} &= M ( N )
\end{align*}
where $M: A \to B$, $N:A$, $\term{!x:} \termint{\Gamma}$ and $\term{k :} \termint{A \to B}$
\item The pattern case:
\begin{align*}
&\ccc{\termint{c: \Gamma, (p_1, \ldots, p_n): A_1 \times \ldots \times A_n \vdash M: B}}({c} \cdot p_1 \cdots p_n)  \\
&=\ccc{\termint{c, p_1, \ldots, p_n \vdash M}}({c} \cdot p_1 \cdots p_n) 
\\ \textup{I.H.} & = M
\end{align*}
\item The $n$-ary tuple case:
\begin{align*}
	\ccc{\termint{c : \Gamma \vdash (M_1,\ldots,M_n) }}({c}) &= \ccc{\term{<?x>.[!x].}\termint{c \vdash M_1}\term{\ldots[!x].}\termint{c \vdash M_n}} ({c})
\\			&=  \ccc{\term{[!x]}\termint{c \vdash M_1}\term{\ldots[!x].}\termint{c \vdash M_n}}_{\{\term{!x} \leftarrow {c}\}} ({\epsilon})
\\			&=  \ccc{\termint{c \vdash M_1}\term{\ldots[!x].}\termint{c \vdash M_n}}_{\{\term{!x} \leftarrow {c}\}} ({c})
\\	\textup{Sequencing} &=  \ccc{\term{\ldots[!x].}\termint{c \vdash M_n}}_{\{\term{!x} \leftarrow {c}\}} (\ccc{\termint{c \vdash M_1}}_{\{\term{!x} \leftarrow {c}\}}({c}))
\\	\textup{Weakening + IH} &=  \ccc{\term{\ldots[!x].}\termint{c \vdash M_n}}_{\{\term{!x} \leftarrow {c}\}} (M_1)
\\			&=  \ldots
\\			 &=  \ccc{\term{[!x].}\termint{c \vdash M_n}}_{\{\term{!x} \leftarrow {c}\}} (M_1 \cdots M_{n-1} )
\\			 &=  \ccc{\termint{c \vdash M_n}}_{\{\term{!x} \leftarrow {c}\}} (M_1 \cdots M_{n-1} \cdot {c})
\\	\textup{Sequencing}  &=   (M_1 \cdots M_{n-1} \cdot \ccc{\termint{c \vdash M_n}}_{\{\term{!x} \leftarrow {c}\}}({c}))
\\	\textup{Weakening + IH} &=   (M_1, \ldots, M_n)
\end{align*}
where $\term{!x: } \termint{\Gamma}$ and $M_i: A_i$. \qedhere
\end{itemize}
\end{proof}

We now show that terms modulo machine equivalence also form a Cartesian closed category. 
First, we show that machine equivalence is well-defined.

\begin{proposition}
\label{prop:congruence}
Machine equivalence $(\sim)$ is a congruence.
\end{proposition}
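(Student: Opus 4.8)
The plan is to show that machine equivalence is (a) an equivalence relation on closed terms at each type and (b) compatible with every term constructor; congruence then follows by a routine induction on the structure of a context. Throughout we use that, on well-typed closed terms, the machine always halts with $\term*$ (machine termination for the typed FMC, e.g.\ via Theorem~\ref{thm:sn}), so that $\result SM$ is a genuine total function of $S$ and $\term M$. For (a), reflexivity, symmetry and transitivity of $\sim$ at type $\type{?s>!t}$ are proved by induction on the type, lifting the corresponding properties of equality at base types and of the pointwise equivalence on stacks; the extension to open terms quantifies over closing substitutions $W\sim W'$ and inherits the three properties componentwise.

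For (b), the key is a decomposition of machine runs matching the term grammar. A push step gives $(S_A;S_a,\term{[N]a.M})\eval(S_A;S_a{\cdot}\term N,\term M)$; a pop step gives $(S_A;S_a{\cdot}\term N,\term{a<x>.M})\eval(S_A;S_a,\term{\{N/x\}M})$; and, since $\term{P;Q}$ is literally the concatenation of the two instruction sequences, a run of $\term{P;Q}$ from a memory $S$ first performs all of $\term P$'s instructions, reaching $\term Q$ with the memory $\term P$ produced and the extra arguments of $\term Q$ untouched, and then performs $\term Q$'s instructions (a ``machine sequencing lemma'', by a routine induction on $\term P$).

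Given these, compatibility is immediate in each case. \emph{Push:} if $\term N\sim\term N'$ and $\term M\sim\term M'$, then for $S_A\sim S_A'$ the stacks $S_a{\cdot}\term N$ and $S_a'{\cdot}\term N'$ are equivalent (stack equivalence is pointwise), so $\result{S_A;S_a\cdot N}M\sim\result{S_A';S_a'\cdot N'}{M'}$ by equivalence of $\term M,\term M'$; by the push step this is $\result{S_A}{[N]a.M}\sim\result{S_A'}{[N']a.M'}$, and an entirely symmetric argument handles the argument-hole case $\term{[C']a.M}$. \emph{Pop:} if $\term M\sim\term M'$ as open terms with $\term x$ free, then for equivalent inputs with top elements $\term N\sim\term N'$ on location $\term a$, the pop step reduces the goal to $\term{\{N/x\}M}\sim\term{\{N'/x\}M'}$ on the corresponding memories, which is exactly the instance of the open-term definition of $\sim$ for $\term M\sim\term M'$ at the substitution $\term N\sim\term N'$ for $\term x$. \emph{Variable prefix:} if $\term M\sim\term M'$ with $\term x$ possibly free and a closing substitution sends $\term x\mapsto\term P$ with $\term P\sim\term P'$, then by the substitution clause $\term{\{P/x\}x.M}=\term{P;\{P/x\}M}$, so by the machine sequencing lemma its run from $S$ decomposes: the intermediate memories are equivalent since $\term P\sim\term P'$, and the final ones are equivalent since $\term{\{P/x\}M}\sim\term{\{P'/x\}M'}$. \emph{Nil} is immediate from reflexivity.

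Finally, congruence is an induction on a context $\term C$: at the hole, $\term M\sim\term N$ by hypothesis; at each constructor one appeals to the corresponding compatibility statement, where for the binder cases $\term{a<x>.C'}$ the inductive hypothesis supplies precisely the equivalence of \emph{open} terms that pop-compatibility consumes. I expect the main obstacle to be the machine sequencing lemma and the bookkeeping around it: one must be careful that the open-term formulation of $\sim$ interacts correctly with substitution when passing under a binder or through the variable prefix, and that the decomposition of runs of $\term{P;Q}$ is valid precisely because composition is concatenation of instruction sequences, so no reduction occurs ``inside'' $\term P$ during the run. Everything else is unwinding definitions.
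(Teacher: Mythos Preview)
Your overall strategy matches the paper's: establish that $\sim$ is closed under the term constructors (compatibility) and that it is an equivalence relation. The compatibility arguments you sketch for push, pop, variable prefix, and nil are essentially those of the paper, and your ``machine sequencing lemma'' is what the paper uses under the name ``composition of machine runs''.

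There is, however, a genuine gap in part (a). You claim that reflexivity of $\sim$ is proved ``by induction on the type, lifting the corresponding properties of equality at base types''. This does not work. At a higher type $\type{?s>!t}$, reflexivity $\term{M\sim M}$ unfolds to: for all $S\sim S'$ (possibly \emph{distinct} stacks), $\result SM \sim \result{S'}M$. This is the statement that $\term M$ carries equivalent inputs to equivalent outputs, i.e.\ the fundamental lemma of a logical relation, and it cannot be obtained by induction on the type alone---it needs induction on the \emph{term} $\term M$. (The same problem contaminates your transitivity argument, which silently needs $S\sim S$ to produce an intermediate stack.)

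The paper avoids this by reversing your order: it proves compatibility first, in the two-sided form ``if $\term{N\sim N'}$ and $\term{M\sim M'}$ then $\term{[N].M\sim[N'].M'}$'' (and similarly for the other constructors), which requires no appeal to reflexivity; reflexivity then \emph{follows} by a straightforward induction on the term, applying each compatibility clause with both sides equal. Transitivity by type induction is only invoked after reflexivity is in hand. Since you already prove compatibility in this two-sided form, the fix is just to reorder your argument and drop the spurious type-induction claim for reflexivity.
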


\begin{proof}
It is shown that $(\sim)$ is closed under contexts. Reflexivity then follows by induction on the term, symmetry is immediate, and transitivity follows by induction on types.
\begin{itemize}

	\item Unit case.
To show is that $\term{* \sim *:?t>!t}$. This requires that
\[
	\forall\, T\sim T':\type{!t}.~T\sim T':\type{!t}
\]
which is tautologous.

	\item Variable case.
To show is that if $\term{M\sim M':?s?t>!u}$ then $\term{x.M \sim x.M':?r?t>!u}$. Let both be typed in the context $\Gamma=\term{!w:!w}$, and assume this includes $\term{x:?r>!s}$. Let the following stacks be equivalent.
\[
	R\sim R':\type{!r} \qquad
	T\sim T':\type{!t} \qquad
	W\sim W':\type{!w}
\]
It must be shown that $U\sim U':\type{!u}$ for the following stacks $U$ and $U'$.
\[
\begin{aligned}
	U &=\result{TR}{\{W/!w\}x.M}
\\  U'&=\result{T'R'}{\{W'/!w\}x.M'}
\end{aligned}
\]
Let $W$ and $W'$ respectively contain the terms $\term N$ and $\term{N'}$ corresponding to $\term{x}$ in $\term{!w}$. Then $\term{\{W/!w\}x.M}=\term{N.\{W/!w\}M}$ and $\term{\{W/!w\}x.M\}}=\term{N'.\{W/!w\}M'}$. Let $\result RN=S$ and $\result{R'}{N'}=S'$, for which the assumption $\term{N\sim N'}$ gives $S\sim S':\type{!s}$. Then
\[
\begin{array}{l@{}c@{}r@{}c@{}r}
	U &~=~& \result{TR}{N.\{W/!w\}M} &~=~& \result{TS}{\{W/!w\}M}
\\
	U' &~=~& \result{T'R'}{N'.\{W'/!w\}M'} &~=~& \result{T'S'}{\{W'/!w\}M'}
\end{array}
\]
by composition of machine runs. By the premise $\term{M\sim M'}$ then $U\sim U':\type{!u}$ as required.

	\item Application case.
To show is that given $\term{N\sim N':r}$ and $\term{M\sim M':r?s>!t}$ then \\$\term{[N].M\sim[N'].M':?s>!t}$. Let all four terms be typed in the context $\Gamma=\term{!w:!w}$ and let the following stacks be equivalent.
\[
	S\sim S':\type{!s} \qquad
	W\sim W':\type{!w}
\]
It must be shown that $T\sim T':\type{!t}$ for the following stacks $T$ and $T'$.
\[
\begin{aligned}
	T &~=~\result{S}{\{W/!w\}[N].M}
\\	T'&~=~\result{S'}{\{W'/!w\}[N'].M'}
\end{aligned}
\]
For $T$, by the definition of substitution and by a single \emph{push} step of the machine,
\[
\begin{aligned}
	T &~=~\result{S}{[\{W/!w\}N].\{W/!w\}M}
\\	  &~=~\result{S\cdot\term{\{W/!w\}N}}{\{W/!w\}M}
\end{aligned}
\]
and similarly $T'=\result{S'\cdot\term{\{W'/!w\}N'}}{\{W'/!w\}M'}$. Since $\term{N\sim N'}$ then also \\$\term{\{W/!w\}N\sim \{W'/!w\}N'}$. By the premise $\term{M\sim M'}$ it follows that $T\sim T':\type{!t}$.

	\item Abstraction case.
To show is that if $\term{M\sim M':?s>!t}$ then it follows that \\$\term{<x>.M\sim<x>.M':r?s>!t}$. Let $\term{<x>.M}$ and $\term{<x>.M'}$ be typed in the context $\Gamma=\term{!w:!w}$ (not including $\term x$), and assume the following equivalent stacks and terms.
\[
	\term{N\sim N':r}\qquad
	S\sim S':\type{!s}\qquad
	W\sim W':\type{!w}
\]
It must be shown that $T\sim T':\type{!t}$ for the following stacks $T$ and $T'$.
\[
\begin{aligned}
	T &~=~\result{S\cdot\term{N}}{\{W/!w\}<x>.M}
\\	T'&~=~\result{S'\cdot\term{N'}}{\{W'/!w\}<x>.M'}
\end{aligned}
\]
By a single \emph{pop} step of the machine,
\[
\begin{aligned}
	T &~=~\result{S}{\{N/x,W/!w\}M}
\\	T'&~=~\result{S'}{\{N'/x,W'/!w\}M'}
\end{aligned}
\]
so that $\term{M\sim M'}$ gives $T\sim T':\type{!t}$. \qedhere
\end{itemize}
\end{proof}

\begin{proposition}
\label{prop:reduction-expansion-equivalence}
Machine equivalence includes $\beta\eta$-equivalence: $(=_{\beta\eta})\subseteq(\sim)$.
\end{proposition}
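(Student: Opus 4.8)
The plan is to reduce the statement to the two generating rules by means of Proposition~\ref{prop:congruence}, which establishes that $\sim$ is an equivalence relation closed under all term contexts. Since $=_{\beta\eta}$ is, by definition, the least relation with exactly these closure properties that contains the $\beta$- and $\eta$-rules, it suffices to show that those rules relate $\sim$-equivalent terms. Working in the sequential $\lambda$-calculus, as throughout this section, the side condition $\term a\notin\loc H$ of Definition~\ref{defn:laws} forces the head context to be empty, so the two instances to verify are $\term{[N].<x>.M}\sim\term{\{N/x\}M}$ and $\term{<x>.[x].M}\sim\term{M}$ with $\term x\notin\fv{\term M}$.

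Next I would pass from open to closed terms. Machine equivalence on open terms is defined by closing substitutions, so for the $\beta$-rule the goal becomes $\term{\{W/!w\}([N].<x>.M)}\sim\term{\{W'/!w\}(\{N/x\}M)}$ for all $W\sim W'$. Applying a closing substitution to the redex again produces a $\beta$-redex whose contractum is $\term{\{W/!w\}(\{N/x\}M)}$, by the usual substitution lemma; hence, once the closed case is settled, reflexivity of $\sim$ on the contractum (as an open term) together with transitivity closes the gap. The $\eta$-rule is treated identically. From here all terms are closed and well typed, and by the machine-termination theorem~\cite[Theorem 3.12]{Heijltjes-2022} every run from a well-typed input terminates at $\term*$, so it suffices to show that the two sides yield the \emph{same} output on each input.

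For $\beta$, running $\term{[N].<x>.M}$ from an input stack $S$ first pushes $\term N$, reaching $(S\cdot\term N,\term{<x>.M})$, and then pops, reaching $(S,\term{\{N/x\}M})$ --- precisely the state from which $\term{\{N/x\}M}$ is run --- so the two runs coincide and produce the same output; with reflexivity of $\sim$ this yields $\term{[N].<x>.M}\sim\term{\{N/x\}M}$. For $\eta$, the type of $\term{<x>.[x].M}$ guarantees that the input stack has a top entry $\term P$; running $\term{<x>.[x].M}$ pops it, reaching $(S',\term{[P].\{P/x\}M})=(S',\term{[P].M})$ since $\term x\notin\fv{\term M}$, then pushes $\term P$ back, restoring $(S,\term M)$, from which $\term M$ is run; again both runs coincide, and with reflexivity of $\sim$ we obtain $\term{<x>.[x].M}\sim\term{M}$. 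Closure under contexts and transitivity of $\sim$ (Proposition~\ref{prop:congruence}) then lift these two facts to all of $=_{\beta\eta}$.

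I do not expect a deep obstacle. The points that need care are the passage from open instances to closed ones --- arranging that the substitution lemma together with the reflexivity and transitivity of $\sim$ combine correctly, and that the closed instances remain well typed --- and, in the full FMC rather than the sequential fragment, handling a redex $\term{[N]a.H.a<x>.M}$ with a nontrivial head context $\term H$. The latter would go through a preliminary lemma: executing $\term H$ decomposes a machine run into a memory update together with an accumulated substitution applied to the continuation, this decomposition is independent of the contents of location $\term a$ (by $\term a\notin\loc H$), and it commutes with $\term{\{N/x\}}$ (by $\bv{\term H}\cap\fv{\term N}=\varnothing$); granted this, the traces above go through with $\term H$'s actions inserted before the final pop.
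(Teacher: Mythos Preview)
Your proposal is correct and follows essentially the same approach as the paper: reduce to the two generating rules via the congruence property of $\sim$ (Proposition~\ref{prop:congruence}), then for each rule run the machine for two steps to see that the redex reaches the state from which the contractum is evaluated, and conclude by reflexivity of $\sim$. The paper's proof is slightly more direct in that it unfolds the open-term definition of $\sim$ immediately (taking $W\sim W'$ and $S\sim S'$) rather than first reducing to a closed instance and then invoking reflexivity and transitivity to bridge $W$ and $W'$, but the substance is identical; your explicit appeal to machine termination is implicit in the paper's use of $\result{\,\cdot\,}{\,\cdot\,}$.
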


\begin{proof}
For beta, let $\term{[N].<x>.M}\rw\term{\{N/x\}M}$, and let both terms be typed in context $\term{!w:!w}$ with type $\type{?s>!t}$. Let $S\sim S':\type{!s}$ and $W\sim W':\type{!w}$. It must be shown that $T\sim T':\type{!t}$ for $T$ and $T'$ as follows.
\[
\begin{aligned}
	T &~=~\result{S}{\{W/!w\}[N].<x>.M}
\\	T'&~=~\result{S'}{\{W'/!w\}\{N/x\}M}
\end{aligned}
\]
Two steps of the machine on $\term{[N].<x>.M}$ evaluate the redex, to give the following.
\[
	T~=~\result{S}{\{W/!w\}\{N/x\}M}
\]
Then by reflexivity, $\term{M\sim M}$ and $\term{N\sim N}$, it follows that $T\sim T':\type{!t}$. Since $\sim$ is a congruence, if $\term M\rw\term N$ then $\term{M\sim N}$.

For expansion, let $(\term{M:r?s>!t})=_\eta(\term{<x>.[x].M:r?s>!t})$ where $x$ is not free in $\term M$, and let both terms be typed in the context $\term{!w:!w}$. Let $\term{N\sim N':r}$, $S\sim S':\type{?s}$, and $W\sim W':\type{!w}$. It must be shown that $T\sim T':\type{!t}$ for $T$ and $T'$ as follows.
\[
\begin{aligned}
	T &~=~\result{S\cdot\term{N}}{\{W/!w\}M}
\\	T'&~=~\result{S'\cdot\term{N'}}{\{W'/!w\}<x>.[x].M}
\end{aligned}
\]
Two steps of the machine on $\term{<x>.[x].M}$ pop then push back $\term{N'}$ on the stack, which gives the following.
\[
	T'~=~\result{S'\cdot\term{N'}}{\{W'/!w\}M}
\]
Then by reflexivity, $\term{M\sim M}$, it follows that $T\sim T':\type{!t}$. As with reduction, since $\sim$ is a congruence, if $\term M=_\eta\term N$ then $\term{M\sim N}$.
\end{proof}
We next establish that machine equivalence contains the equational theory of the FMC.
\begin{theorem}[Theorem \ref{thm:ccc} restatement]
For all typed, closed FMC terms $\term{M:t}$ and $\term{N: t}$, we have that
\begin{align*}
	\term{M} =_{\textsf{eqn}} \term{N}\ \qquad \textup{implies} \qquad  \term{M} \sim \term{N: t}. 
\end{align*}
\end{theorem}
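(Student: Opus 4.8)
The goal is to show that $=_{\textsf{eqn}}$-equivalence implies machine equivalence $\sim$. Since machine equivalence is already known to be a congruence (Proposition \ref{prop:congruence}) and to contain $\beta\eta$-equivalence (Proposition \ref{prop:reduction-expansion-equivalence}), the plan is simply to check that each of the remaining generating equations of the equational theory — \emph{Interchange}, \emph{Diagonal}, \emph{Terminal}, and the two \emph{Eta} laws — is validated by $\sim$. Because $\sim$ is a congruence and closed under all contexts, it suffices to verify each equation for the bare instance schema given in the definition of $=_{\textsf{eqn}}$, after closing any open terms by abstracting their free variables.

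The core tool throughout is composition of machine runs: if $(S,\term M)\eval(S',\term*)$ and $(T\cdot S', \term N)\eval(U,\term*)$ then $(T\cdot S, \term{M.N})\eval(U,\term*)$, together with the single-step \emph{push} and \emph{pop} transitions. For \emph{Interchange}, $\term{<?x>.N.[!x].M} \sim \term{M.<?y>.N.[!y]}$ at type $\type{?s?r > !u!t}$: running the left-hand term on a stack $T\cdot S$ (where $S:\type{!s}$ feeds the arguments $\term{?x}$) pops $S$ into the variables $\term{!x}$, runs $\term N$ on the remaining $T$ to produce some $U':\type{!t}$ (using that $\term N$ cannot touch $S$ since $\term{!x}$ are now free variables substituted by the popped values), then pushes $\term{!x}$ back and runs $\term M$; the right-hand term first runs $\term M$ on $T$ and then the same sequence, yielding the same output up to the permutation of $S$-values and $\term M$'s output recorded in the type. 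The key observation, exactly as in the Interchange lemma for the $\ccc{-}$ translation (Lemma Well-Definedness), is that $\term M$ operates on $\type{?r}$ and $\term N$ on $\type{?s}$ independently, so the order is immaterial. The \emph{Terminal} case is immediate: both $\term{M.<?y>}$ and $\term{<?x>}$ run to the empty stack on any input, since $\term M$'s output is entirely popped away, so any inputs give the equivalent (empty) output. The \emph{Eta} laws are instances already covered: the first-order $\term* =_\eta \term{<a>.[a]}$ is a pop-then-push which is subsumed by Proposition \ref{prop:reduction-expansion-equivalence} (expansion), and the higher-order law $\term P =_\epsilon \term{<?x>.[[!x].P.<z>.z]}$ unfolds, by two steps of the machine on the right-hand side followed by $\beta$, to $\term P$ up to $\beta\eta$, hence is in $\sim$ by Propositions \ref{prop:congruence} and \ref{prop:reduction-expansion-equivalence}.

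The subtle case, and the one I expect to be the main obstacle, is \emph{Diagonal}: $\term{M.<?y>.[!y].[!y]} \sim \term{<?x>.[!x].M.[!x].M}$ at type $\type{?s > !t!t}$. The left-hand term runs $\term M$ once on the input $S:\type{!s}$, obtaining an output $U:\type{!t}$, and then duplicates $U$ on the stack; the right-hand term saves the input $S$ as variables $\term{!x}$, runs $\term M$ on a fresh copy of $S$, then runs $\term M$ again on another fresh copy of $S$, producing $U\cdot U$. For these to be machine-equivalent one needs that running $\term M$ twice on \emph{the same closed input stack} produces \emph{equivalent} outputs — i.e. that $\term M$, as a closed term, is deterministic up to $\sim$. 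This is where I would appeal to the congruence and reflexivity of $\sim$: running $\term M$ on $S$ and running $\term M$ on $S$ (literally the same stack of closed terms) gives literally the same machine run, hence the same output, so $U \sim U$ by reflexivity, and then $U\cdot U \sim U\cdot U$. The only care needed is the bookkeeping of which copies of the arguments are consumed where, and checking that the abstractions $\term{<?x>}$ on the right do not capture in $\term M$ (guaranteed by the side condition in the definition of $=_{\textsf{eqn}}$). Once the argument-threading is set up carefully — exactly mirroring the corresponding Diagonal computation in the Well-Definedness lemma for $\ccc{-}$ — the verification is a routine chain of machine-run compositions.

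Overall, the proof is a case analysis over the six generating equations, reducing Interchange, Diagonal and Terminal to explicit run-composition arguments and dispatching the three Beta/Eta cases via the already-established Propositions \ref{prop:congruence} and \ref{prop:reduction-expansion-equivalence}. Closure under contexts and the equivalence-relation properties then lift these facts from the generators to all of $=_{\textsf{eqn}}$, since $\sim$ is a congruence.
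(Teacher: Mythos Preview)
Your overall strategy is exactly the paper's: verify each generator of $=_{\textsf{eqn}}$ by tracing machine runs on both sides and invoking reflexivity of $\sim$, then appeal to Proposition~\ref{prop:congruence} to close under contexts. The treatments of Interchange, Diagonal, and Terminal are correct in outline (modulo some harmless typos in the Interchange bookkeeping), and Beta plus first-order Eta are indeed dispatched by Proposition~\ref{prop:reduction-expansion-equivalence}.

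There is, however, a real gap in your handling of the higher-order Eta law
\[
\term{P} ~=_\epsilon~ \term{<?x>.[[!x].P.<z>.z]} : \type{?r > (?s > !t)}.
\]
You claim the right-hand side ``unfolds, by two steps of the machine followed by $\beta$, to $\term P$ up to $\beta\eta$'', but this is not true: running the right-hand side on an input stack $S$ terminates with the single term $\term{[!S].P.<z>.z}$ on the stack, whereas running $\term{P}$ on $S$ terminates with some term $\term{Q}$ (the output of $\term P$). These two terms are \emph{not} $\beta\eta$-equal in general --- there is no redex in $\term{[!S].P.<z>.z}$, and Proposition~\ref{prop:reduction-expansion-equivalence} gives you nothing here. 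What you must show is that they are $\sim$-equivalent \emph{at the inner function type} $\type{?s>!t}$, which requires a second layer of machine-run tracing: for every $R\sim R'$, run $\term{[!S].P.<z>.z}$ on $R$ (push $S$, run $\term P$ to produce $\term{Q'}$, pop it as $\term z$, run $\term z$) and compare with running $\term{Q}$ on $R'$, using reflexivity of $\sim$ on $\term P$ to get $\term{Q}\sim\term{Q'}$. This two-level unfolding is precisely what the paper does, and it is not reducible to the $\beta\eta$ case.
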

\begin{proof}
We verify for each equation generating $=_\textsf{eqn}$, which suffices since by Proposition \ref{prop:congruence}, machine equivalence $\sim$ is closed under all contexts. In the following, we write $\term{!S}$ to represent the vector of terms in a stack $S$, and $\term{[!S]}$ to represent a corresponding sequence of applications (similar to the notation for $\term{[!x]}$). Beta and (first-order) eta-laws are given by the previous proposition.
\begin{itemize}
\item Interchange:
\[
	\term{M.<?x>.N.[!x]} =_\iota \term{<?y>.N.[!y].M} : \type{?r?t > !u!s}\ ,
\] 
 for $\term{M: ?r> !s}$, $\term{N: ?t > !u}$, $\term{!x}: \type{!s}$ and $\term{!y} : \type{r}$. We have that for all stacks $TR \sim T'R' : \type{!t!r}$,
\[
\begin{array}{@{}l@{~,}r@{}}
(~TR~& ~\term{M.<?x>.N.[!x]}~)
\\\hline
(~TS~& ~\term{<?x>.N.[!x]}~)
\\\hline
(~T~& ~\term{N.[!S]}~)
\\\hline
(~U~& ~\term{[!S]}~)
\\\hline
(~US~& ~\term{*}~)
\end{array}
\qquad
\textup{and}
\qquad
\begin{array}{@{}l@{~,}r@{}}
(~T'R'~& ~\term{<?y>.N.[!y].M}~)
\\\hline
(~T'~& ~\term{N.[!R'].M}~)
\\\hline
(~U'~& ~\term{[!R'].M}~)
\\\hline
(~U'R'~& ~\term{M}~)
\\\hline
(~U'S'~& ~\term{*}~)
\end{array}
\]
where $(R,\term M)\Downarrow S$, $(T, \term N)\Downarrow U$ and $(R', \term M)\Downarrow S'$, $(T', \term N)\Downarrow U'$. By reflexivity of $(\sim)$ we have that $S \sim S'$ and $U \sim U'$, and the result follows. 
\item Diagonal: 	
\[
	\term{M.<?y>.[!y].[!y]} =_\Delta \term{<?x>.[!x].M.[!x].M: ?s > !t!t},
\]
where $\term{M: ?s > !t}$, $\term{!x: !s}$ and $\term{!y: !t}$. We have that for all stacks $S \sim S': \type{!s}$, \[
\begin{array}{@{}l@{~,}r@{}}
(~S~& ~\term{M.<?y>.[!y].[!y]}~)
\\\hline
(~T~& ~\term{<?y>.[!y].[!y]}~)
\\\hline
(~\epsilon~& ~\term{[!T].[!T]}~)
\\\hline
(~T~& ~\term{[!T]}~)
\\\hline
(~TT~& ~\term{*}~)
\end{array}
\qquad
\textup{and}
\qquad
\begin{array}{@{}l@{~,}r@{}}
(~S'~& ~\term{<?x>.[!x].M.[!x].M}~)
\\\hline
(~\epsilon~& ~\term{[!S'].M.[!S'].M}~)
\\\hline
(~S'~& ~\term{M.[!S'].M}~)
\\\hline
(~T'~& ~\term{[!S'].M}~)
\\\hline
(~T'S'~& ~\term{M}~)
\\\hline
(~T'T'~& ~\term{*}~)
\end{array}
\]
where $(S, \term{M}) \Downarrow T$ and $(S', \term{M}) \Downarrow T'$. By reflexivity of $\sim$, we have that $T \sim T': \type{!t}$ and the result follows. 
\item Terminal:
\[
	\term{M.<?y>} =_! \term{<?x>: ?t > }
\]
where $\term{M: ?s > !t}$, $\term{!x:!t}$ and $\term{!y:!s}$. We have that for all stacks $T \sim T': \type{!s}$,
\[
\begin{array}{@{}l@{~,}r@{}}
(~T'~& ~\term{M.<?y>}~)
\\\hline
(~S~& ~\term{<?y>}~)
\\\hline
(~\epsilon~& ~\term{*}~)
\end{array}
\qquad
\textup{and}
\qquad
\begin{array}{@{}l@{~,}r@{}}
(~T'~& ~\term{<?x>}~)
\\\hline
(~\epsilon~& ~\term{*}~)
\end{array}
\]
where $(S, \term M) \Downarrow T$. The empty stack is trivially related to itself by $\sim$, and the result follows.
\item
Eta (Higher-order): 
\[
	\term{N} =_\epsilon \term{<?y>.[[!y].N.<x>.x]} : \type{?s > (?r > !t)} \ , 
\]
where $\term{N: ?s > (?r > !t)}$, $\term{x : ?r > !t}$ and $\term{!y : !s}$ .
We have that for all stacks $S \sim S': \type{!s}$,
\[
\begin{array}{@{}l@{~,}r@{}}
(~S~& ~\term{<?y>.[[!y].N.<x>.x]}~)
\\\hline
(~\epsilon~& ~\term{[[!S].N.<x>.x]}~)
\\\hline
(~\term{[!S].N.<x>.x}~& ~\term{*}~)
\end{array} \qquad
\textup{and}
\qquad
\begin{array}{@{}l@{~,}r@{}}
(~S'~& ~\term{N}~)
\\\hline
(~\term{P}~& ~\term{*}~)
\end{array}\, ,
\]
where $(S', \term{N}) \Downarrow \term{P}$. 
We thus require to show that $\term{[S].N.<x>.x} \sim \term{P} : \type{?r > !t}$. To verify this, observe that for all stacks $R \sim R' : \type{!r}$, 
\[
\begin{array}{@{}l@{~,}r@{}}
(~R~& ~\term{[!S].N.<x>.x}~)
\\\hline
(~R{S}~& ~\term{N.<x>.x}~)
\\\hline
(~R\cdot \term{P}~& ~\term{<x>.x}~)
\\\hline
(~R~& ~\term{P}~)
\\\hline
(~T~& ~\term{*}~)
\end{array}\qquad
\textup{and}
\qquad
\begin{array}{@{}l@{~,}r@{}}
(~R'~& ~\term{P}~)
\\\hline
(~{T'}~& ~\term{*}~)
\end{array}\, ,
\]
where $(S, \term N) \Downarrow \term{P'} : \type{?r > !t}$, $(R, \term{P}) \Downarrow T$ and $(R', \term{P}) \Downarrow T'$. By reflexivity, we have that $\term{P} \sim \term{P'}: \type{?r > !t}$ and thus $T \sim T': \type{!t}$, and the result follows.  \qedhere
\end{itemize} 
\end{proof}

\end{document}


References:

\cite{Church-1941}
\cite{Landin-1964} 
\cite{Landin-1965} 
\cite{Tait-1967} 
\cite{Plotkin-1975} 
\cite{Barendregt-1984}
\cite{Moggi-1991} 
\cite{Milner-Parrow-Walker-1992} 
\cite{Milner-1992} 
\cite{Parigot-1992} 
\cite{DeBruijn-1993}
\cite{Huet-1994} 
\cite{Hasegawa-1995} 
\cite{Takahashi-1995} 
\cite{Benton-Wadler-1996}
\cite{Filinski-1996} 
\cite{Power-Robinson-1997}
\cite{Pitts-Stark-1998} 
\cite{Streicher-Reus-1998} 
\cite{Maraist-Odersky-Turner-Wadler-1999} 
\cite{Power-Thielecke-1999}
\cite{Hughes-2000}
\cite{vonThun-2001} 
\cite{Plotkin-Power-2002-FOSSACS} 
\cite{Levy-2003} 
\cite{Levy-2006}
\cite{Lynas-Stoddart-2006} 
\cite{Krivine-2007} 
\cite{Plotkin-Pretnar-2009}
\cite{Lindley-Wadler-Yallop-2010} 
\cite{Pestov-Ehrenberg-Groff-2010} 
\cite{Atkey-2011} 
\cite{Ahman-Staton-2013-ENTCS} 
\cite{Egger-Mogelberg-Simpson-2014}
\cite{Ehrhard-Guerrieri-2016} 
\cite{Guerrieri-Manzonetto-2018} 
\cite{DalLago-Guerrieri-Heijltjes-2020}
\cite{Hirschkoff-Prebet-Sangiorgi-2020} 
